%
\documentclass[a4paper]{article}

\usepackage{amsmath}
\usepackage{amssymb}
\usepackage{mathtools}
\usepackage{amsthm}
\usepackage{physics}
\usepackage{subfig}
\usepackage{cite}
\usepackage{braket}
\usepackage{xcolor}
\usepackage{booktabs}
\usepackage{tikz,tikz-3dplot}
\usepackage{pgfplots}
\usetikzlibrary{arrows}
\usetikzlibrary{patterns}

\newcommand{\half}{\frac{1}{2}}

\newcommand{\N}{\mathbb{N}}
\newcommand{\R}{\mathbb{R}}

\newcommand{\C}{\mathbb{C}}
\newcommand{\Q}{\mathbb{Q}}
\newcommand{\ud}{\mathrm{d}}

\DeclareMathOperator{\diag}{diag}

\newcommand{\BigO}[1]{\mathcal{O}\left( #1 \right)}
\newcommand{\xv}{\mathbf{x}}
\newcommand{\xvd}{\mathbf{\dot{x}}}
\newcommand{\xvdd}{\mathbf{\ddot{x}}}
\renewcommand{\vec}{\mathbf}

\theoremstyle{plain}
\newtheorem{theorem}{Theorem}

\newtheorem{corollary}[theorem]{Corollary}
\newtheorem{lemma}[theorem]{Lemma}
\theoremstyle{definition}

\theoremstyle{remark}
\newtheorem{remark}{Remark}

\renewcommand{\imath}{\mathrm{i}}
\renewcommand{\epsilon}{\varepsilon}
\renewcommand{\phi}{\varphi}

\allowdisplaybreaks
\makeatletter 
\renewcommand{\pdv}[2]{\begingroup 
\@tempswafalse\toks@={}\count@=\z@ 
\@for\next:=#2\do 
{\expandafter\check@var\next\@nil
 \advance\count@\der@exp 
 \if@tempswa 
   \toks@=\expandafter{\the\toks@\,}%
 \else 
   \@tempswatrue 
 \fi 
 \toks@=\expandafter{\the\expandafter\toks@\expandafter\partial\der@var}}%
\frac{\partial\ifnum\count@=\@ne\else^{\number\count@}\fi#1}{\the\toks@}%
\endgroup} 
\def\check@var{\@ifstar{\mult@var}{\one@var}} 
\def\mult@var#1#2\@nil{\def\der@var{#2^{#1}}\def\der@exp{#1}} 
\def\one@var#1\@nil{\def\der@var{#1}\chardef\der@exp\@ne} 
\makeatother
\parskip=0.5ex
\oddsidemargin= 0.35cm
\evensidemargin= 0.35cm

\parindent=1.5em
\textheight=22.0cm
\textwidth=15cm
\topmargin=-0.5cm
\title{A multiple scales approach to maximal superintegrability}

\author{G. Gubbiotti\textsuperscript{1,}\footnote{e-mail: \texttt{giorgio.gubbiotti@sydney.edu.au}} 
\and D. Latini\textsuperscript{2,}\footnote{e-mail: \texttt{latini@fis.uniroma3.it}}}

\date{\textsuperscript{1}School of Mathematics and Statistics, 
The University of Sydney, New South Wales 2006, Australia
\\
    \textsuperscript{2}Department of Mathematics and Physics, 
Roma Tre University, Via della Vasca Navale 84, I-00146 Rome, Italy}

\begin{document}

\maketitle

\begin{abstract}
    In this paper 
    we present a simple, algorithmic
    test to establish if a Hamiltonian system is maximally 
    superintegrable or not.
    This test is based on a very simple corollary of
    a theorem due to Nekhoroshev and on a perturbative
    technique called multiple scales method.
    If the outcome is positive, this test can be used
    to suggest maximal superintegrability, whereas when
    the outcome is negative it can be used to disprove it.
    This method can be regarded as a finite dimensional analog
    of the multiple scales method as a way to produce 
    soliton equations.
    We use this technique to show that the real
    counterpart of a mechanical system found by Jules Drach 
    in 1935 is, in general, not maximally superintegrable.
    We give some hints on how this approach could be applied
    to classify maximally superintegrable systems  by presenting a 
    direct proof of the well-known Bertrand's theorem.
\end{abstract}

\section{Introduction}

The notion of integrability in Classical Mechanics is well established 
since the times of Liouville \cite{Liouville1855} and, roughly speaking, means 
the existence of a ``sufficiently'' high number of \emph{integrals of motion}. 
To be more precise assume we are given a Hamiltonian
system with Hamiltonian $H=H\left( \vec{p},\vec{q} \right)$ with $n$ degrees
of freedom, i.e. $\vec{q}=\left( q_{1},\dots,q_{n} \right)$
and $\vec{p}=\left( p_{1},\dots,p_{n} \right)$.
We say that the mechanical system defined by the Hamiltonian
$H$ is integrable if there
exist $n$ \emph{integrals of motion}, i.e. $n$ functions $H_{i}$
$i=1,\dots,n$ which Poisson-commute with the Hamiltonian:
\begin{equation}
\left\{ H_{i},H \right\}=0.
\label{eq:poissoncomm}
\end{equation}
If an integral of motion is polynomial in $\vec{p}$, then its
total degree as polynomial in $\vec{p}$ is called the \emph{order}
of the integral of motion.
The Hamiltonian, which trivially commutes with itself, is included in
the list as $H_{1}=H$. 
These integrals of motion must be well defined functions
on the phase space, i.e. \emph{analytic} and \emph{single-valued}. 
Moreover, they have to be \emph{in involution}:
\begin{equation}
\left\{ H_{i},H_{j} \right\}=0, \quad i,j \in\{1,\dots,n\}.
\label{eq:involution}
\end{equation}
Finally, they must be functionally independent:
\begin{equation}
    \rank \frac{\partial (H_{1},\dots,H_{n})}{\partial(\vec{p},\vec{q})} = n.
\label{eq:rankcond}
\end{equation}
Indeed, the knowledge of these kinds of integrals permits the
integration of the equations of motion associated to $H$.
This is the content of the famous \emph{Liouville theorem} \cite{Liouville1855,Whittaker}.
We remark that the application of Liouville's theorem yields
other $n-1$ quantities that are Poisson-commuting with the Hamiltonian.
Usually, these quantities are not integrals of motion in the
sense of our definition since they are not well defined functions on the phase space. Integrability in Classical Mechanics implies that the motion is constrained on a subspace of the full phase space.
With some additional  assumptions on the geometric structure of the integrals of motion, it is possible to
prove that the motion is quasi-periodic on some tori in the phase space \cite{Arnold1967,Arnold1997}.

When there exist more than $n$, say $n+k$ with $1\leq k\leq n-1$,  independent integrals of motion, 
we say that the system is \emph{superintegrable}.
When $k=1$ the system is \emph{minimally superintegrable},
whereas when $k=n-1$ it is \emph{maximally superintegrable}. 
 In the case $n=2$ the two notions coincide.
The search for superintegrable systems started more than a half
century ago with the seminal paper \cite{Fris1965}, but the name
``superintegrability'' has been introduced only in \cite{Rauch1983}
about the Calogero-Moser system 
\cite{Calogero1971,Calogero1971erratum,Moser1975}.
However, we remark that the definition of superintegrability introduced
in \cite{Rauch1983} is in fact the definition of maximal superintegrability 
given above.
For a full historical and a state-of-art perspective on superintegrability
we refer to the review \cite{MillerPostWinternitz2013R} and references therein.

From the algebraic point of view the structure of superintegrable
systems is richer than that of integrable systems because
the additional integrals will not be in involution with the previous ones.
This gives rise to many interesting non-abelian algebraic structures.
Usually, they are finitely generated polynomial algebras, 
only exceptionally finite dimensional Lie algebras or Kac-Moody algebras
\cite{DabulSlodowyDabul1993}.
For this reason superintegrability is also often called non-abelian 
integrability.
From the geometrical point of view
superintegrability restricts trajectories to an $n - k$ dimensional subspace of the phase space.
This implies that the following Theorem holds true:
\begin{theorem}[Nekhoroshev \cite{Nekhoroshev1972}\footnote{This theorem
    is a particular case of Theorem 3 in \cite{Nekhoroshev1972}, which is sufficient for our discussion.}]
Let us consider an Hamiltonian system with Hamiltonian
$H = H\left( \vec{p},\vec{q} \right)$.
If such system is maximally superintegrable
then every bounded orbit is closed and periodic.
\label{th:orbits}
\end{theorem}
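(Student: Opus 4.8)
The plan is to exploit the geometric content of maximal superintegrability. On a phase space of dimension $2n$, maximal superintegrability ($k=n-1$ in the notation above) provides the maximal number $n+(n-1)=2n-1$ of functionally independent integrals of motion $F_1=H,F_2,\dots,F_{2n-1}$, all Poisson-commuting with $H$. First I would fix a regular value $\vec{c}=(c_1,\dots,c_{2n-1})$ and consider the common level set $\Sigma_{\vec{c}}=\{\,F_i=c_i\,\}$. By the independence hypothesis the Jacobian of $(F_1,\dots,F_{2n-1})$ has rank $2n-1$ at regular points, so the implicit function theorem makes $\Sigma_{\vec{c}}$ an embedded one-dimensional submanifold of the phase space there.

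Next I would note that each $F_i$ is constant along the Hamiltonian flow, since $X_H F_i=\{F_i,H\}=0$. Hence $X_H$ is everywhere tangent to $\Sigma_{\vec{c}}$, and any orbit issuing from a point of $\Sigma_{\vec{c}}$ remains inside it; in particular it lies in the connected component $\Sigma^0$ of $\Sigma_{\vec{c}}$ through its initial point. If $X_H$ vanishes at that point the orbit is a single equilibrium, trivially closed and periodic, so I may assume $X_H\neq0$ along the orbit.

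The core of the argument is the classification of connected one-dimensional manifolds: $\Sigma^0$ is diffeomorphic either to $\R$ or to the circle $S^1$. On $S^1$ a nowhere-vanishing vector field generates a periodic flow, whose period is the finite integral of the reciprocal speed around the circle, so the orbit closes up. It then remains to exclude $\R$ for a bounded orbit. Here I would use that $\Sigma_{\vec{c}}$, being the preimage of a point under the continuous (indeed analytic) map $\vec{F}=(F_1,\dots,F_{2n-1})$, is closed in phase space; hence its component $\Sigma^0$ is a closed embedded submanifold and the inclusion of $\Sigma^0$ into phase space is proper. Were $\Sigma^0\cong\R$, integrating $\dot s=g(s)$ with $g$ nowhere zero yields a monotone $s(t)$ that escapes every compact interval, and properness would force the orbit to leave every bounded region, contradicting boundedness. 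Therefore $\Sigma^0\cong S^1$ and the orbit is closed and periodic.

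I expect the main obstacle to be exactly this last step, namely excluding the non-compact component $\R$ for a bounded orbit. The delicate ingredients are the regularity of the level set (which holds at regular values, the generic situation ensured by functional independence) and the properness argument, which is what turns boundedness of the orbit in phase space into compactness of the one-manifold carrying it. A more self-contained alternative would argue through the $\omega$-limit set: boundedness makes it non-empty, and a flow-box around a limit point on a one-dimensional invariant manifold forces the monotone orbit to close up, again giving periodicity.
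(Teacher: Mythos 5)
The first thing to note is that the paper does not prove this theorem at all: it imports it from Nekhoroshev's 1972 paper (the footnote identifies it as a special case of Theorem 3 there) and offers only the one-line intuition that trajectories of a maximally superintegrable system are confined to one-dimensional subsets of phase space, so bounded orbits are diffeomorphic to circles. Your argument is precisely the rigorization of that intuition, and on its own terms it is sound: at a regular point of the map $\vec{F}=(H,F_2,\dots,F_{2n-1})$ one necessarily has $dH\neq 0$ (otherwise the first row of the Jacobian vanishes and the rank drops below $2n-1$), so $X_H$ is nonvanishing and tangent to the one-dimensional fiber; a connected component of a regular level set is a closed, embedded, connected $1$-manifold, hence a properly embedded copy of $\R$ or of $S^{1}$; and your monotone-escape-plus-properness argument correctly rules out the $\R$ case for a bounded (hence complete) orbit, leaving $S^{1}$ and periodicity. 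This is the standard proof of the regular-fiber version of Nekhoroshev's theorem.

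The gap sits in your very first move, ``fix a regular value'': the theorem quantifies over \emph{every} bounded orbit, and not every bounded orbit lies over a regular value. Functional independence in the sense of \eqref{eq:rankcond} can hold at best on an open dense subset: at any equilibrium point $dH=0$, so the rank necessarily drops there, and orbits lying on the singular fibers through such points are exactly where the regular value theorem, the properness argument, and the $1$-manifold classification all break down. This is not a removable technicality, because the literal statement is false on singular fibers. Already for $n=1$ (where $2n-1=1$, so the Hamiltonian itself provides maximal superintegrability and every one-degree-of-freedom system qualifies), the double well $H=\frac{1}{2}p^{2}+\frac{1}{4}q^{4}-\frac{1}{2}q^{2}$ has homoclinic orbits to the hyperbolic equilibrium at the origin: these are bounded, neither closed nor periodic, and they lie precisely on the critical level set where independence fails. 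So what you have proved is the correct, precise statement — Nekhoroshev's Theorem 3 concerns compact connected components of \emph{regular} common level sets — rather than the looser statement as printed in the paper. Your proof should either state this restriction explicitly, or note that the theorem as phrased tacitly excludes orbits through points where the integrals fail to be independent; also, your remark that a zero of $X_H$ on the fiber gives an equilibrium orbit is vacuous under the regular-value hypothesis, since regularity already forces $X_H\neq 0$ on the whole fiber.
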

\noindent Intuitively this happens because in the maximally superintegrable
case the trajectories are restricted to one-dimensional subspace of the phase space. 
Therefore, any bounded orbit is just diffeomorphic to a circle.
Of course Theorem \ref{th:orbits} has an immediate corollary given by:
\begin{corollary}
    If a Hamiltonian system with Hamiltonian $H=H\left( \vec{p},\vec{q} \right)$
    possesses \emph{at least} one bounded orbit which is \emph{neither} 
    closed \emph{nor} periodic, then it is not maximally superintegrable.
    \label{cor:nonperiodic}
\end{corollary}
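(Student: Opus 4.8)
The plan is to obtain the statement directly as the contrapositive of Theorem~\ref{th:orbits}, so that no analytic input is needed beyond the theorem itself. First I would isolate the logical content of the theorem: maximal superintegrability of $H$ implies that \emph{every} bounded orbit is closed and periodic. The corollary asserts this same implication, read from right to left, and therefore the whole argument should reduce to a short contrapositive manipulation.

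Concretely, I would argue by contradiction. Suppose the system governed by $H = H(\vec{p},\vec{q})$ admits at least one bounded orbit $\gamma$ that is neither closed nor periodic, and suppose further, for contradiction, that the system is nonetheless maximally superintegrable. Then Theorem~\ref{th:orbits} applies and forces every bounded orbit — in particular $\gamma$ — to be closed and periodic. This contradicts the hypothesis on $\gamma$, and hence the system cannot be maximally superintegrable.

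The only point deserving a line of care is the identification of the corollary's hypothesis with the exact logical negation of the theorem's conclusion. For a Hamiltonian flow an orbit is closed precisely when it is periodic, so the two descriptors coincide, and the negation of ``every bounded orbit is closed and periodic'' is just the existence of a single bounded orbit that is neither. There is consequently no genuine obstacle here: all the mathematical weight rests in Nekhoroshev's theorem, and the corollary is a purely formal consequence. Its value is practical rather than deep, since it converts the qualitative geometric statement into a falsifiability criterion — exhibiting one bounded, non-periodic orbit already suffices to exclude maximal superintegrability, which is exactly the leverage exploited in the remainder of the paper.
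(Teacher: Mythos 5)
Your proof is correct and matches the paper's (implicit) argument: the paper presents this corollary as an immediate consequence of Theorem~\ref{th:orbits}, i.e.\ precisely the contrapositive reading that you spell out. Your extra remark that ``closed'' and ``periodic'' coincide for bounded Hamiltonian orbits is a reasonable clarification but adds nothing beyond what the paper takes as understood.
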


In this paper we suggest a simple, algorithmic, perturbative test
based on Corollary \ref{cor:nonperiodic} which allows to prove if
a system is maximally superintegrable or not.
This test can be particularly useful when $n=2$ since, as noted above, 
in this case the notion of
superintegrability and maximal superintegrability coincide.

The paper is structured as follows. 
In Section \ref{sec:method} we describe our method for disproving 
or suggesting maximal superintegrability.
We give a concise introduction to the \emph{multiple scales method} 
\cite{Kuzmak1959,ColeKevorkian1963}: 
a perturbative technique aimed to avoid resonant, i.e. diverging, 
terms in the asymptotic expansion and to describe physical
phenomena happening on different time scales. 
In Section \ref{sec:examples} we present some relevant examples
of application of the method.
In particular we discuss the field of applicability of the method 
and we confront it with other techniques.
We present the known example of the Tremblay-Turbiner-Winternitz (TTW) 
system \cite{TTW2009}, where the method of Section \ref{sec:method} 
suggests (maximal) superintegrability.
We also present a new result about the Drach system \cite{Drach1935}. 
Using the method presented in Section \ref{sec:method} we
show that the Drach system is, in general, not superintegrable.
Finally, we show how it is possible to use such a method
to classify maximally superintegrable systems giving an alternative proof 
of the famous Bertrand's Theorem \cite{Bertrand1873}.
In Section \ref{sec:conclusion} we give some conclusions and perspectives
for further developments.
We comment on the analogy between our method, 
which has been applied to finite dimensional systems, 
and the ones used in the infinite dimensional framework as a way to  produce soliton equations
\cite{ZakharovKuznetsov1986,CalogeroEchkhaus1987,CalogeroEchkhaus1988,Calogero1991}.

\section{The method}
\label{sec:method}

Our approach in disproving or suggesting maximal superintegrability
is based on Corollary \ref{cor:nonperiodic} and on the
so-called \emph{multiple scales method}.
The multiple scales analysis is a perturbation technique whose history dates back to the 18th century.
The bases for the multiple scales method were laid in the works by 
Lindstedt \cite{Lindstedt1882} and Poincar\'e \cite{Poincare1886}, 
it was developed in its modern form in \cite{Kuzmak1959,ColeKevorkian1963}.
The core of this approach is to find asymptotic approximated solutions of  a system of differential equations when the standard perturbation theory produces secular terms.
During the years, the multiple scales analysis has proved to be very useful in the construction of approximate solutions of differential equations, and is now included in every textbook on perturbation theory \cite{Nayfeh,BenderOrszag,KevorkianCole,Holmes}.
Such a powerful method has also found applications
in fields which do not seem to be correlated with such problems, for example
in the theory of integrable systems in infinite dimensions 
\cite{ZakharovKuznetsov1986,CalogeroEchkhaus1987,CalogeroEchkhaus1988,Calogero1991}. 

The key feature that allows the elimination of the secular terms
is the introduction of fast-scale variables and slow-scales variables
in a way that the dependence on the slow-scale variables will
prevent the secularities. To be more precise,
suppose that we are given a system of second-order ordinary differential
equations with independent variable $t$ and dependent variables
$\xv\left( t \right)=\left( x_{1}\left( t \right),\dots,x_{n}\left( t \right) \right)$:
\begin{equation}
\xvdd = \vec{F}_{\varepsilon}\left( \xv,\xvd \right),
\label{eq:multiscalegen}
\end{equation}
where the $\varepsilon$ subscript means that we have
dependence on a ``small'' parameter $\varepsilon$, i.e.
$\varepsilon\to0^{+}$.
From now on this condition on the parameter $\varepsilon$ will be
always assumed.
{We suppose that $\xv$ has an asymptotic expansion of the form:
\begin{equation}
\xv\left( t \right)
= \sum_{i=0}^{N+M} \varepsilon^{i} \xv^{(i)}\left( t_{0},t_{1},t_{2},\dots, t_{N} \right)
+\BigO{\epsilon^{N+M+1}},
\label{eq:asymptseries}
\end{equation}
truncated at some positive integer $N+M$, with $M\geq0$. In the right hand 
side of \eqref{eq:asymptseries} the dependence on the time variable $t$ 
appears through the so-called \emph{scales}\footnote{If $t$ is a time 
variable, the scales are the characteristic 
\emph{time scales} of $\vec{x}$. Similarly, if $t$ is 
a length variable, the scales are the characteristic 
\emph{length scales} of $\vec{x}$.} ${t_{i}=t_{i}(t,\varepsilon)}$. 
Intuitively, the scales isolate different behaviors inside 
equation \eqref{eq:multiscalegen}.
E.g. in the damped harmonic oscillator the oscillations 
and the amplitude suppression are phenomena happening on different
time scales.
The number of scales to be introduced depends on the desired  
asymptotic approximation order: the expansion is guaranteed 
to be asymptotic until 
\begin{equation}
t_{N}\left( t,\varepsilon \right) = \BigO{1}
\label{eq:scalemeaning}
\end{equation}
is satisfied.
The number of scales also sets the approximation error,
in the sense that the maximum 
discrepancy from the complete solution
\begin{equation}
\max_{t\in \left[ 0,t_{max} \right]}\abs{\xv(t)-
    \sum_{i=0}^{N+M-1} \varepsilon^{i} \xv^{(i)}\left( t_{0},t_{1},t_{2},\dots, t_{N} \right)}
\label{eq:maxfunc}
\end{equation}
is $\BigO{\varepsilon^{N+M}}$, where $t_{max}$ 
is the time such that the condition \eqref{eq:scalemeaning}
holds.

The mathematical structure of the scales is the most delicate point 
in the whole expansion method: it involves the knowledge of
the structure of the system \eqref{eq:multiscalegen}, 
and the constraint that they  must be \emph{non-decreasing} 
functions of $t$ satisfying the 
condition:
\begin{equation}
\lim_{\varepsilon\to0^+}\dfrac{t_{i+1}(t,\varepsilon)}{t_{i}(t,\varepsilon)} = 0, 
\quad \forall \, i=0,1,\dots, N-1.
\label{eq:timescalecond}
\end{equation}
Condition \eqref{eq:timescalecond} just states that the scales
are \emph{well-ordered}, i.e. phenomena happening at the scale
$t_{i+1}$ are slower than those happening on scale $t_{i}$.
In many cases, and in our paper we will do so, one can
just consider the so-called \emph{trivial time scales}:
\begin{equation}
t_{i} = \varepsilon^{i} t,
\label{eq:trivialts}
\end{equation}
which are non-decreasing linear functions of $t$ and
satisfy the condition \eqref{eq:timescalecond}.
{We note that in general $N$ has to be sufficiently high not just to give a 
longer asymptotic range of validity of the expansion, 
but also to capture the behavior of the system.}

\noindent The substitution \eqref{eq:asymptseries} can be extended
to all the derivatives of $\xv$ by differentiation or, more operatively,
by substituting 
\begin{equation}
\frac{\ud}{\ud t} = \sum_{i=0}^{N}\pdv{t_{i}}{t}\pdv{}{t_{i}},
\label{eq:asymptder}
\end{equation}
which in the case of the trivial time scales is particularly simple:
\begin{equation}
\frac{\ud}{\ud t} = \sum_{i=0}^{N}\varepsilon^{i}\pdv{}{t_{i}}.
\label{eq:asymptdertrivial}
\end{equation}
Substituting the series \eqref{eq:asymptseries} and all its 
derivatives in equation \eqref{eq:multiscalegen} using \eqref{eq:asymptder},
and eventually expanding in Taylor series with respect to $\epsilon$, 
we obtain a polynomial in $\varepsilon$ which must be identically equal to zero. 
We can then separately set to zero all the coefficients of 
$\varepsilon$-powers and obtain a system of $N+M+1$ partial differential equations. 
If the scales are correctly  chosen, the  $\varepsilon^0$-equation
will contain $\xv^{(0)}$ only and will depend just on $t_{0}$. This will
give rise to a solution depending on arbitrary functions of the remaining
scales $t_{1},\dots,t_{N}$. Substituting $\xv^{(0)}$ into the $\varepsilon^1$-equation  
we  use these arbitrary functions to prevent the birth
of the secular terms in $\xv^{(1)}$. 
Solving iteratively for the remaining $\xv^{(i)}$ one finally writes
down the $N+M$ terms of the desired expansion \eqref{eq:asymptseries}.
In the case of high order expansions ($N>1$) sometimes
the previous iterative method is not sufficient to completely specify the terms
of the asymptotic series. In these cases the strategy 
of the \emph{suppression of the order mixing} is adopted:
it consists in eliminating from the $\varepsilon^{i+1}$-equation all the contributions coming
from the arbitrary functions arising from lower orders solutions $\xv^{(i)}$, $\xv^{(i-1)}$, etc. 
This increases the accuracy of the first $i$ terms by reducing the amount of corrective terms in $\xv^{(i+1)}$ \cite{Holmes}.

\begin{remark}
Unlike the usual pertubation theory
where the object seeked is of increasing precision in
$\varepsilon$, the aim of the multiple scales method
is to derive an object of ``minimal'' precision, but valid
for a longer time scale, i.e. a true \emph{asymptotic expansion}.
If one sets in \eqref{eq:asymptseries} $M=0$, as it is usually done,
 then the output of the method is an $\varepsilon$-precision
approximate solution of the form:
\begin{equation}
    \xv = \xv^{(0)}\left( t_{0},t_{1},\dots,t_{N} \right)
    +\BigO{\varepsilon},
    \label{eq:standapp}
\end{equation}
valid until $t_{N}=\BigO{1}$.
\end{remark}

In order to disprove or suggest the maximal superintegrability
of a Hamiltonian system we will need several steps. 
Indeed, if we want to apply the multiple scales method, we will
need to introduce into our Hamilton's equations a small parameter 
$\varepsilon$ which is not, in general, naturally present.
To this end we will use a perturbative approach to equilibrium.
Moreover, we will need the following technical result:
\begin{lemma}
    Let us suppose that we are given a system of second-order differential
    equations in the form \eqref{eq:multiscalegen}, such that
    the dependence on the parameter $\varepsilon$ is analytic
    in a neighbourhood of $\varepsilon=0$.
    Assume that every bounded solution of \eqref{eq:multiscalegen}
    is periodic.
    Then every bounded perturbative series, analytic in a neighbourhood
    of $\varepsilon=0$, is periodic at any order.
    \label{lem:periodic}
\end{lemma}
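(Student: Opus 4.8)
The plan is to exploit the hypothesis in the only way it can bite: the full series must represent a genuine bounded orbit, which is therefore periodic, and periodicity of the limiting object can then be propagated to every coefficient. Writing the bounded perturbative series as $\xv(t,\varepsilon)=\sum_{i\ge0}\varepsilon^{i}\xv^{(i)}(t)$, I read the word \emph{bounded} as the requirement that each coefficient $\xv^{(i)}(t)$ be a bounded function of $t$, i.e. that the expansion be free of secular terms; this is exactly the feature that distinguishes the output of the multiple scales method from the naive expansion. Under this assumption the sum is a bona fide bounded solution of \eqref{eq:multiscalegen}, so by hypothesis it is periodic with some minimal period $T(\varepsilon)>0$, and it satisfies the identity
\begin{equation}
\xv\left( t+T(\varepsilon),\varepsilon \right)=\xv\left( t,\varepsilon \right).
\label{eq:periodid}
\end{equation}
First I would expand \eqref{eq:periodid} in powers of $\varepsilon$ and match order by order.

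For this to make sense I must know that $T(\varepsilon)$ is itself analytic near $\varepsilon=0$, and this is where I expect the real work to be. Since $\xv^{(0)}$ is a bounded, non-equilibrium solution of the unperturbed system, $\dot{\xv}^{(0)}\not\equiv0$, so I may fix a hyperplane $\Sigma$ transverse to the unperturbed orbit at a point of nonzero velocity. The first-return condition to $\Sigma$ is an analytic equation in $\varepsilon$ and in the return time, and analytic dependence of the flow of \eqref{eq:multiscalegen} on $\varepsilon$ together with transversality let me solve it by the analytic implicit function theorem; composing with the analytic initial datum $\xv(0,\varepsilon)$ yields an analytic period $T(\varepsilon)=\sum_{j\ge0}\varepsilon^{j}T_{j}$ with $T_{0}$ the period of $\xv^{(0)}$. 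The delicate points here are that bounded orbits may come in continuous families, so that a priori the period is only a function on orbit space, and that the minimal period could in principle jump; both are handled by working on the fixed section $\Sigma$ and using continuity to guarantee that the return time stays close to $T_{0}$ for small $\varepsilon$.

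With $T(\varepsilon)$ analytic the rest is a clean induction. Inserting the Taylor expansions of $\xv$ and of $T(\varepsilon)$ into \eqref{eq:periodid} and collecting powers of $\varepsilon$, the order $\varepsilon^{0}$ equation reads $\xv^{(0)}(t+T_{0})=\xv^{(0)}(t)$, so $\xv^{(0)}$ is $T_{0}$-periodic. Assume inductively that $\xv^{(0)},\dots,\xv^{(k-1)}$ are $T_{0}$-periodic and that $T_{1}=\dots=T_{k-1}=0$. Then $T(\varepsilon)-T_{0}=\varepsilon^{k}T_{k}+\BigO{\varepsilon^{k+1}}$, so the only contributions at order $\varepsilon^{k}$ come from the $i=k$ term and from the first Taylor correction of the $i=0$ term, giving
\begin{equation}
\xv^{(k)}(t+T_{0})-\xv^{(k)}(t)=-\,T_{k}\,\dot{\xv}^{(0)}(t),
\label{eq:orderk}
\end{equation}
where I have used $\dot{\xv}^{(0)}(t+T_{0})=\dot{\xv}^{(0)}(t)$. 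Iterating \eqref{eq:orderk} gives $\xv^{(k)}(t+nT_{0})=\xv^{(k)}(t)-n\,T_{k}\,\dot{\xv}^{(0)}(t)$, an arithmetic progression in $n$; since $\xv^{(k)}$ is bounded and $\dot{\xv}^{(0)}\not\equiv0$, this forces $T_{k}=0$, whereupon \eqref{eq:orderk} reduces to $\xv^{(k)}(t+T_{0})=\xv^{(k)}(t)$. Thus $\xv^{(k)}$ is $T_{0}$-periodic and the induction closes, proving that every coefficient is periodic, indeed with the common, $\varepsilon$-independent period $T_{0}$.

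Finally I would record why the boundedness hypothesis is indispensable, since this both motivates the statement and serves as a sanity check: for $\xvdd=-(1+\varepsilon)^{2}\xv$ every bounded orbit is periodic, yet the naive expansion of the bounded solution $\sin\!\left( (1+\varepsilon)t \right)$ produces the non-periodic first-order term $t\cos t$; this expansion is simply not a bounded perturbative series, and the lemma correctly excludes it. The main obstacle, as indicated, is the regularity of the period $T(\varepsilon)$; once that is secured the matching argument is purely algebraic, and the vanishing of all $T_{k}$ for $k\ge1$ is a welcome by-product that also explains why the genuinely $\varepsilon$-dependent periods of systems such as the Duffing oscillator can only be accompanied by secular, hence unbounded, coefficients.
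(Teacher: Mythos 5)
Your proposal is correct, and its core is actually a more careful version of what the paper does; the difference is worth spelling out. The paper's entire proof is your first paragraph and nothing more: the bounded sum $\xv_{P}$ is periodic by hypothesis, so $\xv_{P}(t+T)=\xv_{P}(t)$, and matching powers of $\varepsilon$ in \eqref{eq:xvp} gives periodicity of every coefficient. Written this way, the matching step silently treats $T$ as independent of $\varepsilon$: if $T=T(\varepsilon)$, then expanding $\xv^{(k)}\bigl(t+T(\varepsilon)\bigr)$ mixes all orders and the order-by-order conclusion is no longer immediate --- precisely the phenomenon in your sanity-check example $\ddot{\xv}=-(1+\varepsilon)^{2}\xv$, where the $\varepsilon$-dependent period manifests itself as secular, non-periodic coefficients. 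Everything you add beyond the paper --- reading ``bounded'' as coefficient-wise boundedness (which is also what the paper's own Remark after the Lemma suggests), constructing an analytic period $T(\varepsilon)=T_{0}+\sum_{j\geq1}\varepsilon^{j}T_{j}$ via a transverse section, and the induction in which the telescoped identity $\xv^{(k)}(t+nT_{0})=\xv^{(k)}(t)-nT_{k}\dot{\xv}^{(0)}(t)$ together with boundedness forces $T_{k}=0$ --- is exactly the justification of that silent step. Your route buys rigor and a strictly stronger conclusion (all coefficients share the single $\varepsilon$-independent period $T_{0}$, i.e.\ the frequency admits no corrections, which is the mechanism the paper later exploits to detect non-superintegrability); the paper's route buys brevity by in effect assuming what you prove.

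The one soft spot in your argument is the identification of the analytic return time with an actual period. The implicit function theorem on a transverse section gives an analytic time $T(\varepsilon)$ near $T_{0}$ at which the orbit recrosses the section, but the hypothesis only tells you the perturbed orbit is periodic with \emph{some} minimal period $P(\varepsilon)$; a priori $P(\varepsilon)$ could be close to a multiple $mT_{0}$ (the orbit closing up only after several nearby loops), in which case your $T(\varepsilon)$ is a return time but not a period, and the matching identity fails. You flag this (``the minimal period could in principle jump'') but ``working on the fixed section'' does not by itself dispose of it. It can be repaired --- e.g.\ note that limit points of $P(\varepsilon)$ as $\varepsilon\to0^{+}$ must lie in $T_{0}\N$ (a uniform limit of periodic functions with converging periods is periodic, and minimal periods cannot shrink to $0$ when the limit is nonconstant, which requires your standing assumption $\dot{\xv}^{(0)}\not\equiv\mathbf{0}$), and then run your telescoping argument along subsequences with period near $mT_{0}$ --- so this is a fixable refinement rather than a fatal flaw, and it concerns a level of care the paper's own proof never reaches.
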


\begin{proof}
    Let us assume we have a bounded analytic perturbative series
    \begin{equation}
        \xv_{P}(t)  = \sum_{k=0}^{\infty} \varepsilon^{k} \xv^{(k)}(t).
        \label{eq:xvp}
    \end{equation}
    Since  $\xv_{P}$ is bounded, then
    it is periodic by assumption. 
    This means that there exists a $T$ such that 
    $\xv\left( t+T \right) = \xv\left( t \right)$ for every $t\in\R$.
    Substituting this condition in \eqref{eq:xvp}, since the series
    is analytic in $\varepsilon$, we obtain:
    \begin{equation}
        \xv^{(k)}\left( t+T \right)=\xv^{(k)}\left( t \right),
        \quad
        \forall k \in \N,
    \end{equation}
    which implies that every order of the perturbative series
    is periodic.
    Moreover, since we assumed that the system \eqref{eq:multiscalegen}
    is analytic in a neighbourhood of $\varepsilon=0$ we have that, at every order,
    the coefficients $\xv^{(k)}$ will be approximate solutions of
    the differential system for $\varepsilon\to0^{+}$.
\end{proof}

\begin{remark}
    An asymptotic series which is bounded at every power of $\varepsilon$
    by a constant $B_{k}$, independent of $\varepsilon$,
    will satisfy the hypothesis of Lemma \ref{lem:periodic}
    provided that the sequence $B_{k}$ is well behaved.
    Indeed, if for every $k\in\N$ the estimate
    $\abs{\vec{x}^{\left( k \right)}}\leq B_{k}$,
    we have from \eqref{eq:xvp}:
    \begin{equation}
        \abs{\xv_{P}}  \leq \sum_{k=0}^{\infty} \varepsilon^{k} B_{k}.
        \label{eq:xvpabs}
    \end{equation}
    The latter series is convergent, by the Cauchy-Hadamard theorem
    \cite{WhittakerWatson1927}, if
    \begin{equation}
        R^{-1} = \limsup_{k \to \infty} |B_{k}|^{1/k} < \infty.
        \label{eq:limsup}
    \end{equation}
    We remark that in a general 
    multiple-scales expansion the condition of boundness is satisfied
    at every power of $\varepsilon$.
    In general, proving explicitly the condition \eqref{eq:limsup} can be quite complicated. However, if the original equation is analytic in a neighbourhood of $\epsilon=0$, we can assume that (except for some pathological case) such condition is satisfied.
   
\end{remark}

\noindent Then, the method  we propose can be summarized in the following steps:
\begin{enumerate}
\item We put the mechanical system under consideration 
    in \emph{Lagrangian form}, $L=L\left( \vec{q},\vec{\dot{q}} \right)$
    with $\vec{q}=\left( q_{1},\dots,q_{n} \right) \in \mathcal{U} \subset \mathbb{R}^n$ and 
    $\dot{\vec{q}}=\left( \dot {q}_{1},\dots,\dot{q}_{n} \right) \in T_{\mathbf{q}}\mathcal{U}$, where $\mathcal{U}$ is an open subset of $\mathbb{R}^n$. We remark that in general, the generalized coodinates $\vec{q}$ may belong to a Riemannian manifold $\mathcal{M}$ of dimension $n$. However, our analysis is \emph{local} therefore  we can always think to be in the appropriate chart.
    In the cases we will treat in this paper we will deal with
    \emph{natural Lagrangians}, i.e. Lagrangians of the
    form:
    \begin{equation}
        L = \half \left(\vec{\dot{q}}, A\left( \vec{q}\right)  \vec{\dot{q}}\right)
        - V\left( \vec{q} \right)
        \label{eq:natlagr}
    \end{equation}
    where $A\left( \vec{q}\right) = [A_{ij}(\mathbf{q})]$ is a symmetric,  
    positive definite $n \times n$ matrix, with $(\,,\,)$ we denoted the standard
    scalar product and $V\left( \vec{q}\right)$ is the potential.
    We prefer the Lagragian form over the Hamitonian one
    of the equations of motion because identifying and isolating
    resonances for second-order differential equations is easier.
    The coordinates $\vec{q}$ are preferably \emph{unbounded}.
    The use of unbounded coordinates, e.g. Cartesian, elliptical
    and parabolic, is preferred since it is easier to keep track
    of the secular terms.
    However, the method can be applied with the required care
    even when some \emph{bounded} coordinates are present, 
    see Subsection \ref{sss:bertrand}.
\item Search for equilibrium positions as \emph{stationary points}
    of the potential $V$ i.e. as the points such that $\nabla_{\vec{q}}V = {\mathbf 0}$,
    where  $\nabla_{\vec{q}}$ denotes the gradient with respect to
    the $\vec{q}$ coordinates.
    This will give a collection of points to test, say:
    \begin{equation}
        \Set{\vec{q}_{1},\vec{q}_{2}\dots,\vec{q}_{m}},\quad m\in\N^{*}.
        \label{eq:eqpts}
    \end{equation}
    In principle the method cannot be applied if no stationary point
    exists.
    However, in some cases, it is still possible to apply it. For a
    discussion of this extension see Subsection \ref{sss:bertrand}.
\item Determine the linearized equations perturbatively using the expansion:
    \begin{equation}
        \vec{q} = \vec{q}_{l} + \varepsilon \vec{Q}, \quad l=1,\dots,m.
        \label{eq:pertv}
    \end{equation}
    In this way we introduce the needed small parameter $\varepsilon$.
    This analysis is equivalent to the classical one \cite{Braun1993,Arnold1997},
    simply the condition $\varepsilon\to0^{+}$ replaces the ``small-norm''
    requirement usually adopted.
    The small parameter is used to linearize the equation.
    In the case of natural Lagrangians \eqref{eq:natlagr} the
    Euler-Lagrange equations are given by \cite{Lee}:
    \begin{equation}
        \sum_{j=1}^{n}A_{kj}\left( \vec{q} \right)\ddot{q}_{j}
        +\sum_{i,j=1}^{n} \Gamma_{ijk}\left( \vec{q} \right) \dot{q}_{i}\dot{q}_{j}
        +\pdv{V\left( \vec{q}\right)}{q_{k}}= 0,
        \quad
        k=1,\dots,n,
        \label{eq:natel}
    \end{equation}
    where 
    \begin{equation}
        \Gamma_{ijk}\left( \vec{q} \right) 
        \doteq \frac{1}{2} \left(\frac{\partial A_{ik}(\mathbf{q})}{\partial q_j}
        +\frac{\partial A_{jk}(\mathbf{q})}{\partial q_i}
        -\frac{\partial A_{ij}(\mathbf{q})}{\partial q_k} \right) 
        \label{eq:christoffel}
    \end{equation}
    are the contracted Christoffel symbols \cite{Dubrovin1984}.
     Inserting \eqref{eq:pertv} in the Euler-Lagrange
    equations \eqref{eq:natel} we obtain:
    \begin{equation}
        \varepsilon\sum_{j=1}^{n}A_{kj}\left( \vec{q}_{l} + \varepsilon \vec{Q} \right)\ddot{Q}_{j}
        +\varepsilon^{2}\sum_{i,j = 1}^{n}\Gamma_{ijk}\left( \vec{q}_{l} + \varepsilon \vec{Q} \right) \dot{Q}_{i}\dot{Q}_{j}
        +\pdv{V\left( \vec{q}_{l} + \varepsilon \vec{Q} \right)}{q_{k}}= 0,
        \quad
        k=1,\dots,n.
        \label{eq:nateleps}
    \end{equation}
    Expanding the system \eqref{eq:nateleps} in Taylor series
    with respect to $\varepsilon$, since $\vec{q}_{l}$
    is a stationary point, we get:
    \begin{equation}
        \varepsilon  
        \sum_{j=1}^{n}\left[A_{kj}\left( \vec{q}_{l}\right)\ddot{Q}_{j}
        +\pdv{V}{q_{k},q_{j}}\left( \vec{q}_{l}\right)Q_{j} \right]
        +\BigO{\varepsilon^{2}}= 0,
        \quad
        k=1,\dots,n.
        \label{eq:lineqgen0}
    \end{equation}
    Equation \eqref{eq:lineqgen0} can be easily written in
    vector form as
    \begin{equation}
        \varepsilon\left[
            A\left( \vec{q}_{l} \right)\vec{\ddot{Q}} 
        + \mathrm{H}\left( V \right)\left( \vec{q}_{l} \right) \vec{Q}
        \right]
        +\BigO{\varepsilon^{2}}
        = \mathbf{0},
        \label{eq:lineqgen}
    \end{equation}
    denoting with $\mathrm{H}\left( V \right)\left( \vec{q}_{l} \right)$
    the Hessian matrix of $V$ evaluated in $\vec{q}_{l}$.
    We just obtained through this perturbative approach the usual linearized
    Euler-Lagrange equations \cite{Arnold1997}.
\item  Given the linearized equations \eqref{eq:lineqgen},
    it is well known that their integration 
    can be reduced to a problem in linear algebra which consists
    in finding the eigenvalues of the symmetric matrix
    $\mathrm{H}\left( V \right)\left( \vec{q}_{l} \right)$ with respect to
    the scalar product induced by the symmetric and positive definite
    matrix $A\left( \vec{q}_{l} \right)$.
    Practically, this can be done solving the characteristic equation:
    \begin{equation}
        \det \left[ \mathrm{H}\left( V \right)\left( \vec{q}_{l} \right)
        -\omega^{2} A\left( \vec{q}_{l} \right) \right]=0,
        \label{eq:chareq}
    \end{equation}
    with respect to $\omega$.
    The solution of equation \eqref{eq:chareq} 
    yields $n$ solutions for $\omega^{2}$.
    If every possible value of $\omega^{2}$ is positive, then we say
    that the equilibrium point is \emph{stable}.
    If all the possible values of $\omega^{2}$ are non-negative, then
    we say that the equilibrium point is \emph{neutral}.
    If at least one of the possible values of $\omega^{2}$ is negative, we say 
    that the equilibrium is \emph{unstable}.
    As we will discuss, stable equilibrium points give rise to bounded
    orbits, and this is the reason why we will be interested in this kind of stationary points.
    We will assume that in the set of points \eqref{eq:eqpts} there
    exists \emph{at least} one stable point, otherwise the algorithm
    is not applicable.
    Therefore, we will continue our discussion assuming that all the possible values 
    of $\omega^{2}$ are positive.  Then, we can define the vector
    $\boldsymbol{\omega}=\left( \omega_{1},\dots,\omega_{n} \right)$
    of the $n$ positive, possibly equal, solutions of \eqref{eq:chareq}.
    We call the constants $\omega_{i}$ the \emph{fundamental frequencies}. Physically, this procedure amounts to make the \emph{ansatz}:
    \begin{equation}
    \vec{Q} = \vec{S} \cos\left( \omega t + \varphi \right),
    \label{eq:ansatz}
    \end{equation}
    which inserted in \eqref{eq:lineqgen} is solution if and only if \eqref{eq:chareq} is satisfied. Since the matrices $\mathrm{H}\left( V \right)\left( \vec{q}_{l} \right)$
    and $A\left( \vec{q}_{l} \right)$ are symmetric,
    to the $n$ fundamental frequencies correspond $n$ independent 
    orthogonal eigenvectors
    $\vec{S}^{(1)},\dots,\vec{S}^{(n)}$, i.e. the solutions of:
    \begin{equation}
        \left[ \mathrm{H}\left( V \right)\left( \vec{q}_{l} \right)
        -\omega^{2}_{i} A\left( \vec{q}_{l} \right) \right] \vec{S}^{(i)}=\mathbf{0}.
        \label{eq:eigenvectors}
    \end{equation}
    The general solution of the linearized Euler-Lagrange equation is
    then given by:
    \begin{equation}
        \vec{Q} = \sum_{i=1}^{n} \vec{S}^{(i)} N_{i}\left( t \right),
        \label{eq:genlinel}
    \end{equation}
    where:
    \begin{equation}
        N_{i} (t)= R_{i} \cos\left( \omega_{i}t+\varphi_{i} \right) \,, \quad i=1, \dots, n,
        \label{eq:normsol}
    \end{equation}
    with $R_{i}$ and $\varphi_{i}$ constants.
    Then, performing the linear transformation  $\vec{Q}=S\vec{N}$
    with $S$ given by:
    \begin{equation}
        S = \left[ \vec{S}^{(1)},\dots,\vec{S}^{(n)} \right] ,
        \label{eq:Smatrix}
    \end{equation}
    we have that the system \eqref{eq:lineqgen} reduces to
    \begin{equation}
        \vec{\ddot{N}} + \Lambda_{S} \vec{N}=\mathbf{0},
        \quad 
        \Lambda_{S} 
        = 
        S^{T} A\left( \vec{q}_{l} \right)^{-1}\mathrm{H}\left( V \right)\left( \vec{q}_{l} \right) S 
        = 
        \diag\left( \omega_{1}^{2},\dots,\omega_{n}^{2} \right).
        \label{eq:lineqgen3}
    \end{equation}
    Thus, in the coordinates $\vec{N}$, the system acts
    as $n$ one-dimensional systems whose solution is given by \eqref{eq:normsol}.
    The coordinates $\vec{N}$ are called the \emph{normal coordinates}.
%
%
    This means that the solution of the system at order $\varepsilon$ is
    bounded.
    We  introduce the \emph{frequencies ratio matrix}:
    \begin{equation}
        \Delta^{(0)}\left( \vec{q}_{l} \right) 
        \doteq \left[ \frac{\omega_{i}^{(0)}}{\omega_{j}^{(0)}} \right]_{i,j=1,\dots,n}.
        \label{eq:freqratmat}
    \end{equation}
    If the entries of the matrix $\Delta^{(0)}\left( \vec{q}_{l} \right)$ \eqref{eq:freqratmat} 
    are \emph{rational} then we have found an approximate bounded periodic closed orbit.
    This implies that the system under scrutiny \emph{can be} maximally superintegrable. 
    If there exists \emph{at least} a stable equilibrium point $\vec{q}_{l^{*}}$,
    such that the matrix $\Delta^{(0)}\left( \vec{q}_{l^{*}} \right)$ 
    \eqref{eq:freqratmat} possesses \emph{at least} 
    an irrational entry then, as a consequence of Corollary \ref{cor:nonperiodic}
    and Lemma \ref{lem:periodic}, the system cannot be maximally superintegrable.
    In the latter case the algorithm terminates here with a negative answer.
\item If all the stable equilibrium points in \eqref{eq:eqpts} give
    rise to closed periodic orbits, we have to check if this conditions
    is preserved on longer time scales.
    To this end we return to the Euler-Lagrange equations,
    say of the form of \eqref{eq:natel}, but we  now assume
    that $\vec{Q}=S\vec{N}$ where $\vec{N}$ is given by a multiple-scale expansion:
    \begin{equation}
        \vec{N} = 
        \sum_{i=0}^{N+M} \varepsilon^{i} \vec{N}^{(i)}\left( t_{0},t_{1},t_{2},\dots, t_{N} \right)
        +\BigO{\epsilon^{N+M+1}},
        \label{eq:vmult}
    \end{equation}
    where as time scales we use the trivial ones \eqref{eq:trivialts}.
    This asymptotic expansion will give rise to \emph{higher order corrections}
    to the fundamental frequencies $\boldsymbol{\omega}$:
    \begin{equation}
        \Omega_{j}^{(N)} = \sum_{i=0}^{N} \varepsilon^{i}\omega_{j}^{(i)},
        \quad
        j =1,\dots,n.
        \label{eq:omegajn}
    \end{equation}
    We introduce the \emph{higher order frequencies ratio matrix}:
    \begin{equation}
        \Delta^{(N)}\left( \vec{q}_{l} \right) 
        \doteq \left[ \frac{\Omega_{i}^{(N)}}{\Omega_{j}^{(N)}} \right]_{i,j=1,\dots,n}.
        \label{eq:freqratmatn}
    \end{equation}
    Again, if the entries of the matrix $\Delta^{(N)}\left( \vec{q}_{l} \right)$ 
    are \emph{rational}, we have found an approximate bounded 
    periodic closed orbit of order $N$.
    This can suggest maximal superintegrability.
    If on the contrary we are able to prove that there exists 
    \emph{at least} a point $\vec{q}_{l^{*}}$ and an integer $N^{*}>1$,
    such that one of the entries of the matrix $\Delta^{(N^{*})}\left( \vec{q}_{l^*} \right)$ 
    \eqref{eq:freqratmatn} is not rational then, from Corollary \ref{cor:nonperiodic}
    and Lemma \ref{lem:periodic},
    we can conclude that the system is not maximally superintegrable.
    \label{it:mult}
\end{enumerate}

\begin{remark}
    We observe that in the plane case, i.e. when $n=2$, it is not
    necessary to consider the ``full'' matrices \eqref{eq:freqratmat}
    and \eqref{eq:freqratmatn}, but only the ratios
    \begin{equation}
        \Delta^{(0)}\left( \vec{q}_{l} \right) 
        = \frac{\omega_{1}^{(0)}}{\omega_{2}^{(0)}}
        \quad
        \text{and}
        \quad
        \Delta^{(N)}\left( \vec{q}_{l} \right) 
        = \frac{\Omega_{1}^{(N)}}{\Omega_{2}^{(N)}}
        \label{eq:freqrat2dim}
    \end{equation}
    will be relevant.
\end{remark}

\begin{remark}
At the present stage, in order to show that a system
is not maximally superintegrable, it was sufficient to take
an expansion \eqref{eq:vmult} such that $N=2$ and $M=0$.
It is not known if there exist non-maximally superintegrable
systems requiring higher order expansions.
\end{remark}

In the next Section we present some examples of the application
of the method we just outlined.

\section{Examples of application of the method}

\label{sec:examples}

In this Section we present the practical application of
the method we explained in Section \ref{sec:method}.
In particular we provide five examples which are aimed to underline
the importance and the possibilities of the presented procedure.

In Subsections \ref{sss:henonheiles} and \ref{sss:anisco} 
we present two simple examples, namely the \emph{generalized H\'enon-Heiles system} 
\cite{HenonHeiles1964,Fordy1991}
and the \emph{anisotropic caged oscillator} \cite{Evans2008}.
We use these two examples to discuss the range of the results which can
be obtained in the framework of our method, and to confront them with other
algorithmic procedures aimed to find integrable systems.
In Subsection \ref{sec:ttw} we show that the superintegrability of
the \emph{Tremblay-Turbiner-Winternitz} (TTW) system \cite{TTW2009}, 
can be inferred using the algorithm of Section \ref{sec:method}.
In Subsection \ref{sec:drach} we present a new result about the so-called 
\emph{Drach} system \cite{Drach1935}. 
Applying the method presented in Section \ref{sec:method} we
show that the Drach system is not, for general values of the parameters,
superintegrable.
This result answers a comment made by the authors of  \cite{PostWinternitz2011}
where a particular case of this model,
which we will call the \emph{Drach-Post-Winternitz} (DPW) system,
was considered and it was shown to be superintegrable. 
Finally, in Subsection \ref{sss:bertrand}, we show that the method of 
Section \ref{sec:method} can be applied as a sieve test for (maximally)
superintegrable systems. Indeed, we prove in this framework
  \emph{Bertrand's theorem}  \cite{Bertrand1873}, which characterizes all the possible
 central potentials in the plane with bounded and closed orbits.

\subsection{The generalized H\'enon-Heiles system}

\label{sss:henonheiles}

As a first application of the method we discuss the so-called \emph{generalized H\'enon-Heiles Hamiltonian}:
\begin{equation}
    H_{\text{HH}} = 
    \half \left( p_{1}^{2}+  p_{2}^{2} \right)
    +\frac{\omega_{1}^{2}}{2}q_{1}^{2}
    +\frac{\omega_{2}^{2}}{2} q_{2}^{2}
    +\alpha  \left( q_{1}^{2}q_{2} + \beta q_{2}^{3} \right), 
    \label{eq:HHH}
\end{equation}
where $\omega_{1}$, $\omega_{2} \in \mathbb{R}^+$ and $\alpha$, $\beta$
are real parameters.
This system is called generalized H\'enon-Heiles since
it is a generalization of the H\'enon-Heiles system
which arises for $\omega_{1}=\omega_{2}=1$ and $\beta=-1/3$.
The H\'enon-Heiles system was introduced in \cite{HenonHeiles1964} 
to model the dynamics of a Newtonian axially-symmetric galactic system and it is usually regarded as a prototype
of Hamiltonian system which exhibits chaotic behavior 
\cite{Tabor1989,Gutzwiller1990,BoccalettiPucacco2004}.
Despite these facts several \emph{integrable} subcases 
of \eqref{eq:HHH} are known 
\cite{ChangTaborWeiss1982,BountisSegurVivaldi1982,GrammaticosDorizziPadjen1982}:
\begin{enumerate}
    \item The Sawada-Kotera case: $\beta=1/3$ and $\omega_{1}=\omega_{2}$.
    \item The KdV case: $\beta=2$ and $\omega_{1}$, $\omega_{2}$ arbitrary.
    \item The Kaup-Kupershmidt case: $\beta=16/3$ and $\omega_{2}=4\omega_{1}$.
\end{enumerate}
We used the terminology of \cite{Fordy1991}, since these three integrable
cases correspond to the stationary flows of three integrable fifth-order
polynomial nonlinear evolution equations.
The Lagrangian corresponding to the Hamiltonian \eqref{eq:HHH} is:
\begin{equation}
    L_{\text{HH}} = 
    \half \left( \dot{q}_{1}^{2}+  \dot{q}_{2}^{2} \right)
    -\frac{\omega_{1}^{2}}{2}q_{1}^{2}
    -\frac{\omega_{2}^{2}}{2} q_{2}^{2}
    -\alpha  \left( q_{1}^{2}q_{2} + \beta q_{2}^{3} \right) 
    \label{eq:HHL}
\end{equation}
and its Euler-Lagrange equations are given by:
\begin{subequations}
    \begin{align}
        \ddot{q}_{1}+\omega_{1}^{2}{  q_{1}} +2 \alpha {  q_{1}} {  q_{2}} =0,
        \label{eq:HHeqa}
        \\
        \ddot{q}_{2}+ \omega_{2}^{2}{  q_{2}} 
        +\alpha  \left( q_{1}^{2} +3 \beta  q_{2}^{2} \right) =0.
        \label{eq:HHeqb}
    \end{align}
    \label{eq:HHeqs}
\end{subequations}
It is easy to verify that $\left( q_{1},q_{2} \right)= \left( 0,0 \right)$ 
is a stable equilibrium position for the system \eqref{eq:HHeqs}. 
From equation \eqref{eq:pertv} we introduce:
\begin{equation}
    q_{1} = \varepsilon Q_{1},\quad q_{2} = \varepsilon Q_{2}.
    \label{eq:HHv}
\end{equation}
The linearized system is given by:
\begin{subequations}
    \begin{align}
        \ddot{Q}_{1}+\omega_{1}^{2}{Q_{1}} =0,
        \label{eq:HHeqlina}
        \\
        \ddot{Q}_{2}+ \omega_{2}^{2}{Q_{2}} =0,
        \label{eq:HHeqlinb}
    \end{align}
    \label{eq:HHeqslin}
\end{subequations}
and it is already in normal form.
The fundamental frequencies are 
$\boldsymbol{\omega}=\left( \omega_{1},\omega_{2} \right)$.
Therefore, it seems at this stage that every $\omega_{1},\omega_{2}\in\R^{+}$
with rational ratio can give rise to a maximally superintegrable
system.
To check what happens at higher order we perform the multiple-scale expansion 
of $Q_{1}$ and $Q_{2}$, with the three trivial time scales $t_{i}$ \eqref{eq:trivialts} 
with $i=0,1,2$:
\begin{equation}
    Q_{k} = Q_{k}^{(0)}\left(t_{0},t_{1},t_{2} \right)
    +\varepsilon Q_{k}^{(1)}\left( t_{0},t_{1},t_{2} \right)
    +\varepsilon^{2} Q_{k}^{(2)}\left( t_{0},t_{1},t_{2} \right)
    + \BigO{\varepsilon^{3}},
    \quad k=1,2.
    \label{eq:HHvpert}
\end{equation}
We substitute \eqref{eq:HHvpert} into \eqref{eq:HHv}
and then into \eqref{eq:HHeqs}. Solving the obtained equations
we get the following solution:
\begin{subequations}
    \begin{align}
        Q_{1} &= R_{1} 
        \cos\left[ 
            \left( \omega_{1}+\omega_{1}^{(2)}\varepsilon^{2} \right)t
        +\varphi_{1}\right]
        +\BigO{\varepsilon},
        \label{eq:HHv1}
        \\
        Q_{2} &= R_{2} 
        \cos\left[ 
            \left( \omega_{2}+\omega_{2}^{(2)}\varepsilon^{2} \right)t
        +\varphi_{2}\right]
        +\BigO{\varepsilon},
        \label{eq:HHv2}
    \end{align}
    \label{eq:HHvi}
\end{subequations}
where $R_{1}$, $R_{2}$, $\varphi_{1}$, $\varphi_{2}$ are integration
constants and (provided  $\omega_2 \neq 2 \omega_1$):
\begin{subequations}
    \begin{align}
        \omega_{1}^{(2)} & =
        -\frac{1}{4} {\frac {{\alpha}^{2} 
        \left( 8 R_{1}^{2}\omega_{1}^{2}-3  R_{1}^{2}\omega_{2}^{2}
        +4 R_{2}^{2}\omega_{2}^{2} +24 \beta R_{2}^{2}\omega_{1}^{2}
        -6 \beta R_{2}^{2}\omega_{2}^{2} \right)}{%
            {  \omega_{1}} \omega_{2}^{2} \left( 4 \omega_{1}^{2}-\omega_{2}^{2} \right) }}, 
        \label{eq:HHom12}
        \\
        \omega_{2}^{(2)} &=
        -\frac{1}{4} {\frac {{\alpha}^{2} 
        \left( 4 R_{1}^{2}\omega_{2}^{2}+60 {\beta}^{2}R_{2}^{2}\omega_{1}^{2}
        -15 {\beta}^{2}  R_{2}^{2}\omega_{2}^{2}+24 \beta R_{1}^{2}\omega_{1}^{2}
        -6 \beta \omega_{2}^{2}R_{1}^{2} \right) }{\omega_{2}^{3} 
        \left( 4 \omega_{1}^{2}-\omega_{2}^{2} \right) }}.
        \label{eq:HHom22}
    \end{align}
    \label{eq:HHom2}
\end{subequations}
Clearly, the ratio:
\begin{equation}
    \Delta^{(2)} \doteq  \frac{\omega_{1}+\omega_{1}^{(2)}\varepsilon^{2}}{\omega_{2}+\omega_{2}^{(2)}\varepsilon^{2}}, 
    \label{eq:HHratio}
\end{equation}
is no longer a rational number independently from the values of
the initial conditions.
This means that, for arbitrary values of the parameters, the generalized H\'enon-Heiles system \eqref{eq:HHH}
cannot be (maximally) superintegrable.
We can ask ourselves if there are some superintegrable subcases
of the generalized H\'enon-Heiles system \eqref{eq:HHH} 
trying to annihilate the terms depending on the initial conditions
in \eqref{eq:HHratio}.
The procedure is the following: we expand \eqref{eq:HHratio} in Taylor
series with respect to $\varepsilon$ and then try to annihilate
the terms depending on the initial conditions.
To this end, as a first step, it is sufficient to look for an
expansion up to $\varepsilon^{2}$:
\begin{equation}
    \Delta^{(2)} = \frac{\omega_{1}}{\omega_{2}}
    +
    \frac{1}{4}\frac{\alpha^{2}}{\omega_1\omega_2^5\left(4\omega_1^{2}-\omega_2^{2}\right)}
    \left[ 
        \begin{aligned}
            &\phantom{+}\left(24\beta \omega_1^4 -4 \omega_2^2 \omega_1^2
            +3 \omega_2^4-6  \beta \omega_1^2 \omega_2^2\right) R_{1}^2
            \\
            &+\left(60 \beta^2 \omega_1^4-24 \beta \omega_1^2  \omega_2^2
            +6 \beta \omega_2^4 -4 \omega_2^4-15 \beta^2 \omega_1^2 \omega_2^2\right) R_{2}^2
        \end{aligned}
    \right]
    \epsilon^2+\BigO{\varepsilon^{3}}.
    \label{eq:HHrationexp}
\end{equation}
In \eqref{eq:HHrationexp} the initial conditions are represented
by $R_{1}$ and $R_{2}$. 
Therefore, since \eqref{eq:HHrationexp} is a polynomial in
$R_{1}$ and $R_{2}$, to make it independent of the initial conditions
we can take just coefficients with
respect to $R_{1}$ and $R_{2}$ and annihilate them separately.
Doing so we find that
the parameter of a possible maximally superintegrable H\'enon-Heiles 
system should satisfy the following equations:
\begin{subequations}
    \begin{align}
        \alpha^{2}\left(
        24\beta \omega_1^4 -4 \omega_2^2 \omega_1^2
        +3 \omega_2^4-6 \beta \omega_1^2  \omega_2^2\right) &=0,
        \label{eq:supHHa}
        \\
        \alpha^{2}\left(60 \beta^2 \omega_1^4-24 \beta \omega_1^2  \omega_2^2
        +6  \beta \omega_2^4-4 \omega_2^4-15 \beta^2 \omega_1^2 \omega_2^2\right)&=0.
        \label{eq:supHHb}
    \end{align}
    \label{eq:supHH}
\end{subequations}
Discarding the trivial solution $\alpha=0$, which would imply that
the system is \emph{linear}, we obtain the following values for the 
parameters\footnote{We discard two complex-conjugate
solutions which do not satisfy the requirements on the parameter $\omega_{2}$.}:

    \begin{equation}
        \omega_{2} = \pm  \frac{2}{3}\omega_{1} \sqrt{6\sqrt{11}-15} \, , \quad 
          \beta = \frac{2}{15}
        (7 \sqrt{11}-27).
        \label{eq:betaHH}
    \end{equation}
Unfortunately, this result is incompatible with the zero-th order
condition on $\omega_{1}$ and $\omega_{2}$, since the ratio 
$\omega_{1}/\omega_{2}$ is not a rational
number.
We can therefore conclude that the generalized H\'enon-Heiles system
is not (maximally) superintegrable and no (maximally) superintegrable subcases
exist.
This result is consistent with the literature. 
On the other hand, we see that from this approach we obtained no information 
about the various integrable cases discussed above. 
The integrable subcases are indistinguishable from the chaotic ones.

To give the feeling of the form of the trajectories
of the generalized H\'enon-Heiles system \eqref{eq:HHH} 
we show some examples in Figure \ref{fig:henonheiles}, where
it is possible to appreciate the fact that the trajectories are
not closed.
\begin{figure}[h!bt]
\centering
\subfloat[][Sawada-Kotera case with $\omega_{1}=\omega_{2}=1$.]{%
    \includegraphics[width=0.5\textwidth]{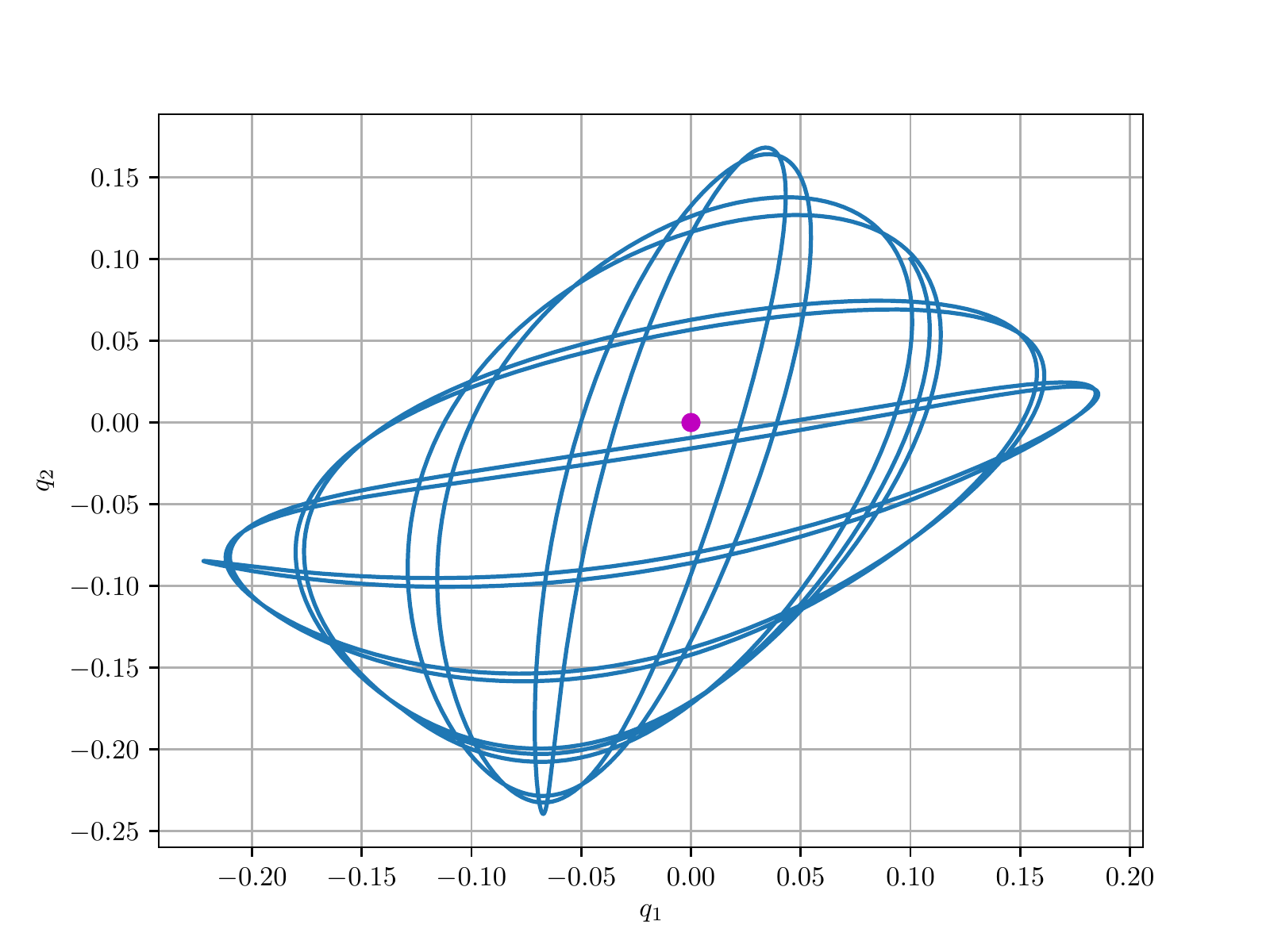}
    }
    \subfloat[][KdV case with $\omega_{1}=1$, $\omega_{2}=5$.]{
    \includegraphics[width=0.5\textwidth]{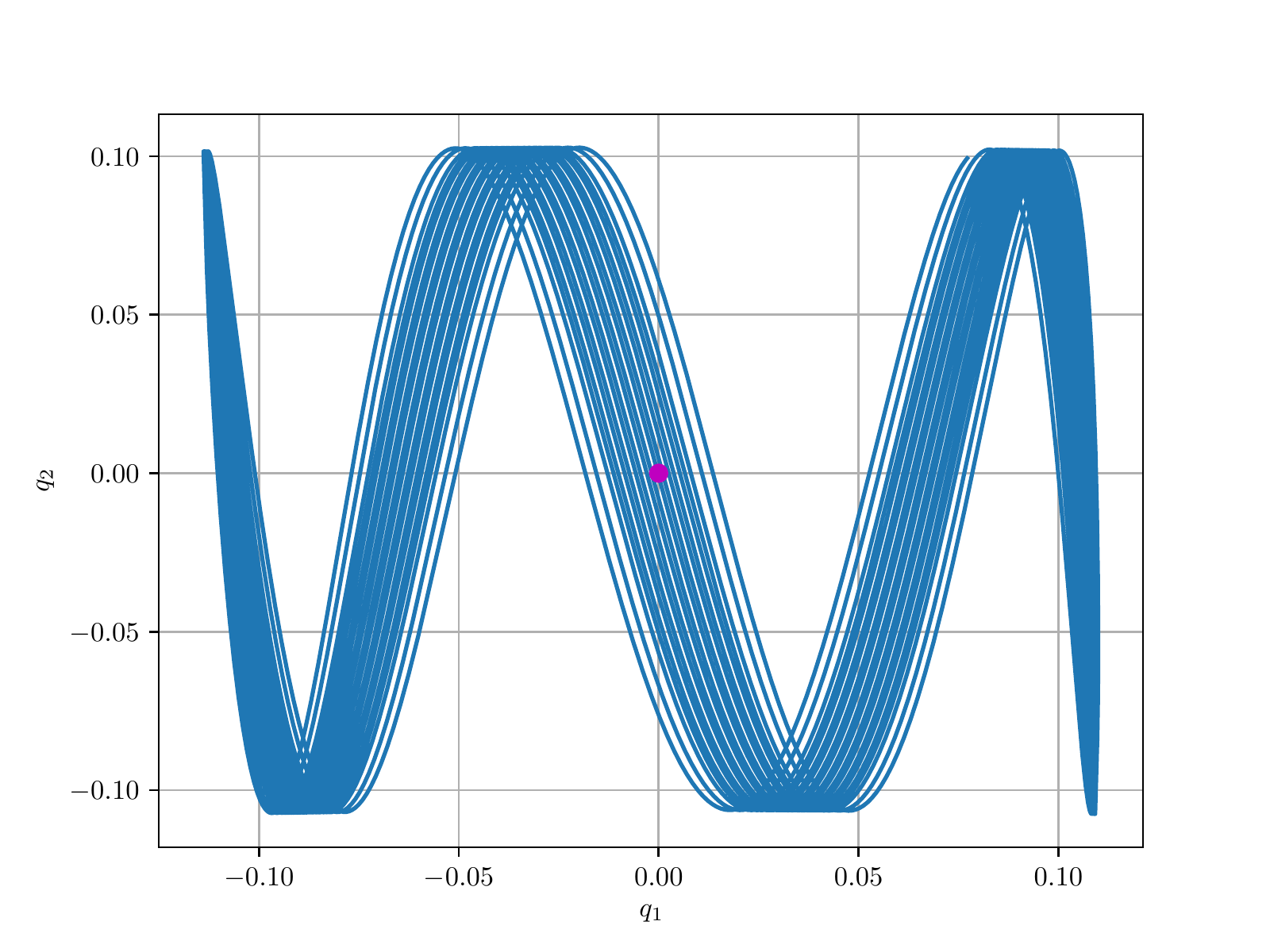}
}
\\
\subfloat[][Kaup-Kupershmidt case with $\omega_{1}=1$, $\omega_{2}=4$.]{
    \includegraphics[width=0.5\textwidth]{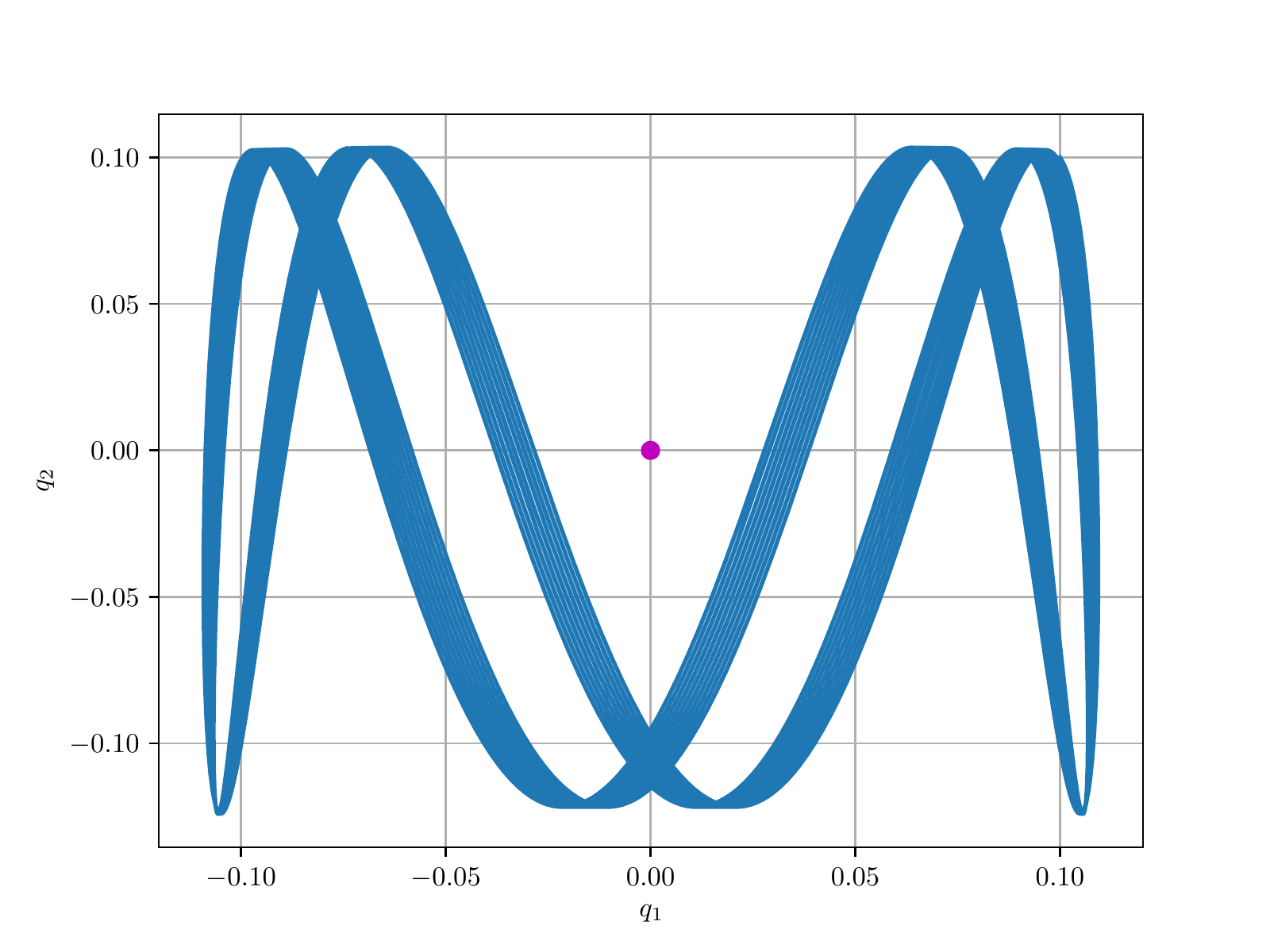}
}
\subfloat[][H\'enon-Heiles case with $\omega_{1}=\omega_{2}=1$.]{
    \includegraphics[width=0.5\textwidth]{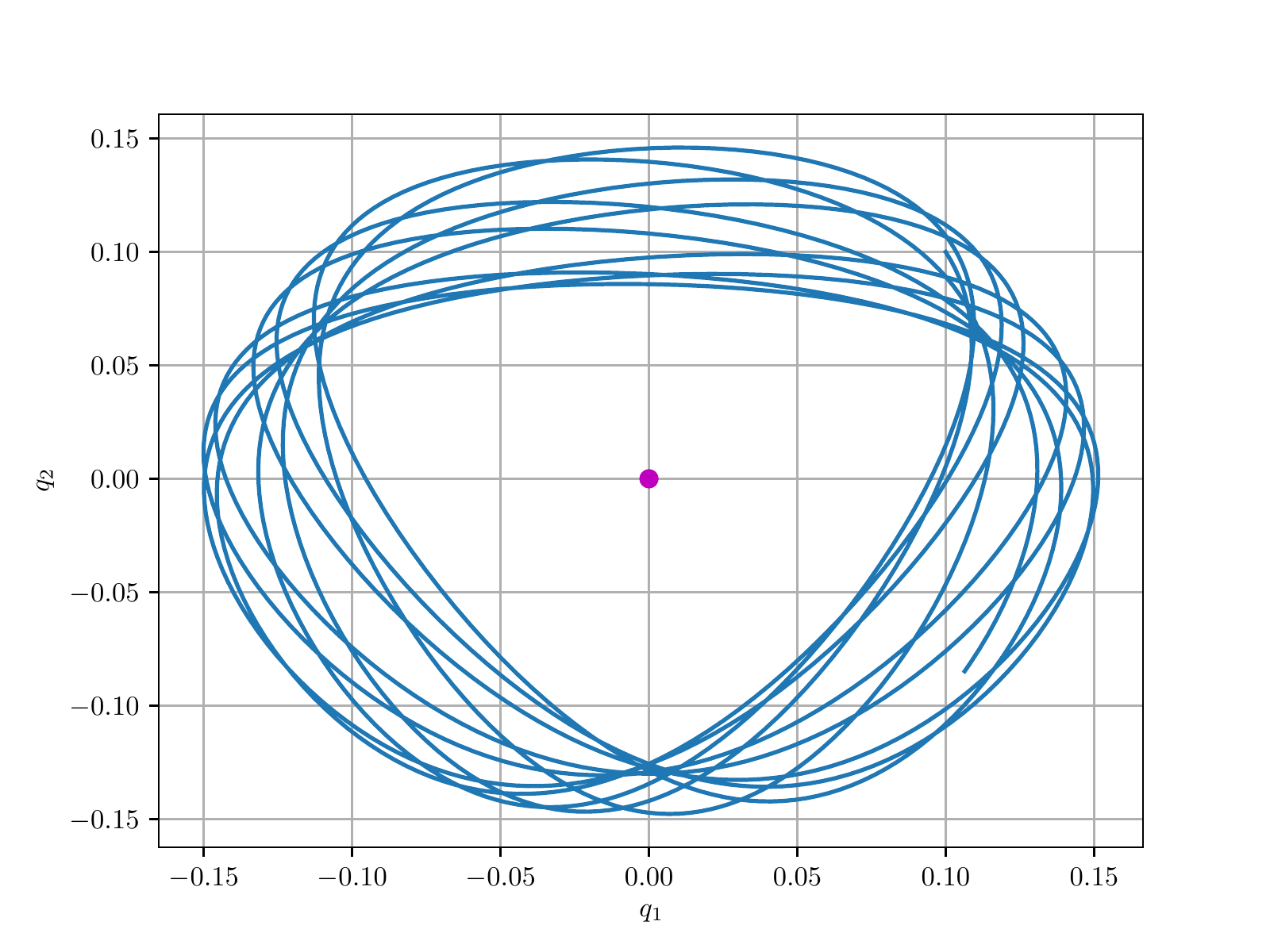}
}
\caption{Trajectories of the generalized H\'enon-Heiles Hamiltonian
    \eqref{eq:HHH} in the various integrable cases and in the chaotic
    case. We used the \texttt{odeint} integrator from \texttt{scipy} \cite{scipy}
    with a regular mesh of $N=2^{25}$ points in the interval 
    $[0,10\times2\pi/\omega_{1}]$ and $\alpha=2.1$.
    The magenta dot is the stable equilibrium point $(0,0)$.}
\label{fig:henonheiles}
\end{figure}

\begin{remark}
    We underline the fact that our method,
    as a consequence of Corollary \ref{cor:nonperiodic}, is not
    able to distinguish between integrable and non-integrable cases.
    The integrable cases of the generalized H\'enon-Heiles system 
    \eqref{eq:HHH} were found in 
    \cite{ChangTaborWeiss1982,BountisSegurVivaldi1982,GrammaticosDorizziPadjen1982}
    by means of the so-called Painlev\'e test, i.e requiring
    that the solutions of the equations of motion possess the
    Painlev\'e property \cite{Painleve1900,Painleve1902,Picard1889,Gambier1910,Fuchs1906}.
    It is said that a differential equation possesses the Painlev\'{e} property when 
    the only movable singularities are poles (for a modern exposition
    see \cite{painleveproperty1999,Hone2009}).
    Therefore, one may think that our method is stronger
    than the Painlev\'e test which instead can also yield integrability
    and not maximal superintegrability.
    However, it is easy to see that statements made using the
    Painlev\'e test and our approach are usually not comparable, i.e.
    when one method fails, the other can succeed and viceversa.
    To make clear this point, let us discuss the example of the \emph{anisotropic caged oscillator}.
\end{remark}

\subsection{The anisotropic caged oscillator}

\label{sss:anisco}

Let us consider the Hamiltonian:
\begin{equation}
    H_{\text{c.o.}} = \half\left( p_{1}^{2}+p_{2}^{2}+p_{3}^{2} \right)
    + \frac{\omega^{2}}{2} 
    \left( l_{1}^{2}q_{1}^{2}+l_{2}^{2}q_{2}^{2}+l_{3}^{2}q_{3}^{2} \right)
    +\frac{\alpha_{1}}{2q_{1}^{2}}+\frac{\alpha_{2}}{2q_{2}^{2}}+\frac{\alpha_{3}}{2q_{3}^{2}},
    \label{eq:cagedoscillator}
\end{equation}
where $l_k$ are integers, $\omega$ a positive real number and
$\alpha_k$ real constants. This system, which was proved
to be maximally superintegrable in \cite{Evans2008}, represents
a generalization of the so-called Smorodinski-Winternitz oscillator
\cite{Fris1965,Evans1990}, corresponding to the case $l_1=l_2=l_3=1$.
The Lagrangian associated to \eqref{eq:cagedoscillator} is
given by:
\begin{equation}
    L_{\text{c.o.}} = \half\left( \dot{q}_{1}^{2}+\dot{q}_{2}^{2}+\dot{q}_{3}^{2} \right)
    - \frac{\omega^{2}}{2} 
    \left( l_{1}^{2}q_{1}^{2}+l_{2}^{2}q_{2}^{2}+l_{3}^{2}q_{3}^{2} \right)
    -\frac{\alpha_{1}}{2q_{1}^{2}}-\frac{\alpha_{2}}{2q_{2}^{2}}-\frac{\alpha_{3}}{2q_{3}^{2}},
    \label{eq:cagedoscillatorlagr}
\end{equation}
and the Euler-Lagrange equations corresponding to \eqref{eq:cagedoscillatorlagr}
are:
\begin{equation}        
    \ddot{q}_{k}+ l_{k}^{2}\omega^2 q_{k} = \frac{\alpha_{k}}{q_{k}^3},
    \quad k=1,2,3.
    \label{eq:cagedoscillatorel}
\end{equation}
If $\alpha_{k}$ is positive for $k=1,2,3$ we have the following
equilibrium positions:
\begin{equation}
    q_{k}^{(\pm)} =\pm \left( \frac{\alpha_{k}}{l_{k}^{2}\omega^{2}} \right)^{1/4},
    \quad k=1,2,3.
    \label{eq:cagedoscillatorequil}
\end{equation}
From equation \eqref{eq:pertv} we introduce:
\begin{equation}
    q_{k} = q_{k}^{(\pm)}+\varepsilon Q_{k},
    \quad
    k=1,2,3.
    \label{eq:cagedoscillatorv}
\end{equation}
Then, the linearized system is given by:
\begin{equation}
    \ddot{Q}_{k}+4l_{k}^{2}\omega^{2}{  Q_{k}} =0,
    \quad
    k=1,2,3.
    \label{eq:cagedoscillatoreqlin1}
\end{equation}
The system is already in normal form and
the fundamental frequencies are
$\boldsymbol{\omega} = \left( 2l_{1}\omega,2l_{2}\omega,2l_{3}\omega \right)$.
The ratios of the fundamental frequencies are rational as long as $l_{k}$ are integers.
To check if this property is preserved at higher order we perform 
the multiple-scale expansion of the $Q_{k}$, where $k=1,2,3 \,$, using 
three trivial time scales $t_{i}$ \eqref{eq:trivialts} with $i=0,1,2$:
\begin{equation}
    Q_{k} = Q_{k}^{(0)}\left( t_{0},t_{1},t_{2} \right)
    +\varepsilon Q_{k}^{(1)}\left( t_{0},t_{1},t_{2} \right)
    +\varepsilon^{2} Q_{k}^{(2)}\left( t_{0},t_{1},t_{2} \right)
    + \BigO{\varepsilon^{3}},
    \quad k=1,2,3.
    \label{eq:cagedvpert}
\end{equation}
We substitute \eqref{eq:cagedvpert} into \eqref{eq:cagedoscillatorv}
and then into \eqref{eq:cagedoscillatorel}. Solving the obtained equations
we get the following solution:
\begin{equation}
    Q_{k} = R_{k} \cos\left( 2l_{k}\omega+\varphi_{k} \right),
    \quad
    k=1,2,3,
    \label{eq:cagedoscillatorvi}
\end{equation}
where $R_{k}$ and $\varphi_{k}$ are integration constants.
We have then that the first order condition is unaltered,
and our approach gives affirmative output, suggesting
maximal superintegrability.
On the other hand the system \eqref{eq:cagedoscillatorel}
does not pass the Painlev\'e test\footnote{Following
    \cite{Hone2009}, since the system \eqref{eq:cagedoscillatorel} is
rational in $\vec{q}$, the Painlev\'e test is applicable.}.
In fact, let us assume to have
a movable singular point $t_{0}$ with the following
behavior in a neighborhood of $t_{0}$:
\begin{equation}
    q_{k} = c^{(k)}_{0} (t-t_{0})^{\mu_{k}}
    + \BigO{\left|t-t_{0}\right|^{\mu_{k}+1}},
    \quad
    k=1,2,3.
    \label{eq:fundterm}
\end{equation}
Inserting this ansatz into the equations of motion
\eqref{eq:cagedoscillatorel},
the possible balances which yields the value of
the $\mu_{k}$, are $\mu_{k}=0,1/2$
for $k=1,2,3$.
This means that the solutions of
\eqref{eq:cagedoscillatorel} is either expressible in
Taylor series, then no singularities occur, or it possesses
a branch cut, i.e. the behavior is algebraic. 
In both cases it does not possess movable poles.
Moreover, it is easy to show that the series expansion of
the solutions of \eqref{eq:cagedoscillatorel} 
when $\mu_{k}=1/2$ is of the form:
\begin{equation}
    q_{k} = \sum_{n=0}^{\infty} c_{n}^{(k)} \left( t-t_{0} \right)^{n+1/2},
    \label{eq:cagedexp}
\end{equation}
i.e. is a \emph{Puiseux series} \cite{Puiseux1850,Puiseux1851,Shafarevich1994}.
Therefore, using the Painlev\'e analysis it is not possible to infer
the maximal superintegrability of the caged anisotropic oscillator
\eqref{eq:cagedoscillator}, whereas with our method it is.
We underline that it is known that there exists many \emph{integrable}
systems which do not possess the Painlev\'e property, but only
the so-called \emph{weak Painlev\'e property}, which allows
the appearance of branch cuts as in \eqref{eq:cagedexp}.
Examples of this kind of systems can be found in 
\cite{RamaniDorizziGrammaticos1982,DorizziGrammaticosRamani1983,
AbendaFedorov2000,AbendaFedorov2001}.

A complete discussion and some graphs of the trajectories of 
the anisotropic caged oscillator \eqref{eq:cagedoscillator}
can be found in \cite{Evans2008}.

Moreover, we observe that the generalization of this example
to the $n$ dimensional case, i.e. to:
\begin{equation}
    H_{\text{c.o.}}^{(n)} = \sum_{k=1}^{n}
    \left(\frac{p_{k}^{2}}{2}+ \frac{\omega^{2}}{2}l_{k}^{2}q_{k}^{2}+\frac{\alpha_{k}}{2q_{k}^{2}}\right),
    \label{eq:cagedoscillatornd}
\end{equation}
where $l_{k}$ are integers and $\alpha_{k}$ and $\omega$ real numbers,
is trivial.
Using our method we obtain that this system can be superintegrable
whereas using the Painlev\'e analysis we obtain that the behavior
near a movable starting point $t_{0}$ is algebraic.
In both cases the required expansion is in the form \eqref{eq:cagedoscillatorvi}
and \eqref{eq:cagedexp} respectively, with $k$ varying in
$\Set{1,2,\dots,n}$.
Indeed, in \cite{Rodriguez2008} it was showed that the $n$-dimensional 
caged anisotropic oscillator \eqref{eq:cagedoscillatornd} is maximally 
superintegrable.

\noindent In the next Subsection we will discuss another known example where our method suggest maximal superintegrability.

\subsection{The TTW system}
\label{sec:ttw}

In this Subsection we apply the method outlined above
to the Tremblay-Turbiner-Winternitz (TTW) system \cite{TTW2009}, 
namely:
\begin{equation}
    H_{\text{TTW}} = \frac{1}{2} \biggl(p_{r}^{2}+ \frac{p_{\varphi}^{2}}{r^{2}}\biggl)
+ \frac{1}{2}\omega^{2}r^{2}
+ \frac{k^{2}}{2r^{2}}\left( \frac{\alpha}{\cos^{2}(k\varphi)}
+ \frac{\beta}{\sin^{2}(k\varphi)}\right),
\label{eq:ttw}
\end{equation}
where  $r$ and $\varphi$ are polar coordinates and 
$p_{r}$ and $p_{\varphi}$ the associated generalized momenta.
The TTW system \eqref{eq:ttw} was introduced in \cite{TTW2009} where
it was shown to be integrable.
Moreover the authors analyzing the structure of the
solutions suggested that the system should have been superintegrable
(and hence maximally superintegrable since we are in the plane)
for every rational $k$.

The proof of superintegrability was accomplished successively, first for odd $k$ \cite{Quesne2010}
and then in the general case for rational $k$ 
\cite{KalninsKress2010,KalninsMillerPogosyan2011}.
In this paper we will not use the original formulation of the
TTW system. Instead, we will use the following one
presented in \cite{RodriguezTempestaTTW}:
\begin{equation}
    \mathcal{H}_{\text{TTW}}=
    k^2 (q_{1}^2+q_{2}^2)^{\frac{k-1}{k}} 
    \left[\half \left(p_{1}^2+p_{2}^2\right)
    +\frac{\omega^2}{2k^2} \left(q_{1}^2+q_{2}^2\right)^{\frac{2}{k}-1}
    +\frac{\alpha}{2q_{1}^2}+\frac{\beta}{2q_{2}^2}\right].
    \label{eq:ttwconf}
\end{equation}
The Hamiltonian \eqref{eq:ttwconf} is obtained from the Hamiltonian
\eqref{eq:ttw} from two successive canonical transformations.
The first one is the following:
\begin{subequations}
    \begin{align}
        r \cos \varphi = \frac{z+\bar{z}}{2}, 
        &\quad
        \cos\varphi \,p_{r} - \frac{\sin\varphi}{r} \,p_{\varphi}=p_{z}+p_{\bar{z}},
        \label{eq:ttwcan1a}
        \\
        r \sin\varphi = \frac{z-\bar{z}}{2\imath}, 
        &\quad
        \sin\varphi\, p_{r} +\frac{\cos\varphi}{r} \,p_{\varphi}=\text{i}(p_{z}-p_{\bar{z}}),
        \label{eq:ttwcan1b}
    \end{align}
    \label{eq:ttwcan1}
\end{subequations}
where $z$ and $\bar{z}$ are complex conjugate variables.
The application of the transformation \eqref{eq:ttwcan1} to
the Hamiltonian \eqref{eq:ttw} yields the new Hamiltonian:
\begin{equation}
    H_{\text{TTW}}^{\C} = 2 p_{z}p_{\bar{z}}
    +\frac{\omega^{2}}{2} z\bar{z}
    + \frac{2\alpha k^{2} z^{k-1}\bar{z}^{k-1}}{\left( z^{k}+\bar{z}^{k} \right)^{2}}
    - \frac{2\beta k^{2} z^{k-1}\bar{z}^{k-1}}{\left( z^{k}-\bar{z}^{k} \right)^{2}}.
    \label{eq:ttwcompl}
\end{equation}
At this point, using the canonical transformation:
\begin{subequations}
    \begin{align}
        q_{1} = \half\left( z^{k} +\bar{z}^{k} \right),
        &\quad
        p_{1} = \frac{1}{k} \left( z^{1-k}p_{z} + \bar{z}^{1-k} p_{\bar{z}} \right),
        \label{eq:ttwcan2a}
        \\
        q_{2} = \frac{1}{2\imath}\left( z^{k} -\bar{z}^{k} \right),
        &\quad
        p_{2} = \frac{\imath}{k} \left(z^{1-k} p_{z}-\bar{z}^{1-k}p_{\bar{z}} \right),
        \label{eq:ttwcan2b}
    \end{align}
    \label{eq:ttwcan2}
\end{subequations}
in the Hamiltonian \eqref{eq:ttwcompl} we obtain 
the Hamiltonian \eqref{eq:ttwconf}.
The form \eqref{eq:ttwconf} is preferable for our analysis, since
the coordinates $\left( q_{1},q_{2} \right)$ are unbounded
and take values in $\R^{2}$. 

\noindent As noted in \cite{RodriguezTempestaTTW}
the form \eqref{eq:ttwconf}, apart from the ``conformal''
factor $\left( q_{1}^{2}+q_{2}^{2} \right)^{\frac{k-1}{k}}$,
is a caged isotropic nonlinear oscillator \cite{Evans1990,Evans2008}.
Now, we will show an argument which can suggest the superintegrability
for rational $k$ of the TTW system in the form \eqref{eq:ttwconf}.
The Lagrangian corresponding to \eqref{eq:ttwconf} is given by:
\begin{equation}
    \mathcal{L}_{\text{TTW}} =
    \frac{1}{2k^{2}}  
    \left( q_{1}^{2}+q_{2}^{2} \right) ^{-{\frac {k-1}{k}}}
    \left( \dot{q}_{1}^{2}+\dot{q}_{2}^{2} \right)
    -\frac{\omega^{2}}{2}  \left( q_{1}^{2}+q_{2}^{2} \right) ^{\frac{1}{k}} 
    -\frac{{k}^{2}}{2} \left( q_{1}^{2}+q_{2}^{2} \right)^{{\frac {k-1}{k}}}
    \left( \frac{\alpha}{q_{1}^{2}}+\frac{\beta}{q_{2}^{2}} \right) .
\label{eq:lagrttw}
\end{equation}
Its Euler-Lagrange equations are:
\begin{subequations}
\begin{align}
    \begin{aligned}
        \phantom{+k}\left(  q_{1}^{2}+q_{2}^{2} \right) ^{{\frac {1}{k}}-1} 
        \left( \frac{1}{k^{2}}\ddot{q}_{1}+ \frac{{\omega}^{2}}{k} q_{1}\right) 
        &-\frac{1}{k^{3}} \left( k-1 \right) \left(  q_{1}^{2}+ q_{2}^{2} \right) ^{-{\frac {2k-1}{k}}}   
        \left[ q_{1}  \left( \dot{q}_{1}^{2} - \dot{q}_{2}^{2} \right)
        +2  q_{2} \dot{q}_{1}\dot{q}_{2}  \right]
        \\
        &-k \left(q_{1}^{2}+ q_{2}^{2} \right) ^{-{\frac {1}{k}}} 
        \left[ \frac{\alpha}{q_{1}^{3}}
        \left(q_{1} ^{2}+k q_{2}^{2} \right) 
        -\frac{\beta}{q_{2}^{2}}
        \left( k-1 \right) q_{1} 
        \right]=0,
    \end{aligned}
    \label{eq:elttwa}
    \\
    \begin{aligned}
        \phantom{+k}\left(  q_{1}^{2}+q_{2}^{2} \right) ^{{\frac{1}{k}-1}} 
        \left( \frac{1}{k^{2}}\ddot{q}_{2}+\frac{{\omega}^{2}}{k}q_{2} \right)  
       & -\frac{1}{k^{3}} \left( k-1 \right) \left(  q_{1}^{2}+ q_{2}^{2} \right) ^{-{\frac {2k-1}{k}}}   
        \left[ q_{2}  \left( \dot{q}_{2}^{2} - \dot{q}_{1}^{2} \right)
        +2  q_{1} \dot{q}_{1}\dot{q}_{2}  \right]
        \\
        &+k \left(q_{1}^{2}+ q_{2}^{2} \right) ^{-{\frac {1}{k}}} 
        \left[ 
        \frac{\alpha}{q_{1}^{2}}
        \left( k-1 \right) q_{2} 
        -\frac{\beta}{q_{2}^{3}}
        \left(k q_{1} ^{2}+q_{2}^{2} \right) 
        \right]=0.
    \end{aligned}
    \label{eq:elttwb}
\end{align}
\label{eq:elttw}
\end{subequations}
\noindent Restricting to the case $\alpha,\beta>0$ we can define 
$\alpha \doteq A^{4}$ and $\beta \doteq B^{4}$ with $A,B>0$.
In this case the \emph{real} equilibrium positions of the TTW 
system \eqref{eq:lagrttw} are given by:
\begin{equation}
q_{1}^{(\pm)} =\pm A \left( \frac{k}{\omega} \right)^{\frac{k}{2}}
\left( A^{2}+B^{2} \right)^{\frac{k-1}{2}},
\quad
q_{2}^{(\pm)} =\pm B \left( \frac{k}{\omega} \right)^{\frac{k}{2}}
\left( A^{2}+B^{2} \right)^{\frac{k-1}{2}}.
\label{eq:ttwequil}
\end{equation}
Due to the fact that the system \eqref{eq:lagrttw} is symmetric
under the discrete transformations $q_{1}\to-q_{1}$ and
$q_{2}\to-q_{2}$ we can consider only the equilibrium positions
labeled by $(+,+)$ in \eqref{eq:ttwequil}.
From equation \eqref{eq:pertv} we introduce:
\begin{equation}
q_{1} = q_{1}^{(+)} + \varepsilon Q_{1},
\quad
q_{2} = q_{2}^{(+)} + \varepsilon Q_{2}.
\label{eq:vttw}
\end{equation}
Inserting equation \eqref{eq:vttw} into the 
Euler-Lagrange equations \eqref{eq:elttw} 
and expanding in Taylor series with respect to $\varepsilon$,
we obtain as coefficients of $\varepsilon$
the following linearized equations:
\begin{subequations}
\begin{align}
    \ddot{Q}_{1} +{\frac {4{\omega}^{2}
    \left( {A}^{2}+{k}^{2}{B}^{2} \right)}{{A}^{2}+{B}^{2}}} Q_{1}
    +{\frac {4{\omega}^{2}AB \left( 1-{k}^{2} \right) }{{A}^{2}+{B}^{2}}} Q_{2}=0 \, ,
    \\
    \ddot{Q}_{2} + 
    {\frac {4{\omega}^{2}AB \left( 1-{k}^{2} \right)}{{A}^{2}+{B}^{2}}} Q_{1}
    +{\frac {4{\omega}^{2} \left( {k}^{2}{A}^{2}+{B}^{2} \right)}{{A}^{2}+{B}^{2}}}Q_{2}=0 \, .
    \label{eq:ellinttwb}
\end{align}
\label{eq:ellinttw}
\end{subequations}
We introduce the normal coordinates $N_{1}$ and $N_{2}$
through the linear transformation:
\begin{equation}
\begin{bmatrix}
Q_1     \\
Q_2      
\end{bmatrix}=\begin{bmatrix}
A       &  B\\
B & -A \\
\end{bmatrix}\begin{bmatrix}
N_1     \\
N_2      
\end{bmatrix}
\label{eq:lintransfttw}
\end{equation}
obtaining from \eqref{eq:ellinttw}
the following linearized system:
\begin{equation}
    \ddot{N}_{1}+4\omega^{2} N_{1}=0,
    \quad
    \ddot{N}_{2}+4k^{2}\omega^{2} N_{2}=0.
\label{eq:ellintransfttw}
\end{equation}
In this case the fundamental frequencies
are $\boldsymbol{\omega} = \left( 2\omega, 2 k \omega \right)$, which
means that we can expect (maximal) superintegrability only if
$k$ is rational, just as suggested in \cite{TTW2009}.
Now, we use a multiple-scale expansion to show that 
at higher orders the periodicity is preserved.
Thus, this time we perform a multiple-scale
expansion in terms of the normal coordinates
$N_{1}$ and $N_{2}$, with the 
three trivial time scales $t_{i}$ \eqref{eq:trivialts} with $i=0,1,2$:
\begin{equation}
    N_{k} = N_{k}^{(0)}\left( t_{0},t_{1},t_{2} \right)
    +\varepsilon N_{k}^{(1)}\left( t_{0},t_{1},t_{2} \right)
    +\varepsilon^{2} N_{k}^{(2)}\left( t_{0},t_{1},t_{2} \right)
    + O\left( \varepsilon^{3} \right),
\quad k=1,2.
\label{eq:vtpertttw}
\end{equation}
Then, we can insert this expansion in (\ref{eq:vttw}-\ref{eq:lintransfttw})
and into \eqref{eq:elttw}. Expanding in Taylor series with respect to
$\varepsilon$ and taking coefficients up to the second order
yields a sequence of systems which can readily be solved.
Surprisingly enough, if we do not know the properties of the
TTW system \eqref{eq:ttwconf}, we find that there are no corrective terms 
to the fundamental frequencies and the
asymptotic solution, valid up to $\varepsilon^{2} t=\BigO{1}$,
is just given by:
\begin{subequations}
\begin{align}
    N_{1} &= R_{1}\cos\left( 2\omega t + \varphi_{1} \right)
    +\BigO{\varepsilon},
    \label{eq:ttwv2a}
    \\
    N_{2} &= R_{2} \cos\left( 2k\omega t + \varphi_{2} \right)
    +\BigO{\varepsilon},
    \label{eq:ttwv2b}
\end{align}
\label{eq:ttw2}
\end{subequations}
where $R_{k}$ and $\varphi_{k}$ with $k=1,2$ are integration constants. 
This just shows that the ratio is preserved and that the
TTW system \eqref{eq:ttwconf} can be superintegrable.
Moreover, this also shows that the multiple-scale expansion
collapses into a standard perturbative expansion of the form:
\begin{equation}
    N_{k}(t) = \sum_{i=0}^{\infty} \varepsilon^{i}N_{k}^{(i)}(t),\quad
    k=1,2.
\label{eq:ttwstadnexp}
\end{equation}
The fact that there are no corrections to the fundamental
frequencies reflects the property of isochronicity of the
TTW system, since the frequency of oscillation of the
system is independent from the initial values \cite{TTW2010,Calogero2008book}. Without giving a complete account of the form of the orbits of the TTW system \cite{TTW2010,RodriguezTempestaTTW}, 
we show some numerical examples in Figure \ref{fig:ttw}.

\begin{figure}[]
\centering
\subfloat[][$k=1$]{%
    \includegraphics[width=0.5\textwidth]{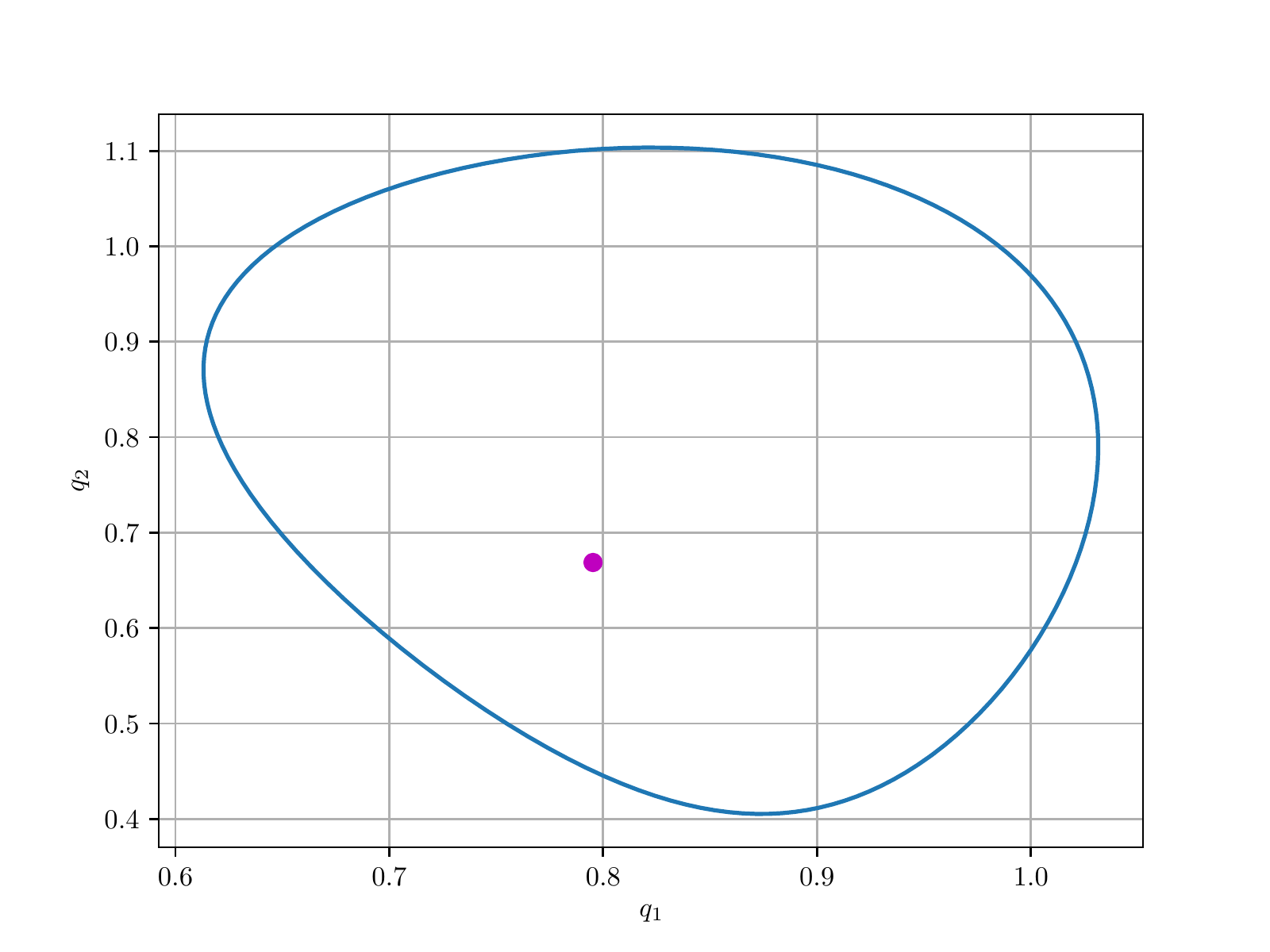}
    }
\subfloat[][$k=2$]{
    \includegraphics[width=0.5\textwidth]{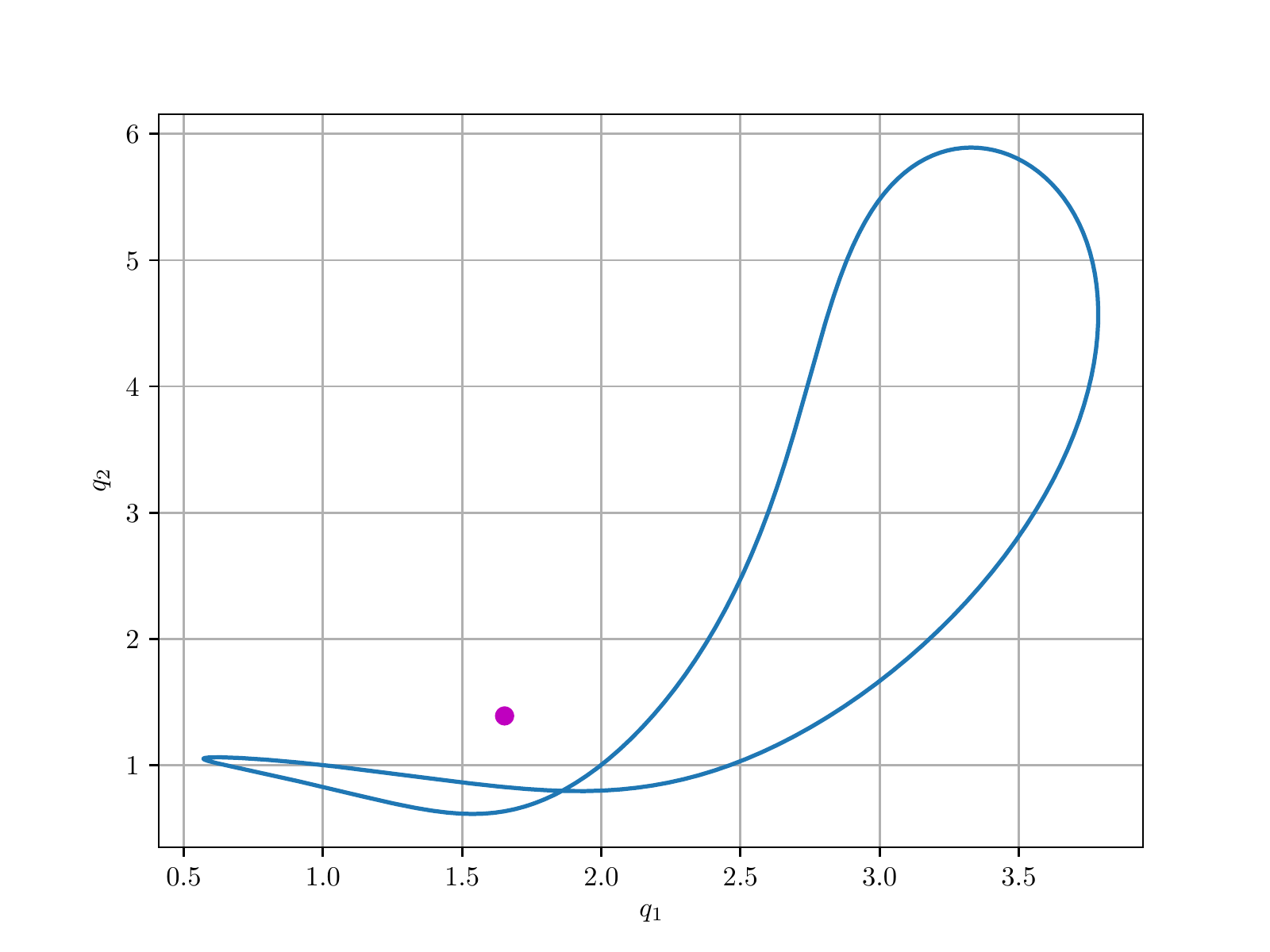}
}
\\
\subfloat[][$k=4/3$]{%
    \includegraphics[width=0.5\textwidth]{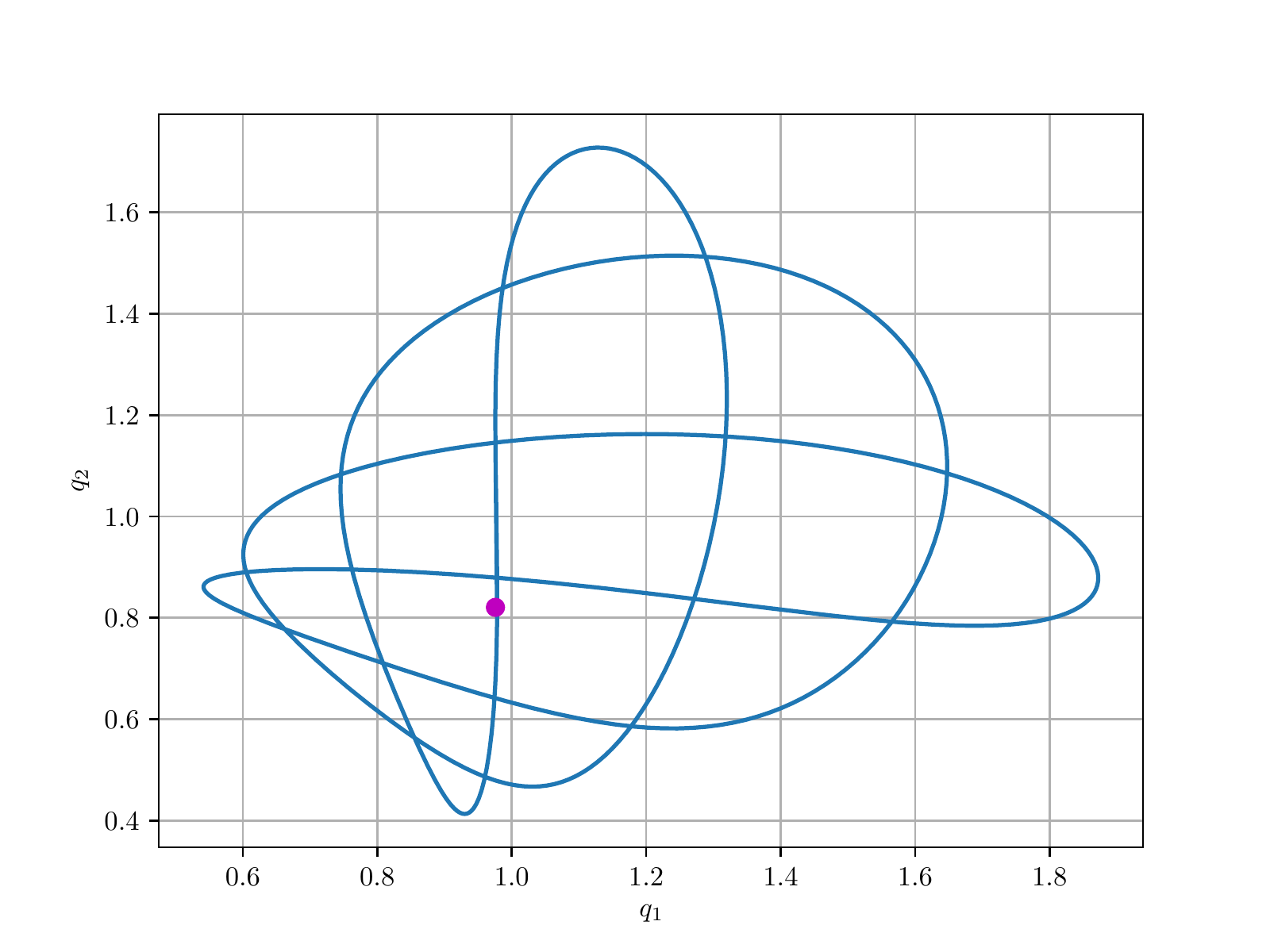}
    }
\subfloat[][$k=4/5$]{
    \includegraphics[width=0.5\textwidth]{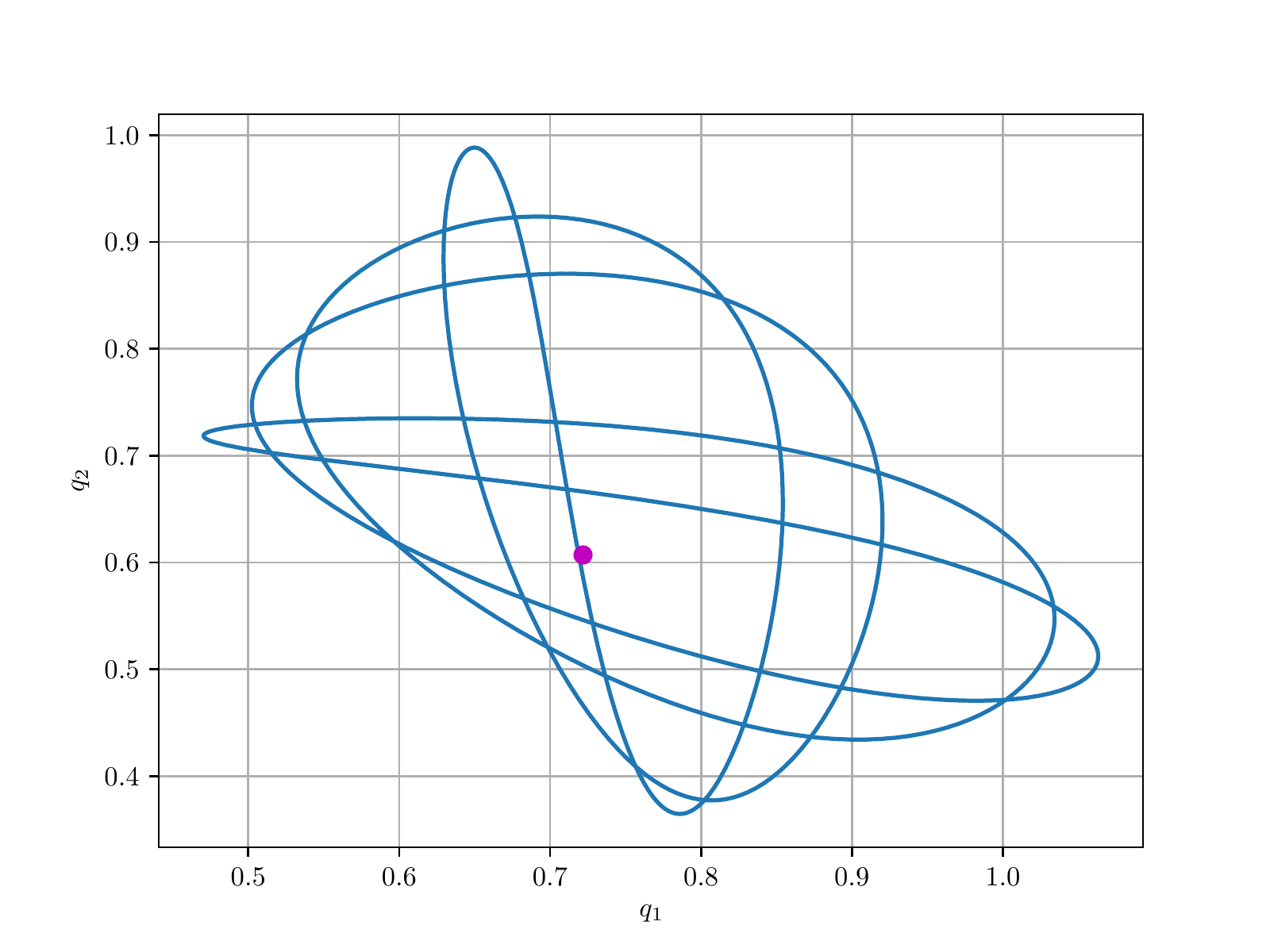}
}
\\
\subfloat[][$k=5/3$]{
    \includegraphics[width=0.5\textwidth]{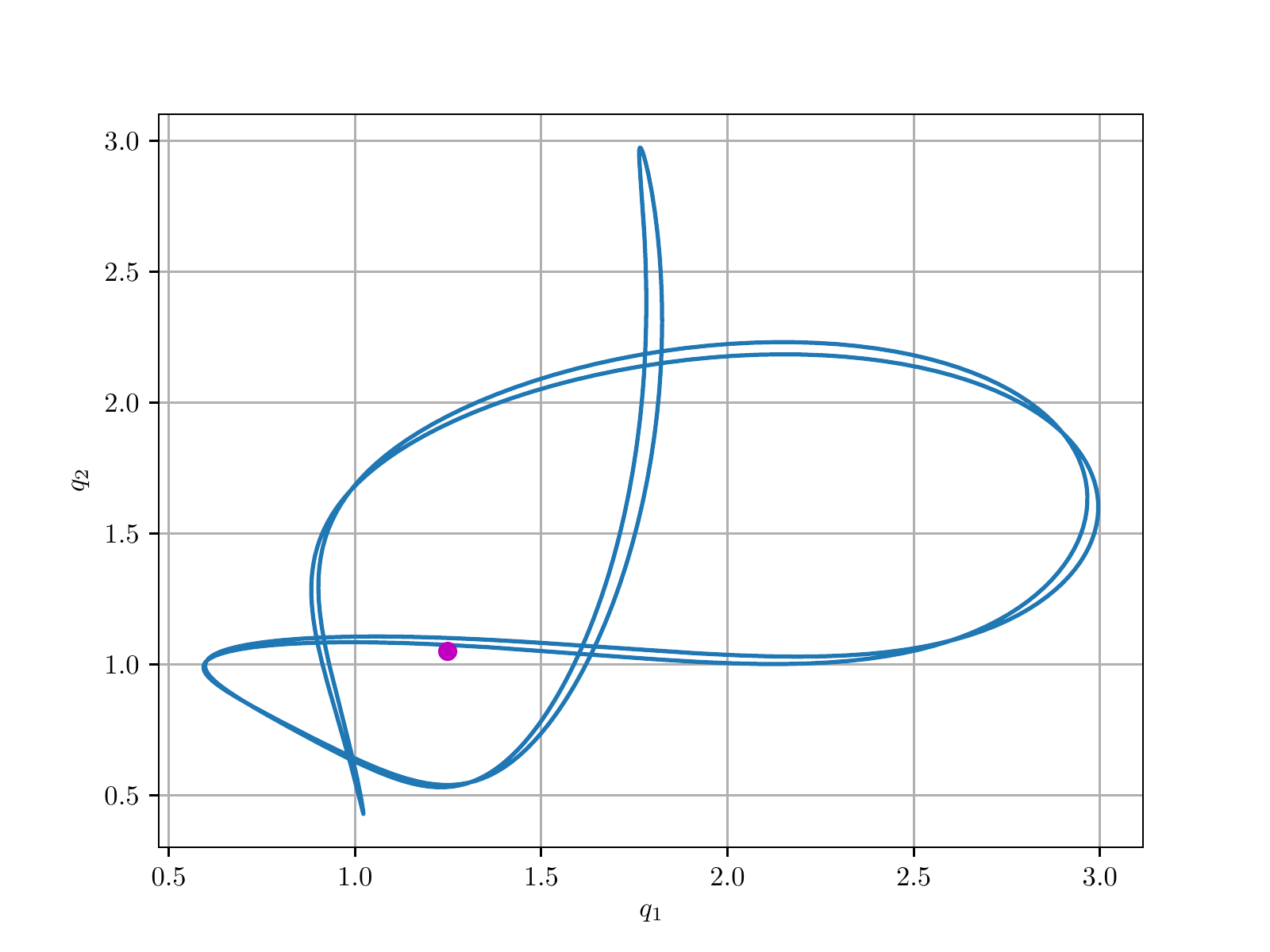}
}
\subfloat[][$k=5/7$]{
    \includegraphics[width=0.5\textwidth]{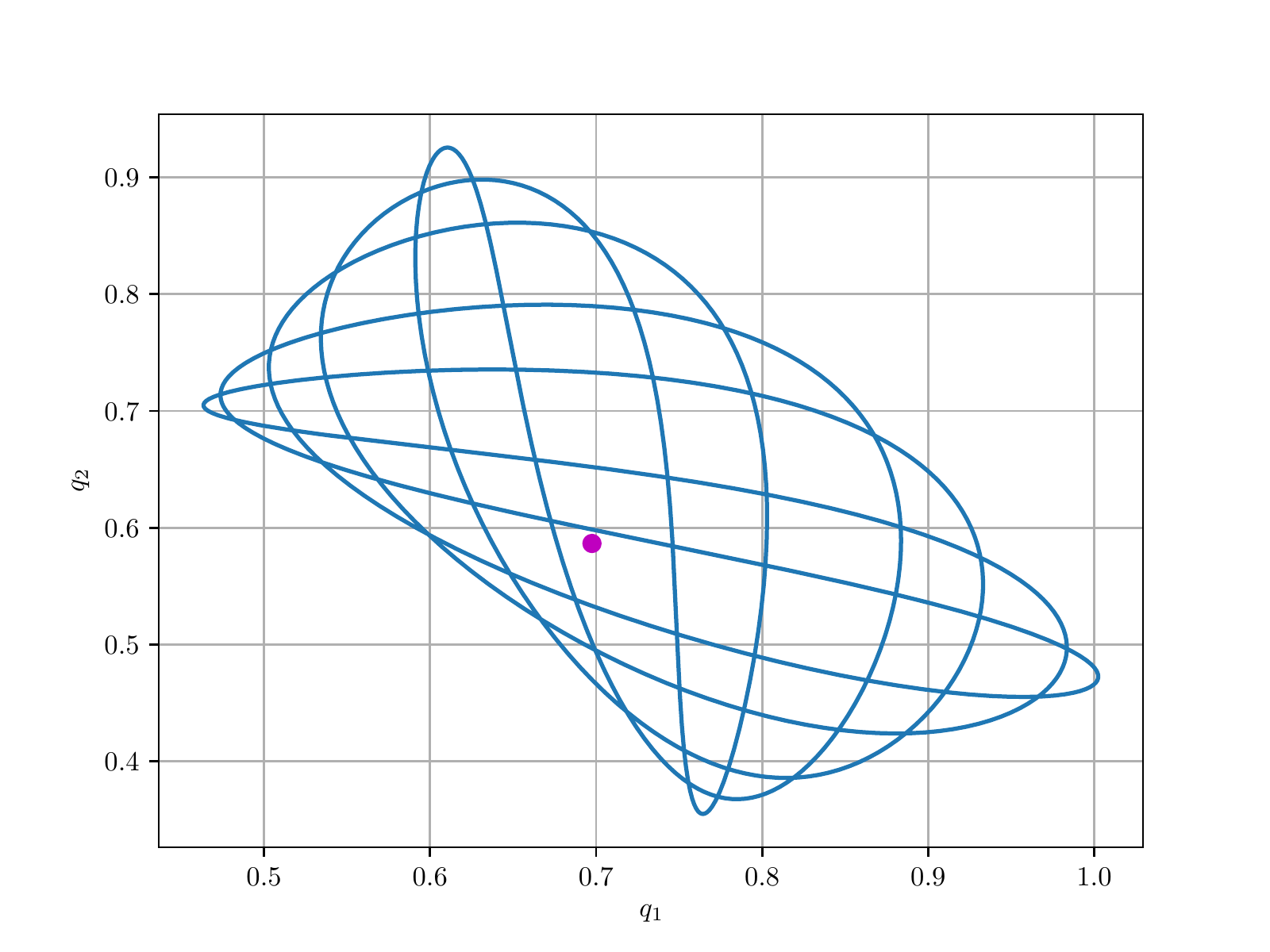}
}

\caption{Trajectories of the TTW Hamiltonian
    \eqref{eq:ttwconf} for different values of $k=k_{\text{n}}/k_{\text{d}}$
    with integer $k_{\text{n}}$ and $k_{\text{d}}$ with
    fixed parameters $\omega=1/2$, $\alpha=1/10$ and $\beta=1/20$.
    We used the \texttt{odeint} integrator from \texttt{scipy} \cite{scipy}
    with a regular mesh of $N=2^{25}$ points in the interval 
    $[0,2\pi k_{\text{n}}k_{\text{d}} /\omega]$.
    The magenta dots are the stable equilibrium points labeled by
    $\left( +,+ \right)$ in \eqref{eq:ttwequil}.}
\label{fig:ttw}
\end{figure}

\subsection{The Drach system}
\label{sec:drach}

In this Subsection we present a new result about what we are going to call the \emph{Drach system}.
In 1935 Drach \cite{Drach1935} carried out the first systematic
search for integrable systems with a third-order integral of
motion in two dimensions. He conducted his search in a
two-dimensional complex space $\mathbb{E}_{2}\left( \C \right)$
and found ten potentials. These potentials have been subject
of further investigations in more recent times 
\cite{Tsiganov2000,Tsiganov2008,TondoTempesta2016}.
In particular one of these potentials can be rewritten 
in real form as:
\begin{equation}
H_{\text{Drach}}=\frac{1}{2}\left(p_1^2+ p_2^2\right)
+\frac{a q_2+b+c (4 q_1^2+3 q_2^2)}{q_1^{2/3}},
\label{eq:Hdrach}
\end{equation}
where \emph{a priori} $a$, $b$ and $c$ are real parameters.
In \cite{PostWinternitz2011} a particular case
of \eqref{eq:Hdrach} with $c=0$ was considered and showed to be
(maximally) superintegrable with a third-order and 
a fourth-order additional integral of motion. 

We will refer to this particular case as the  \emph{Drach-Post-Winternitz} (DPW) system.
The DPW system is important since
its quantum version was the first \emph{nonseparable}
(maximally) superintegrable system ever found.
In \cite{NucciPost2012}, using the so-called \emph{reduction method} \cite{Nucci1996b,Nucci2000,Nucci2001}, it was shown to be reducible to the third-order trivial equation $X'''=0$.

At the end of their paper the authors in \cite{PostWinternitz2011}
state that:
\begin{quote}
``For $c\neq 0$ (31) [here \eqref{eq:Hdrach}] does not allow 
a fourth-order integral, though it still might be superintegrable.''
\end{quote}
leaving open the question whether or not the ``full'' Drach system
\eqref{eq:Hdrach} is (maximally) superintegrable.
Now, we show using the argument proposed in Section \ref{sec:method}
that the Drach system is in general \emph{not} (maximally) superintegrable.
Indeed, the Lagrangian corresponding to \eqref{eq:Hdrach} is:
\begin{equation}
L_{\text{Drach}}=\frac{1}{2}\left(\dot q_1^2+\dot q_2^2\right)-\frac{a q_2+b+c (4 q_1^2+3 q_2^2)}{q_1^{2/3}}.
\label{eq:drach}
\end{equation}
The corresponding Euler-Lagrange equations are:
\begin{subequations}
\begin{align}
    \ddot q_{1} + 8 c q_1^{1/3}
    -\frac{2}{3} \frac{a q_2+b+c (4 q_1^2+3 q_2^2)}{q_1^{5/3}}&=0,
    \label{eq:dracheqa}
    \\
    \ddot q_{2} + \frac{a+6 c q_2}{q_1^{2/3}} &= 0.
    \label{eq:dracheqb}
\end{align}
\label{eq:dracheq}
\end{subequations}
\noindent We have two equilibrium positions:
\begin{equation}
q_{1}^{(\pm)}= \pm\frac{\eta}{24c}, \quad
q_{2}^{(0)} = -\frac{a}{6c},
\label{eq:equil}
\end{equation}
where we have introduced the shorthand notation:
\begin{equation}
72 b c -6 a^2 =\eta^2.
\label{eq:etadef}
\end{equation}
The presence of $b$ here is necessary in order to have
\emph{real} solutions to the equations of the equilibrium conditions.
Therefore, from now on we will assume $b > a^{2}/(12c)$
and $c\neq 0$, otherwise \eqref{eq:equil} loses its meaning.
We observe that the $c=0$ is just the DPW case
\cite{PostWinternitz2011}, which has no equilibrium solutions.  
From equation \eqref{eq:pertv} we obtain:
\begin{equation}
    q_{1} = q_{1}^{(\pm)} + \varepsilon Q_{1}, \quad
    q_{2} = q_{2}^{(0)} + \varepsilon Q_{2}
\label{eq:pert}
\end{equation}
with $q_{1}^{(\pm)}$ and $q_{2}^{(0)}$ given by \eqref{eq:equil}.
The equilibrium position with $q_{1}^{(-)}$ is not stable for
every value of the parameters, whereas the one with $q^{(+)}$, for $c>0$,
yields to the following linearized equations:
\begin{subequations}
\begin{align}
    \ddot Q_{1} + \frac{128}{3}\left(\frac{3}{\eta}\right)^{2/3}
        c^{5/3} Q_{1} &=0,
    \label{eq:drachlina}
    \\
    \ddot Q_{2} +24\left(\frac{3}{\eta}\right)^{2/3}
        c^{5/3}  Q_{2} &=0.
    \label{eq:drachlinb}
\end{align}
\label{eq:drachlin}
\end{subequations}
The system is already in normal form and the
fundamental frequencies are given by:
\begin{equation}
\omega_{1}^{(0)} = \frac{8\sqrt{6}c^{5/6} 3^{1/3}}{3 \eta^{1/3}},
\quad
\omega_{2}^{(0)} = \frac{2 \sqrt{6} c^{5/6} 3^{1/3}}{\eta^{1/3}}.
\label{eq:fundfreq}
\end{equation}
The ratio of the two fundamental frequencies is rational:
\begin{equation}
    \Delta^{(0)}= \frac{\omega_{1}^{(0)}}{\omega_{2}^{(0)}} = \frac{4}{3}.
\label{eq:fundrat}
\end{equation}
At this stage we have that the Drach system \eqref{eq:Hdrach}
might be (maximally) superintegrable. 
To check what happens at higher
order we use a multiple-scale expansion with three trivial
time scales \eqref{eq:trivialts} of the form \eqref{eq:HHvpert}.
We substitute \eqref{eq:HHvpert} into \eqref{eq:pert} and then
into the Euler-Lagrange equations \eqref{eq:dracheq}.
Expanding in Taylor series with respect to $\varepsilon$, and then
carrying out the computations, we get the following solution:
\begin{subequations}
\begin{align}
    Q_{1} = R_{1} 
    \cos\left[ 
        \left( \omega_{1}^{(0)}+\omega_{1}^{(2)}\varepsilon^{2} \right)t
    +\varphi_{1}\right]
    +\BigO{\varepsilon},
    \label{eq:v1mult}
    \\
    Q_{2} = R_{2} 
    \cos\left[ 
        \left( \omega_{2}^{(0)}+\omega_{2}^{(2)}\varepsilon^{2} \right)t
    +\varphi_{2}\right]
    +\BigO{\varepsilon}.
    \label{eq:v2mult}
\end{align}
\label{eq:vimult}
\end{subequations}
Here $R_{i}$ and $\varphi_{i}$ are constants of integration,
$\omega_{i}^{(0)}$ are given by \eqref{eq:fundfreq}
and the corrections to the frequencies are given by:
\begin{subequations}
\begin{align}
    \omega_{1}^{(2)} &=
    -\frac{8}{45}\frac{\sqrt{6}c^{17/6} 3^{1/3} 
        \left(400 R_1^2+513 R_2^2\right)}{\eta^{7/3}},
    \label{eq:omeps2a}
    \\
    \omega_{2}^{(2)} &=
    -\frac{2}{5}\frac{\sqrt{6} c^{17/6} 3^{1/3}
        \left(304 R_1^2+54 R_2^2\right)}{\eta^{7/3}}.
    \label{eq:omeps2b}
\end{align}
\label{eq:omeps2}
\end{subequations}
The ratio of these additional frequencies is no longer given
by \eqref{eq:fundrat} and in fact we can see that:
\begin{equation}
    \Delta^{(2)}=\frac{\omega_{1}^{(0)}+\omega_{1}^{(2)}\varepsilon^{2}}{%
        \omega_{2}^{(0)}+\omega_{2}^{(2)}\varepsilon^{2}}
= \frac{4}{3} + \frac{4}{45}\frac{c^{2}}{\eta^{2}}
\left( 512 R_{1}^{2}-351 R_{2}^{2} \right)\varepsilon^{2}
+ \BigO{\varepsilon^{4}}.
\label{eq:freqrat}
\end{equation}
As discussed in Section \ref{sec:method} this behavior
is not consistent with a (maximally) superintegrable system.
Therefore, we are led to conclude that \emph{in general}
the Drach system is an integrable, but not a superintegrable system.
This does not exclude the possibility of the existence of other
(maximally) superintegrable subcases, especially in the
range of parameters which cannot be covered with our technique.
For example, it is worth mentioning that the DPW system,
which arises as singular limit as $c\to0$, does not possess closed
orbits. This leaves open the question whether or not there exists
other superintegrable subcases without closed orbits. To give the feeling of the form of the trajectories of the Drach system, and of the fact that they are non-periodic we show some numerical examples in Figure \ref{fig:numstud}.
\begin{figure}[h!bt]
	\centering
	\subfloat[][$t=3\omega_1$]{%
		\includegraphics[width=0.5\textwidth]{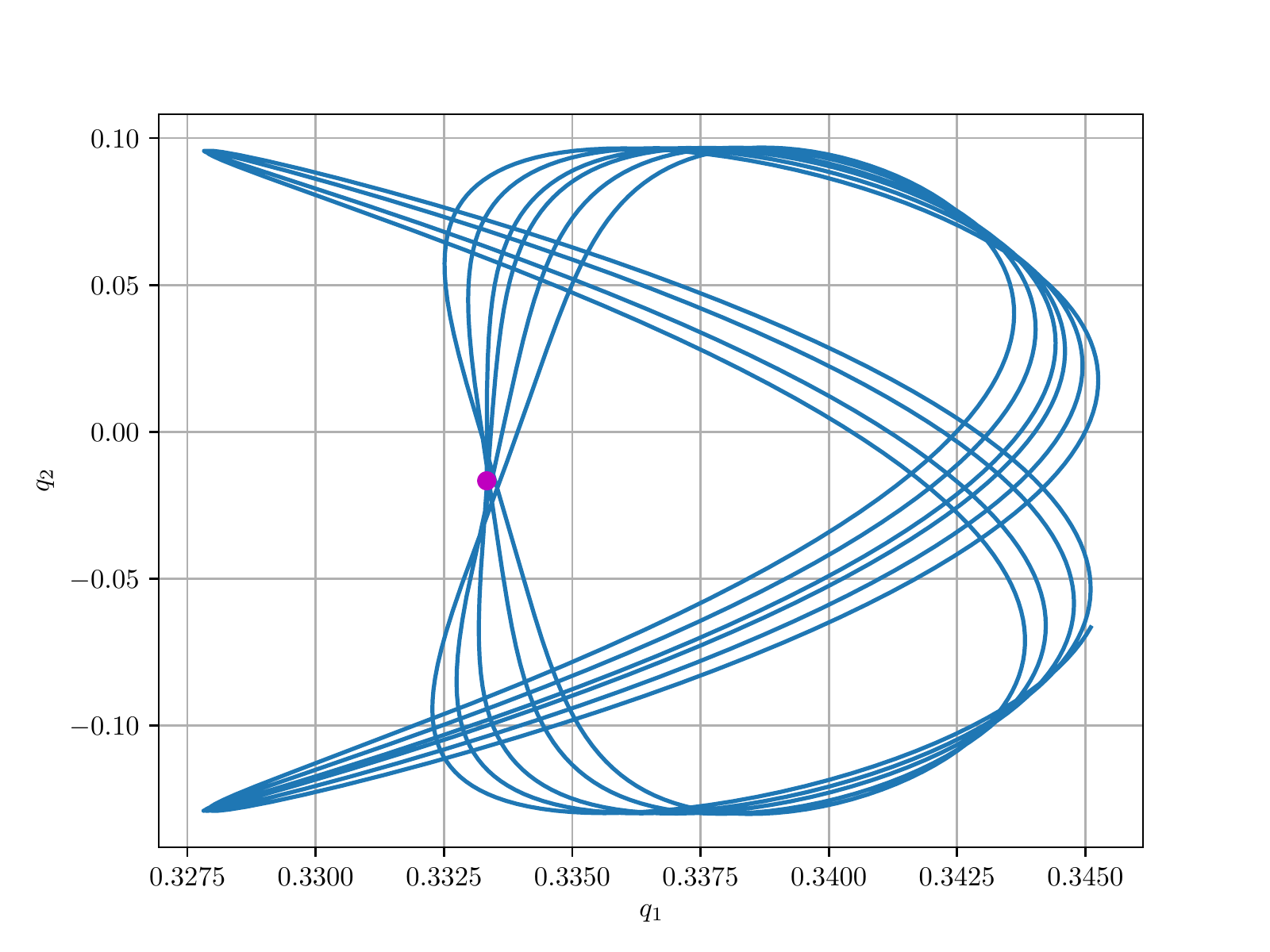}
	}
	\subfloat[][$t=6\omega_{1}$]{
		\includegraphics[width=0.5\textwidth]{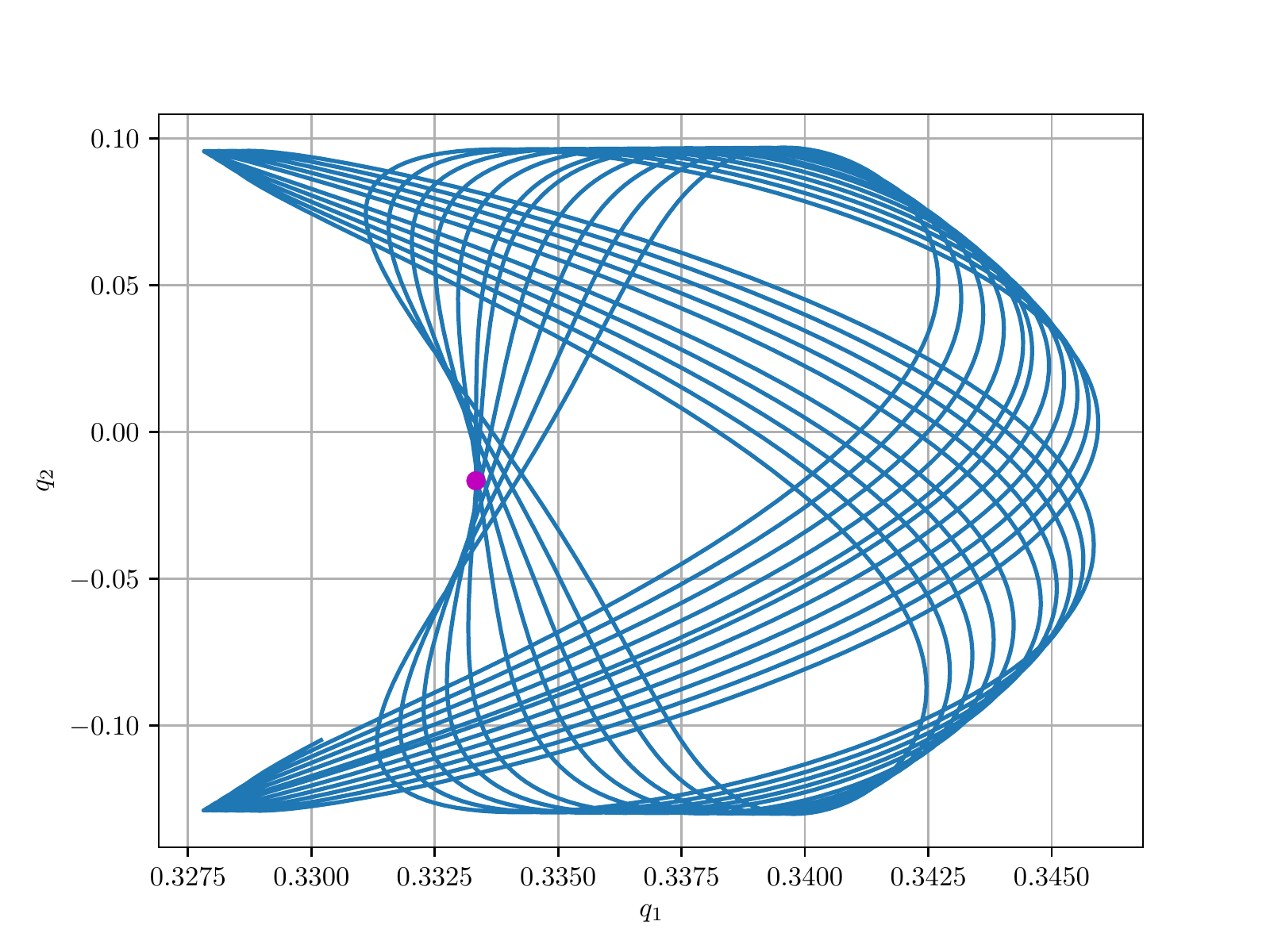}
	}
	\\
	\subfloat[][$t=9\omega_{1}$]{
		\includegraphics[width=0.5\textwidth]{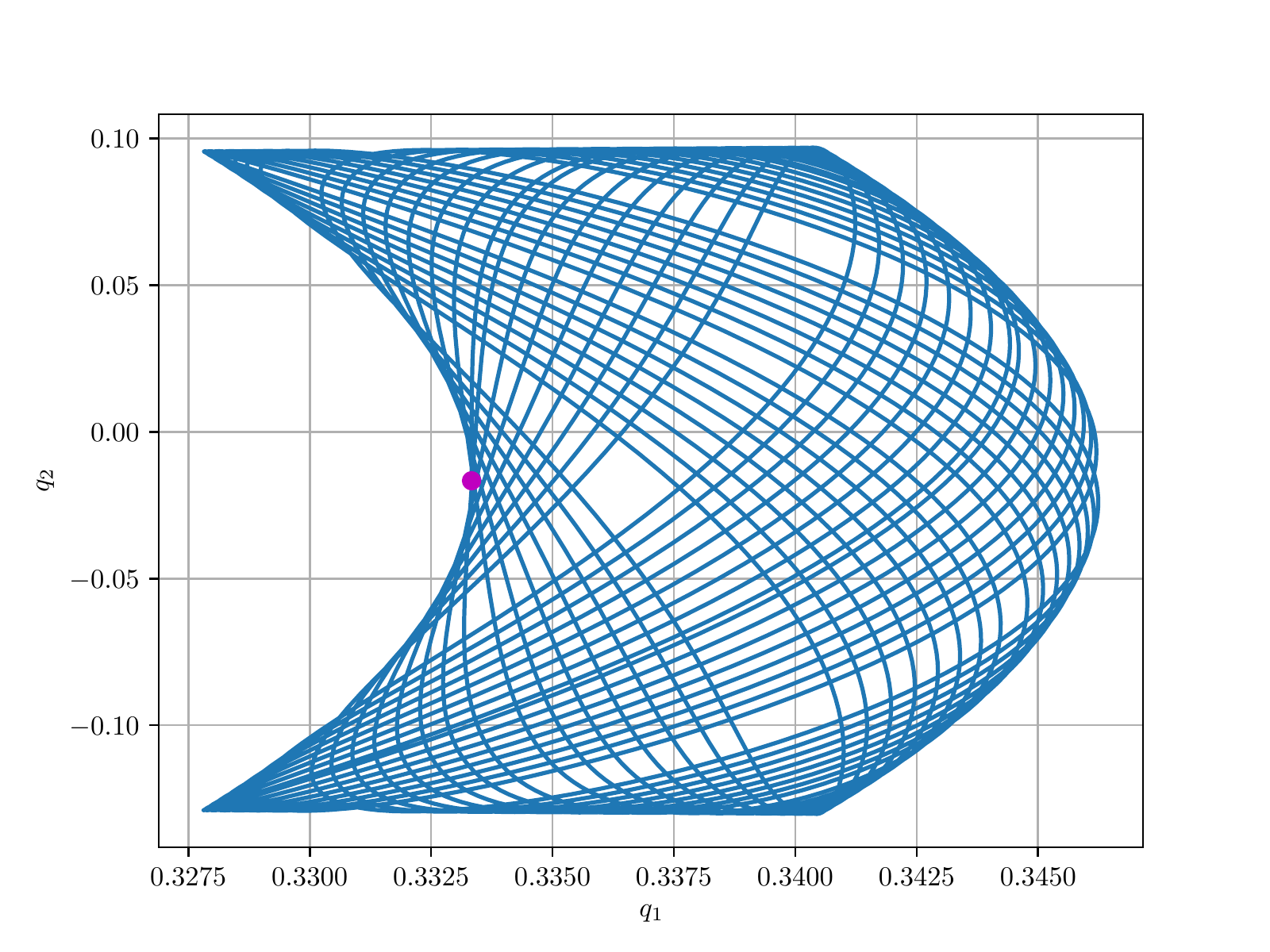}
	}
	\subfloat[][$t=12\omega_{1}$]{
		\includegraphics[width=0.5\textwidth]{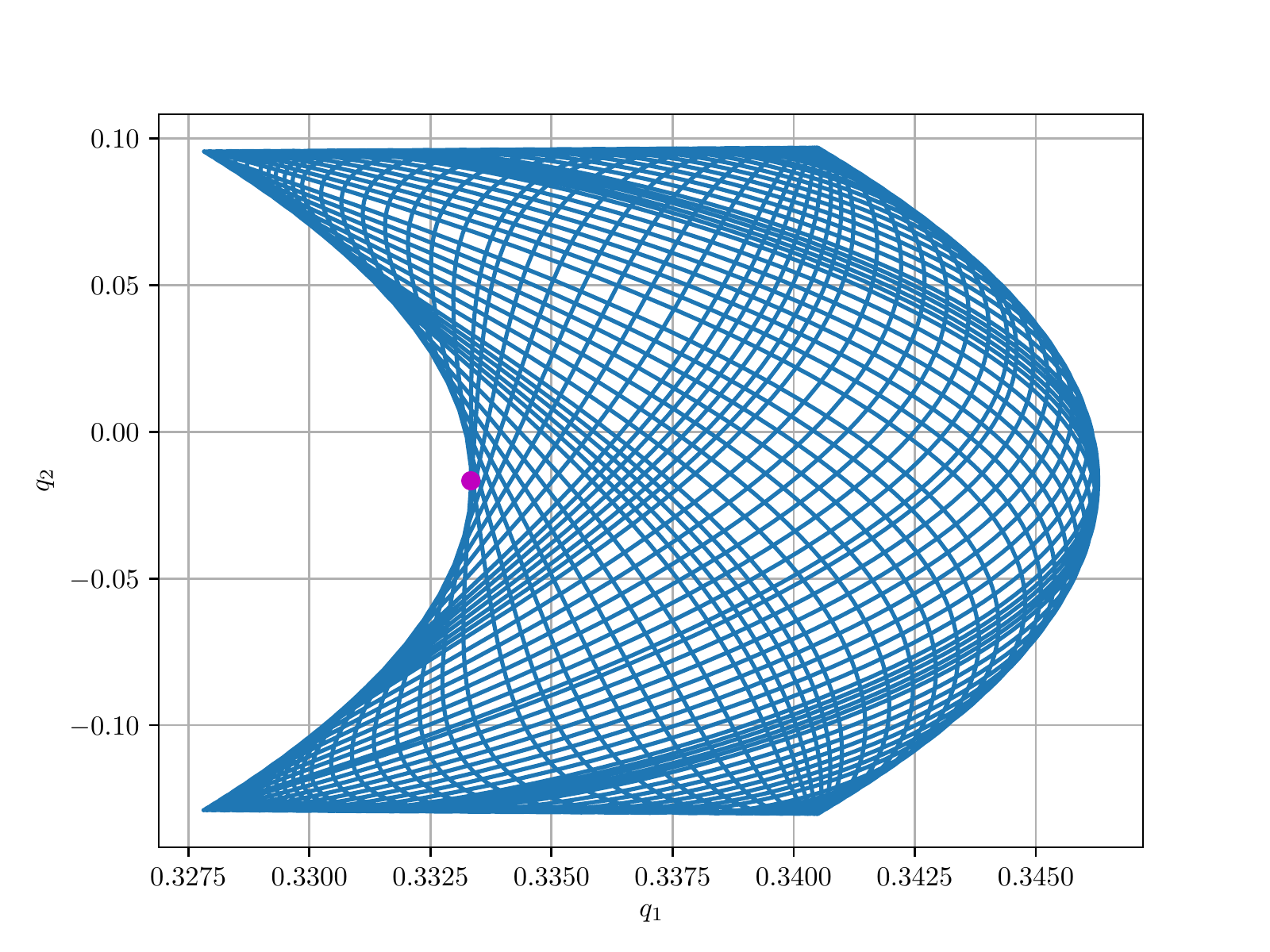}
	}
	\\
	\caption{Trajectories for the maps \eqref{eq:dracheq} with parameters
		$a=0.1$, $\eta=8$, $c=1$. We used the \texttt{odeint}
		integrator from \texttt{scipy} \cite{scipy} with a regular mesh of $N=2^{25}$ 
		points.
		The magenta dot is the stable equilibrium point labeled
		by $(+)$ in \eqref{eq:equil}.}
	\label{fig:numstud}
\end{figure}

To conclude this paper we wish to show with an explicit example how
this approach can be used to suggest maximal superintegrability of
a class of Hamiltonian systems.
We will give a proof of Bertrand's theorem 
\cite{Bertrand1873} using our method. 
This will also point out the main differences between our approach
and the one followed in \cite{Zarmi2002}.

\subsection{Central force potentials:  Bertrand's theorem}

\label{sss:bertrand}

In this Subsection we give an alternative proof of the following:

\begin{theorem}[Bertrand \cite{Bertrand1873}]
	Assume we are given the Hamiltonian in the Euclidean plane:
	\begin{equation}
	H_{k,\alpha} = \half \left( p_{r}^{2} + \frac{p_{\varphi}^{2}}{r^{2}} \right)
	+ k r^{\alpha}
	\label{eq:hrp}
	\end{equation}
	where $r\in\R^{+}$ and $\varphi\in[0,2\pi)$ are polar coordinates, $p_{r}$
	and $p_{\varphi}$ the respective momenta, and $k$ and $\alpha$ 
	are real constants.
	Then, the only two cases for which all bounded orbits are closed  
	are those corresponding to $\alpha=2$, $k>0$ and $\alpha=-1$,
	$k<0$, i.e. the isotropic harmonic oscillator and the 
	Kepler system.
	\label{thm:bertrand}
\end{theorem}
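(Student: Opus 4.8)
The plan is to reduce the central-force problem of \eqref{eq:hrp} to a single nonlinear oscillator, so that the machinery of Section \ref{sec:method} applies essentially verbatim; the only subtlety is that here one coordinate, $\varphi$, is bounded. Since $\varphi$ is cyclic, $p_{\varphi}=r^{2}\dot\varphi=:\ell$ is conserved, and the natural device is to trade the physical time $t$ for $\varphi$ and to work with the inverse radius $u=1/r$. A short computation then reduces the equations of motion to the Binet-type orbit equation
\begin{equation}
    \frac{\ud^{2}u}{\ud\varphi^{2}} + u = \frac{k\alpha}{\ell^{2}}\,u^{-(1+\alpha)},
    \label{eq:binet}
\end{equation}
which is a one-dimensional conservative oscillator in the ``time'' $\varphi$. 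This is precisely the ``required care'' the bounded coordinate demands: $\varphi$ becomes the independent variable, while the genuinely oscillating quantity is $u$.

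First I would locate the analogue of the stationary point of Step~2, namely the circular orbit $u=u_{0}$ obtained by setting $u''=0$ in \eqref{eq:binet}; this gives $u_{0}^{2+\alpha}=k\alpha/\ell^{2}$, which is real and positive only when $k\alpha>0$. Linearising $u=u_{0}+\varepsilon Q$ as in \eqref{eq:pertv} collapses \eqref{eq:binet} to the harmonic equation $Q''+(2+\alpha)Q=0$ (primes denoting $\varphi$-derivatives), so the single fundamental frequency is $\Omega_{0}=\sqrt{2+\alpha}$. The crucial observation is that, because $u(\varphi)$ is $2\pi/\Omega$-periodic, the planar orbit closes \emph{if and only if} $\Omega\in\Q$: after $m$ radial oscillations $\varphi$ advances by $2\pi m/\Omega$, a multiple of $2\pi$ exactly when $\Omega$ is rational. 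Thus the role of the frequency-ratio matrix \eqref{eq:freqratmat} is played here by the single number $\Omega$, and the linear analysis already requires $\sqrt{2+\alpha}\in\Q$.

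The heart of the argument is Step~\ref{it:mult}: a multiple-scale expansion of $Q$ to second order, producing the amplitude-dependent correction $\Omega=\Omega_{0}+\varepsilon^{2}\Omega^{(2)}+\BigO{\varepsilon^{4}}$ in the sense of \eqref{eq:omegajn}. Expanding the right-hand side of \eqref{eq:binet} about $u_{0}$ gives a quadratic coefficient $a_{2}=-(1+\alpha)(2+\alpha)/(2u_{0})$ and a cubic coefficient $a_{3}=(1+\alpha)(2+\alpha)(3+\alpha)/(6u_{0}^{2})$, and the standard frequency shift for a quadratic-plus-cubic oscillator yields $\Omega^{(2)}\propto 3a_{3}/(8\Omega_{0}^{2})-5a_{2}^{2}/(12\Omega_{0}^{4})$ times the squared amplitude $R^{2}$. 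Now I invoke Lemma \ref{lem:periodic}, read in the contrapositive, together with the mechanism of Corollary \ref{cor:nonperiodic}: since $\ell$ (equivalently the amplitude) varies continuously, an $\Omega$ depending on the amplitude runs through a continuum of values and cannot stay rational, so some bounded orbit fails to close. Hence for \emph{all} bounded orbits to close one needs $\Omega^{(2)}\equiv 0$. Setting the bracket to zero gives, for $\alpha\neq -1$, the linear relation $3(3+\alpha)=5(1+\alpha)$, i.e.\ $\alpha=2$; the value $\alpha=-1$ is recovered separately because then $1+\alpha=0$ makes \eqref{eq:binet} \emph{exactly linear}, so $a_{2}=a_{3}=0$ and $\Omega\equiv 1$. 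Matching with the existence condition $k\alpha>0$ fixes $k>0$ for $\alpha=2$ and $k<0$ for $\alpha=-1$, exactly the isotropic harmonic oscillator and the Kepler potential.

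I expect the main obstacle to be conceptual rather than computational: justifying that the vanishing of $\Omega^{(2)}$ at this single order is not merely necessary but, combined with the direct integrability of the two survivors, \emph{sufficient}. Concretely, for $\alpha=2$ and $\alpha=-1$ equation \eqref{eq:binet} is solved in closed form (a shifted harmonic in $u$, and the classical conic section, respectively), so $\Omega$ is amplitude-independent to all orders and every bounded orbit is genuinely closed; no higher-order obstruction can arise. The only other delicate point is the careful handling of the bounded coordinate $\varphi$, which the Binet reduction \eqref{eq:binet} resolves cleanly; the second-order multiple-scale computation leading to $a_{2}$, $a_{3}$ and $\Omega^{(2)}$ is otherwise entirely routine.
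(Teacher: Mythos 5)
Your proof is correct and reaches the paper's conclusion, but by a genuinely different route --- in fact, by precisely the route the paper's closing remark distinguishes itself from. You use conservation of $p_{\varphi}=r^{2}\dot{\varphi}=\ell$ to eliminate time and reduce the problem to the Binet orbit equation (the paper's \eqref{eq:orbeq}), a single one-dimensional anharmonic oscillator in the independent variable $\varphi$; circular orbits become fixed points, closure becomes rationality of the single number $\Omega$, and the selection rule follows from the textbook Poincar\'e--Lindstedt amplitude-dependent frequency shift, whose vanishing gives $(1+\alpha)(\alpha-2)=0$, in agreement with the paper's \eqref{eq:ratio2nd}. The paper instead stays in the time domain with both $r$ and $\varphi$ dynamical: it replaces the missing equilibrium points by the two-parameter family of circular orbits, perturbs with $r=r_{0}+\varepsilon\rho$, $\varphi=\omega_{0}(t+\varepsilon\theta)$, and runs the two-frequency multiple-scales machinery of Section \ref{sec:method} to obtain $\Delta^{(2)}$. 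Your reduction buys brevity and a much lighter computation, but it hinges on the existence of the conserved momentum and hence of the orbit equation, which is exactly what the authors avoid (see their remark citing \cite{Zarmi2002}): their proof is designed to exemplify an algorithm that also applies to systems such as TTW and Drach, where no Binet-type reduction is available and where periodicity must emerge as output of the expansion rather than be built into the setup. Two slips in your write-up, neither damaging: the powers in the frequency-shift formula should be $3a_{3}/(8\Omega_{0})-5a_{2}^{2}/(12\Omega_{0}^{3})$ (immaterial, since you use it only up to proportionality), and in the sufficiency discussion for $\alpha=2$ it is $u^{2}=1/r^{2}$, not $u$, that is a shifted harmonic --- compare the paper's \eqref{eq:orbitho}; the case where $u$ itself is a shifted harmonic (the conic) is $\alpha=-1$.
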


\begin{proof}
	We will give a direct proof of this theorem, i.e. without resorting
	to the equation of the orbits \cite{Whittaker,LeviCivita1922vol1,Appell1893}, by using the multiple scales approach.
	The Lagrangian associated with \eqref{eq:hrp} is:
	\begin{equation}
	L_{k,\alpha} = \half \left( \dot{r}^{2} + r^{2}\dot{\varphi}^{2} \right)
	- k r^{\alpha}.
	\label{eq:lrp}
	\end{equation}
	The Euler-Lagrange equations corresponding to \eqref{eq:lrp} are
	given by:
	\begin{subequations}
		\begin{align}
		r^{2}\ddot{\phi} + 2 r \dot{\phi}  \dot{r} &=0,
		\label{eq:phieq}
		\\
		\ddot{r} - r  \dot{\phi}^{2} + k \alpha r^{\alpha-1}   &=0.
		\label{eq:req}
		\end{align}
		\label{eq:lrpeq}
	\end{subequations}
	It is easy to check that in this case there are no equilibrium positions,
	so the method does not seem applicable.
	We recall that, in order to apply the procedure of Section \ref{sec:method}, we need
	a stable equilibrium solution to check whether the perturbative
	expansion around the stable point gives rise to closed orbits or not.
	Due to the radial symmetry of the Hamiltonian \eqref{eq:hrp} it is
	natural to make a different \emph{ansatz}: we search for the existence
	of \emph{circular orbits}, i.e. a two-parameter family of solutions of the
	form:
	\begin{equation}
	r = r_{0}, \quad \varphi = \omega_{0} t +\varphi_{0},
	\label{eq:circorb}
	\end{equation}
	where $\omega_{0}=\omega_{0}\left( r_{0} \right)$.
	The motion defined by \eqref{eq:circorb} is called circular
	since it represents the equation of a circle in polar coordinates 
	\cite{LeviCivita1922vol1}.
	Inserting the \emph{ansatz} into the equations of motion
	\eqref{eq:lrpeq} we obtain:
	\begin{equation}
	\omega_{0} = \sqrt{k\alpha} r_0^{\alpha/2-1},
	\label{eq:om0}
	\end{equation}
	which implies that there exist circular orbits for the system \eqref{eq:lrpeq}
	if $k\alpha >0$, i.e. $k>0$ and $\alpha>0$ or $k<0$ and $\alpha<0$.
	We underline that circular orbits are \emph{exact solutions}
	and their property is to maximize the \emph{angular momentum}:
	\begin{equation}
	\ell = r^{2} \dot{\varphi},
	\label{eq:angmom}
	\end{equation}
	being both $r$ and $\dot{\varphi}$ constants.
	Since every other kind of orbit will have smaller angular momentum, 
	when circular orbits do not exist, i.e. when $k\alpha<0$, 
	there are no bounded trajectories.
	We note that due to the radial symmetry we can always suppose
	$\varphi_{0}\equiv0$ in \eqref{eq:circorb}.
	
	\noindent  Therefore, we can restrict the analysis to the range of parameters $k\alpha>0$.
	We turn to investigate the stability of circular orbits in this range. 
	We introduce the expansion of \emph{nearly-circular orbits}:
	\begin{equation}
	r = r_{0} + \varepsilon \rho, 
	\quad 
	\varphi = \omega_{0} \left(t + \varepsilon \theta\right),
	\label{eq:ncircorb}
	\end{equation}
	where $\rho=\rho\left( t \right)$, $\theta=\theta\left( t \right)$
	and $\varepsilon\to0^{+}$.
	Inserting the expansion \eqref{eq:ncircorb} into \eqref{eq:lrpeq},
	at the first order in $\varepsilon$ we obtain:
	\begin{subequations}
		\begin{align}
		r_{0}\ddot{\theta} +2 \dot{\rho} &=0,
		\label{eq:phieqe}
		\\
		\ddot{\rho}
		-2 r_{0}^{\alpha-1} k \alpha \dot{\theta} 
		+\alpha k \left( \alpha-2 \right) r_{0}^{\alpha-2}\rho &=0.
		\label{eq:reqe}
		\end{align}
		\label{eq:lrpeqe}
	\end{subequations}
	The system \eqref{eq:lrpeqe} has the following solution:
	\begin{subequations}
		\begin{align}
		\rho &= \rho_0 \cos(r_0^{\alpha/2-1} \sqrt{k\alpha\left(\alpha+2\right)}t+\beta_{0})
		+\frac{2 r_0 \Omega_{0}}{\alpha+2},
		\label{eq:reqesol}
		\\
		\theta &=
		\frac{\alpha-2}{\alpha+2}\Omega_{0} t
		-\frac{2\rho_{0}}{r_{0}^{\alpha/2}\sqrt{k\alpha\left(\alpha+2\right)}}
		\sin(r_{0}^{\alpha/2-1} \sqrt{k\alpha\left(\alpha+2\right)} t+\beta_{0}) 
		+\theta_0,
		\label{eq:phieqesol}
		\end{align}
		\label{eq:lrpeqesol}
	\end{subequations}
	where $\rho_{0}$, $\beta_{0}$, $\Omega_{0}$ and $\theta_{0}$ are constants
	of integration.
	The solution \eqref{eq:lrpeqesol} implies that the circular orbits 
	are stable if $k\alpha\left( \alpha+2 \right)>0$ hence, being $k\alpha>0$,
	when $\alpha>-2$.
	When $\alpha<-2$ circular orbits are unstable and we have unbounded motion. 
	Moreover, recalling formula \eqref{eq:ncircorb}, 
	we obtain that at the zero-th order the frequencies are given by $\omega_{0}$ 
	from \eqref{eq:om0} and by:
	\begin{equation}
	\bar{\omega}_{0} = r_{0}^{\alpha/2-1} \sqrt{k\alpha\left(\alpha+2\right)}.
	\label{eq:varpi0}
	\end{equation}
	Now, due to the fact that $\varphi$ is an angle variable, we have that
	at the zero-th order the condition for the orbit to be closed is
	given by $\omega_{0}/\bar{\omega}_{0}\in\Q$.
	This implies:
	\begin{equation}
	\Delta^{(0)}= \frac{1}{\sqrt{\alpha+2}} = \frac{n}{m},
	\label{eq:alphacond}
	\end{equation}
	where $n$, $m$ are co-prime integers.
	
	To check if this condition is preserved at higher order we perform 
	a multiple-scale expansion with three trivial time scales given by
	equation \eqref{eq:trivialts}:
	\begin{subequations}
		\begin{align}
		\rho &= \rho_{0}\left( t_{0},t_{1},t_{2} \right)
		+\varepsilon \rho_{1}\left( t_{0},t_{1},t_{2} \right)
		+\varepsilon^{2} \rho_{2}\left( t_{0},t_{1},t_{2} \right)
		+ \BigO{\varepsilon^{3}},
		\label{eq:rhomult}
		\\
		\theta &= \theta_{0}\left( t_{0},t_{1},t_{2} \right)
		+\varepsilon \theta_{1}\left( t_{0},t_{1},t_{2} \right)
		+\varepsilon^{2} \theta_{2}\left( t_{0},t_{1},t_{2} \right)
		+ \BigO{\varepsilon^{3}}.
		\label{eq:phimult}
		\end{align}
		\label{eq:rhophimult}
	\end{subequations}
	Inserting the expansion \eqref{eq:rhophimult} into the form
	of the nearly-circular orbits \eqref{eq:ncircorb}, and then into
	\eqref{eq:lrpeq}, we obtain the following nearly-circular orbit:
	\begin{subequations}
		\begin{align}
		r 
		& = r_{0} 
		+\varepsilon\frac{2 r_{0} \Omega_{0}}{\alpha+2}+ \varepsilon\rho_{0} 
		\cos\left[\bar{\omega}_{0} \left(1 
		+\epsilon\frac{\alpha-2}{\alpha+2}\Omega_{0} 
		+\varepsilon^{2}\bar{\mu}^{(2)}  \right) t 
		+\beta_{0}\right]
		+\BigO{\varepsilon^{2}},
		\label{eq:rhomultsol}
		\\
		\varphi &
		\begin{aligned}[t]
		&= \omega_{0} 
		\left\{1+\varepsilon \frac{\alpha-2}{\alpha+2}\Omega_{0} 
		+\frac{\varepsilon^{2}}{2} \left( \alpha-2 \right)
		\left[ \frac{1}{4}\left(\frac{\rho_{0}}{r_{0}}\right)^2 (\alpha-1) 
		+\frac{\alpha-4}{\left(\alpha+2\right)^2} \Omega_{0}^2  \right]
		\right\} t 
		\\
		&-\frac{2\varepsilon\omega_{0}\rho_{0}}{r_{0}^{\alpha/2}\sqrt{k\alpha\left(2+\alpha\right)}}
		\sin\left[ \bar{\omega}_{0}\left(1 
		+\epsilon\frac{\alpha-2}{\alpha+2}\Omega_{0} 
		+\varepsilon^{2}\bar{\mu}^{(2)}  \right) t 
		+\beta_{0}\right] 
		+\varepsilon\omega_{0}\theta_0
		+\BigO{\varepsilon^{2}},
		\end{aligned}
		\label{eq:phimultsol}
		\end{align}
		\label{eq:rphimult}
	\end{subequations}
where $\rho_{0}$, $\beta_{0}$, $\Omega_{0}$ and $\theta_{0}$ are constants
	of integration, $\omega_{0}$ is given by \eqref{eq:om0},
	$\bar{\omega}_{0}$ is given by \eqref{eq:varpi0} and:
	\begin{equation}
	\bar{\mu}^{(2)} =  \half
	(\alpha-2)\left[ \frac{1}{6} \left(\frac{\rho_{0}}{r_{0}}\right)^{2} (\alpha-2)
	+\frac{ \alpha-4}{(\alpha+2)^{2}}\Omega_{0}^2\right].
	\label{eq:varpi2}
	\end{equation}
	Defining:
	\begin{equation}
	\mu^{(2)} \doteq 
	\frac{1}{2} \left( \alpha-2 \right)
	\left[ \frac{1}{4}\left(\frac{\rho_{0}}{r_{0}}\right)^2 (\alpha-1) 
	+\frac{\alpha-4}{\left(\alpha+2\right)^2} \Omega_{0}^2  \right],
	\label{eq:om2}
	\end{equation}
	we obtain from \eqref{eq:phimultsol} 
	that, at the second order, the frequencies are given by 
	\begin{subequations}
		\begin{align}
		\omega_{2} &= \omega_{0}\left( 1+\varepsilon \frac{\alpha-2}{\alpha+2}\Omega_{0} 
		+\varepsilon^{2} \mu^{(2)}\right),
		\label{eq:om2tot}
		\\
		\bar{\omega}_{2} &= \bar{\omega}_{0}
		\left(1+\epsilon\frac{\alpha-2}{\alpha+2}\Omega_{0} 
		+\varepsilon^{2}\bar{\mu}^{(2)} \right).
		\end{align}
		\label{eq:freq2}
	\end{subequations}
	We observe that the precision of $\varphi$ in \eqref{eq:phimultsol} is
	of order $\varepsilon^{2}$, but $\omega_{2}$ \eqref{eq:om2tot} is allowed 
	to contain terms of higher order.
	This is possible since these higher order terms arise when fixing 
	terms of order $\varepsilon$ in the multiple-scale expansion,
	according to the prescriptions in Section \ref{sec:method}.
	In this sense these are not \emph{bona fide} higher order terms, but
	just higher order \emph{corrections} to the $\BigO{1}$ term $\omega_{0}$.
	Now, due to the fact that $\varphi$ is an angle variable, we have that
	at the second order the condition for the orbit to be closed is
	that $\omega_{2}/\bar{\omega}_{2}$ must stay rational.
	Expanding in Taylor series we can write:
	\begin{equation}
	\Delta^{(2)}=  \frac{\omega_{2}}{\bar{\omega}_{2}} =
	\frac{1}{\sqrt{\alpha+2}} 
	+ \frac{\varepsilon^{2}}{24}
	\left(\frac{\rho_{0}}{r_{0}}\right)^{2} 
	\frac{(\alpha+1) (\alpha-2)}{\sqrt{\alpha+2}}
	+ \BigO{\varepsilon^{3}}.
	\label{eq:ratio2nd}
	\end{equation}
	This implies that in addition to the condition \eqref{eq:alphacond}
	we must have either $\alpha=-1$ or $\alpha=2$. 
	Otherwise in the ratio $\omega_{2}/\bar{\omega}_{2}$ there will be
	terms depending on the initial conditions, which can clearly be
	purely irrational.
	We note that if $\alpha=2$, then $k=\kappa^{2}$ and we have 
	$\omega_{2}/\bar{\omega}_{2} = 1/2$ whereas if $\alpha=-1$, then $k=-\kappa^{2}$
	and we have $\omega_{2}/\bar{\omega}_{2} = 1$.
	This means that in these two cases the proportionality is exact up
	to the second order.
	This concludes the proof of the Theorem, since it is known 
	\cite{LeviCivita1922vol1,Whittaker,Moulton2012} that in the two mentioned cases
	the bounded orbits are closed.
	For instance if $\alpha=2$ and $k=\kappa^{2}$ the orbits are given
	by the following expression:
	\begin{equation}
	r^2(\varphi) = \dfrac{\ell^2/E}{1+ \sqrt{1-2 (\kappa \ell/ E)^2}\cos\left[2(\varphi-\varphi_0)\right]} ,
	\label{eq:orbitho}
	\end{equation}
	where $E$ is the total mechanical energy of the system.
	Equation \eqref{eq:orbitho} means that the radius vector describes ellipses  centered at the origin.
	On the other hand, if $\alpha=-1$ and $k=-\kappa^{2}$, the orbits  
	are given by the following expression:
	\begin{equation}
	r(\varphi) = \dfrac{\ell^2/\kappa^2}{1+ \sqrt{1+2  E \ell^2/\kappa^4}\cos(\varphi-\varphi_0)} \, .
	\label{eq:orbitkep}
	\end{equation}
	Equation \eqref{eq:orbitkep} means that, for bounded motion, 
	the radius vector describes ellipses where the origin is one of the foci.
	In both cases, the time dependence is recovered by integrating
	the angular momentum first integral \eqref{eq:angmom}:
	\begin{equation}
	t - t_{0} = \frac{1}{\ell}
	\int_{\varphi_{0}}^{\varphi} r^{2}\left( \tilde{\varphi} \right)\ud \tilde{\varphi}.
	\label{eq:angmomint}
	\end{equation}
	This concludes the proof of the Theorem.
\end{proof}
\noindent 
In Figure \ref{fig:orbits} we show some examples of circular orbits and 
their deformations in the cases highlighted during the proof of 
Bertrand's Theorem \ref{thm:bertrand}.

\begin{figure}[hbt]
	\centering    
	\subfloat[$\alpha=2$: the circular orbit (blue) is stable,
	and the perturbations are periodic ellipses (orange
	and green).
	Exact plot using formula \eqref{eq:orbitho}.]{
		\includegraphics[width=0.47\textwidth]{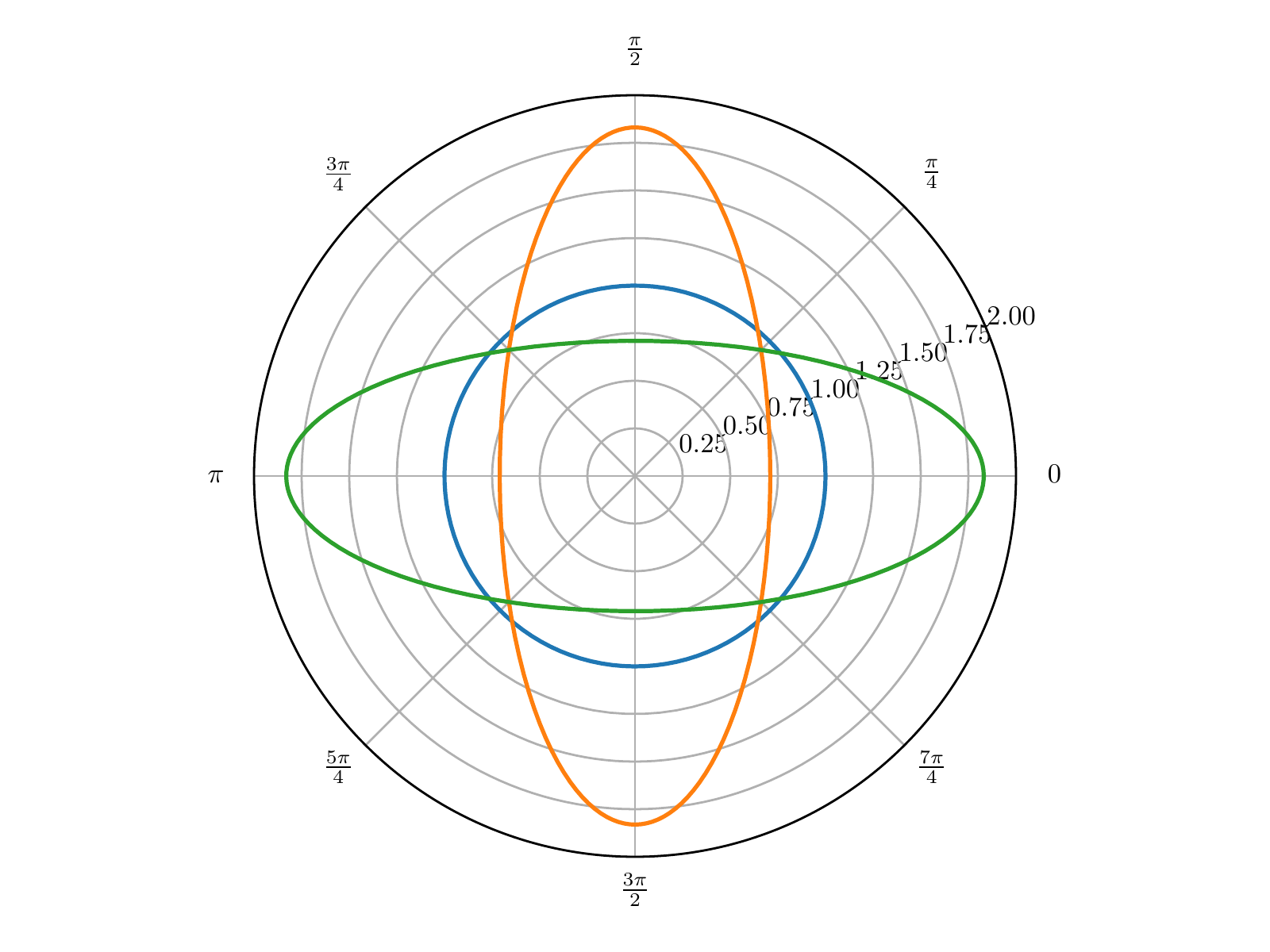}
	}
	\hfill
	\subfloat[][$\alpha=-1$: the circular orbit (blue) is stable,
	and small perturbations are periodic ellipses (orange).
	For high values of the perturbation the orbit
	opens up becoming a parabola (green) or a hyperbola (red).
	Exact plot using formula \eqref{eq:orbitkep}.]{
		\includegraphics[width=0.47\textwidth]{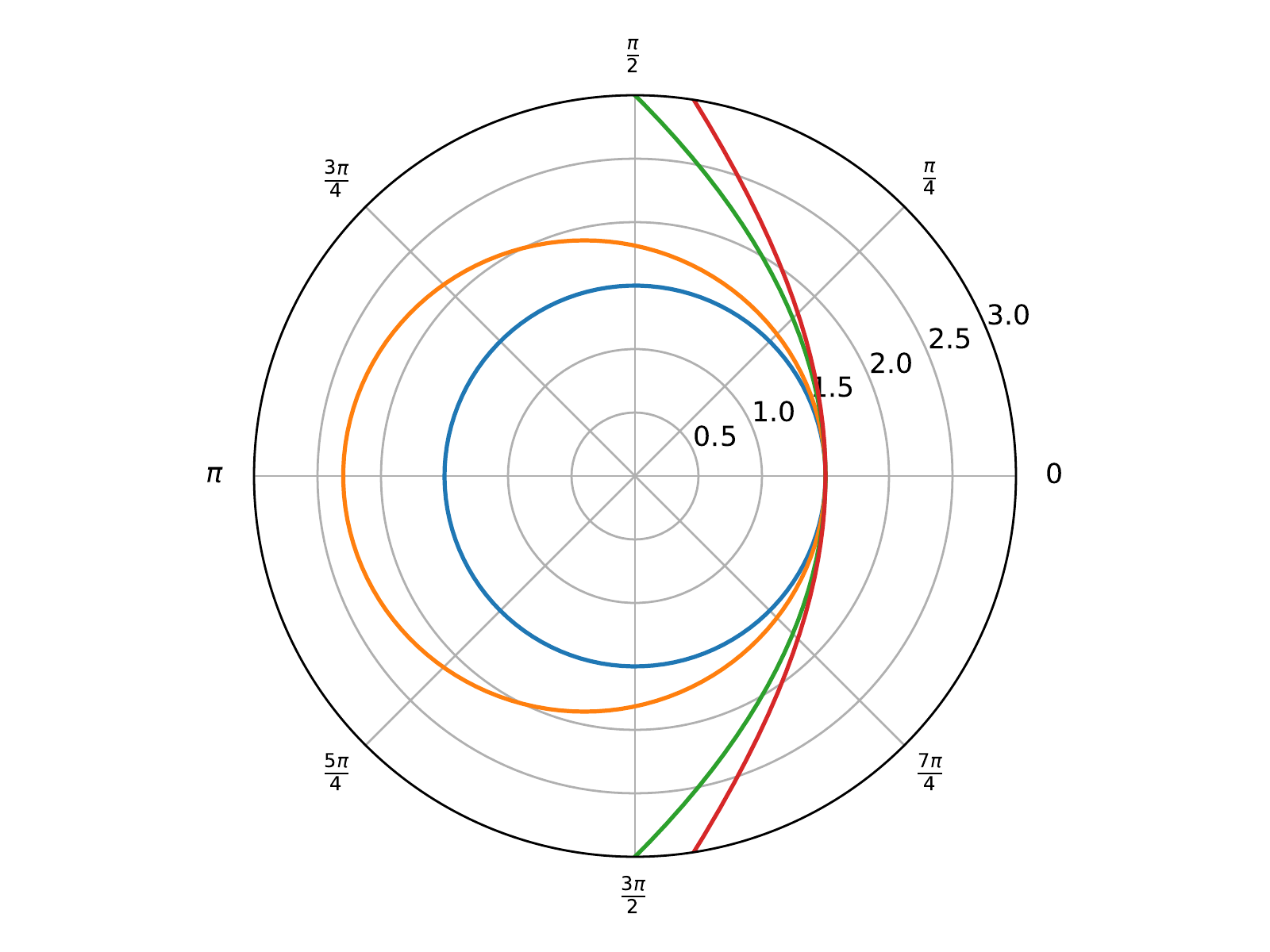}
	}
	\\
	\subfloat[][$\alpha=-2$: the circular orbit (blue) is unstable.
	Small perturbations are non-periodic curves
	known as \emph{Cotes' spirals} (red and green) or \emph{inverse
		spirals} (orange).
	See Appendix \ref{app:orbits}.]{
		\includegraphics[width=0.47\textwidth]{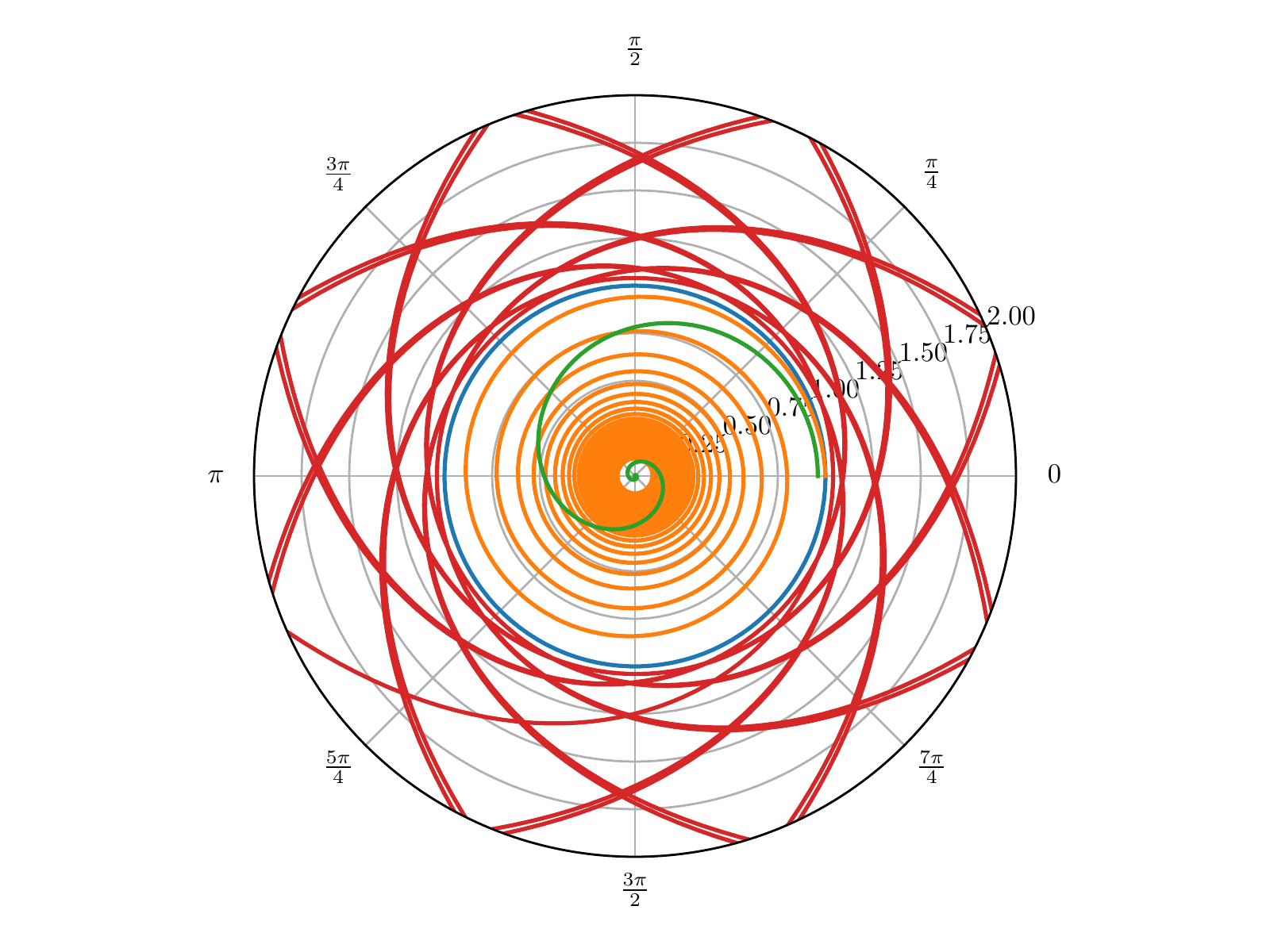}
	}
	\hfill
	\subfloat[][$\alpha=6$: the circular orbit (blue) is stable.
	Small perturbations are non-periodic curves
	defined by an elliptic integral (orange).
	See Appendix \ref{app:orbits}.]{
		\includegraphics[width=0.47\textwidth]{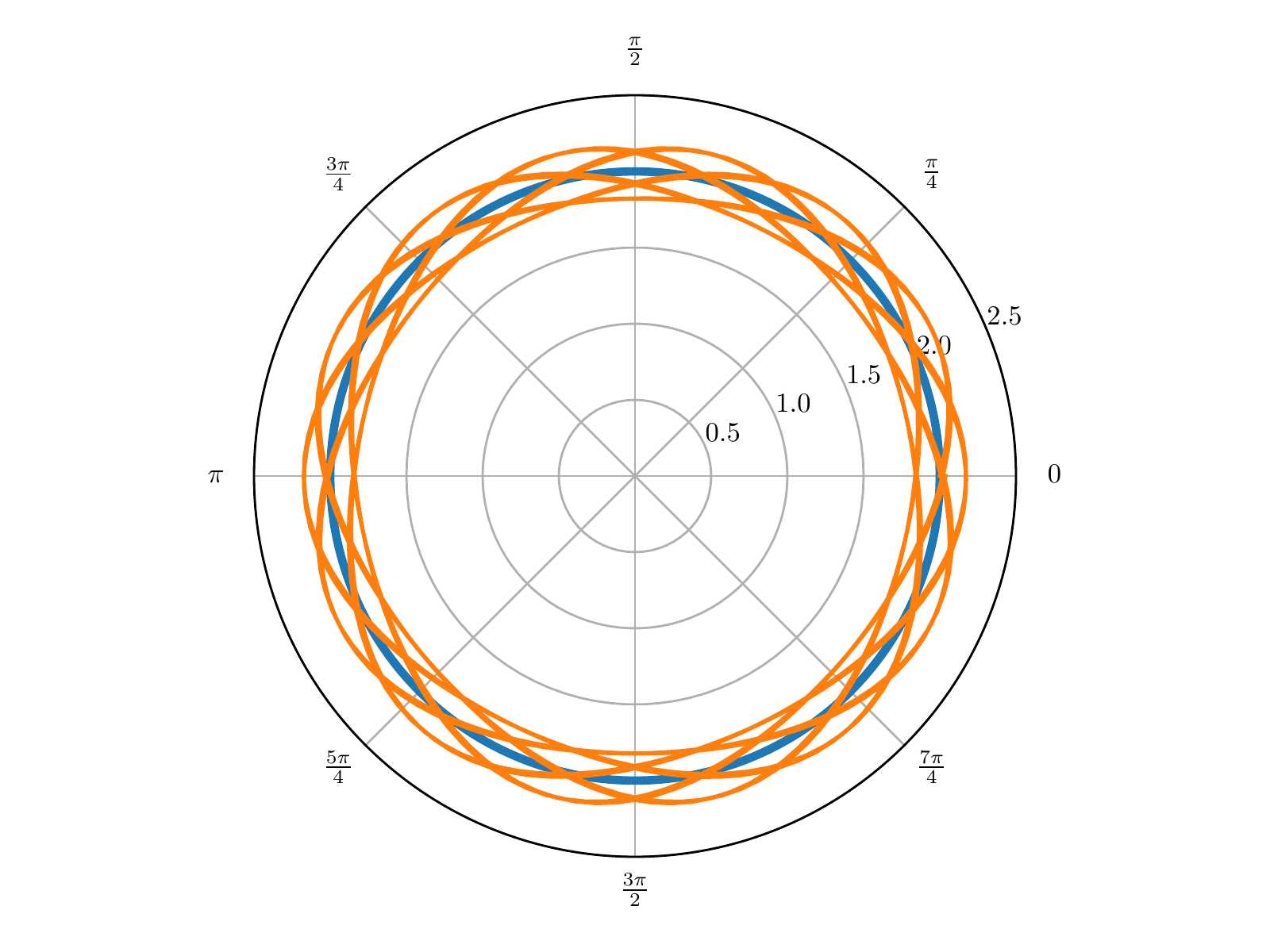}
	}
	\caption{Circular orbits and their deformations
		for various values of $\alpha$.}
	\label{fig:orbits}
\end{figure}

\begin{remark}
	We remark that the proof of 
	Bertrand's Theorem \ref{thm:bertrand} can be accomplished using the Poincar\'e-Lindstedt method \cite{Zarmi2002}, which
	is a particular case of the multiple scales method.
	We remark, however, that our procedure is more general than the one 
	applied in \cite{Zarmi2002}. 
	In fact, we do not need to resort to the orbit equation and 
	we do not need to use periodicity as a hypothesis, 
	but periodicity arises as a consequence.
\end{remark}

\begin{remark}
	We underline that the proof of Bertrand's Theorem \ref{thm:bertrand}  
	could have been accomplished in another way using 
	\emph{non-inertial reference frames} 
	\cite{LeviCivita1922vol1,Whittaker,Appell1893}.
	In fact, let us assume that our motion in the plane is a restriction
	of a three dimensional motion in $\left( q_{1},q_{2},q_{3} \right)\in\R^{3}$ 
	where along the $q_{3}$-axis there is no dynamics. 
	Then, we can switch to a non-inertial reference
	frame moving around the $q_{3}$-axis with angular velocity
	$\boldsymbol{\Theta}= \left( 0,0,\Theta \right)$.
	This corresponds to the following coordinate transformation
	\cite{LeviCivita1922vol1,Whittaker,Appell1893}:
	\begin{equation}
	r = R, \quad \varphi = \Phi + \Theta t,
	\label{eq:xNI}
	\end{equation}
	where we denoted with capital letters the coordinates in the
	non-inertial reference system.
	The transformation \eqref{eq:xNI} brings the Lagrangian \eqref{eq:lrp} 
	into the following form:
	\begin{equation}
	L_{k,\alpha,\Theta} = \half \left( \dot{R}^{2} + R^{2}\dot{\Phi}^{2} \right)
	+\Theta R^{2} \dot{\Phi}
	+\Theta^{2} R^{2}- k R^{\alpha}.
	\label{eq:lrpni}
	\end{equation}
	It is easy to prove that the Euler-Lagrange equations 
	corresponding to \eqref{eq:lrpni}, differently from those
	of \eqref{eq:lrp}, possess the following equilibrium solutions:
	\begin{equation}
	R_{0} = \left(\frac{\Theta^2}{k\alpha}\right)^{1/\left(\alpha-2\right)},
	\label{equini}
	\end{equation}
	and $\Phi_{0}$ can be arbitrary in $[ 0,2\pi)$.
	Changing the value of $\Theta$ we can change the equilibrium
	point in the variables $\left( R,\Phi \right)$. E.g. choosing
	$\Theta=\sqrt{k\alpha}r_{0}^{\alpha/2-1}$, with $r_{0}\in\R^{+}$
	every positive real number can be chosen to be the equilibrium point.
	Therefore, we obtain that the equilibrium positions of the Euler-Lagrange
	equations corresponding to \eqref{eq:lrpni} are given by
	$\left( r_{0},\Phi_{0} \right)$ with fixed $r_{0}\in\R^{+}$ and for all
	$\Phi_{0} \in [0,2\pi)$. These equilibrium points then lie on
	circles centered at $R=0$ of radius $r_{0}$. 
	Adjusting the angular velocity  $\Theta$ of the non-inertial reference frame we can change the radius of the circle. 
	This means that these equilibrium points form a two-parameter
	family of solutions.
	That is, we have a one-to-one correspondence between the equilibrium
	positions of the Euler-Lagrange equations corresponding to
	\eqref{eq:lrpni} and the circular orbits \eqref{eq:circorb}.
	At this point one just needs to perform the usual analysis with
	the derived equilibrium positions and with the Euler-Lagrange equations
	corresponding to \eqref{eq:lrpni}.
	Inverting the point transformation \eqref{eq:xNI} we then obtain
	the result in the original inertial reference frame.
	These reasoning underlines what is well-known in Celestial
	Mechanics \cite{Moulton2012}, i.e. that using conveniently non-inertial
	reference frames, in some cases it is possible to map particular
	solutions to equilibrium points. 
	This fact can be also used, in principle, in other cases where 
	no equilibrium solutions exist, but it is possible to find simple 
	classes of solutions.
\end{remark}

\section{Conclusions and outlook}
\label{sec:conclusion}

In this paper we presented a simple method which allows to establish 
if a Hamiltonian system may be or not maximally superintegrable.
The main benefit of this approach is the fact that it is algorithmic
and relies on the well-established method of the multiple scales expansion.
As stated in Section \ref{sec:method} the multiple scales analysis
proved to be very useful in the theory of integrable systems in
infinite dimensions.
We believe that the procedure presented in this paper can be thought
as its \emph{finite dimensional analog}.
Indeed, it has been observed that integrable systems in infinite
dimensions are, as a matter of fact, maximally superintegrable.
The generalized symmetries of these equations form infinite 
dimensional non-Abelian algebras (the Orlov-Shulman symmetries) 
with infinite dimensional Abelian subalgebras of commuting
flows \cite{Orlov1985,Orlov1986,Orlov1997}.
Most notably, the multiple scales method has never been
systematically applied to finite dimensional systems. 

\noindent We applied this method to five relevant examples in order to understand its features and its limits:
\begin{enumerate}
    \item The generalized H\'enon-Heiles system \eqref{eq:HHH}.
    \item The anisotropic caged oscillator \eqref{eq:cagedoscillator}.
    \item The TTW system \eqref{eq:ttwconf}.
    \item The Drach system \eqref{eq:Hdrach}.
    \item The central force Hamiltonian \eqref{eq:hrp}.
\end{enumerate}
From the example of the generalized H\'enon-Heiles system \eqref{eq:HHH}
we explicitly showed that our approach, based only on equilibrium
positions and perturbative expansions, cannot distinguish between
chaoticity and integrability, but only between non maximal superintegrability
and maximal superintegrability. From the example of the anisotropic caged oscillator \eqref{eq:cagedoscillator}
we showed that our test can detect maximal superintegrability
where the Painlev\'e test fails.
This shows that our maximal superintegrability test is not
necessarily stronger than the Painlev\'e property as the example
of the generalized H\'enon-Heiles system could have suggested.
This implies that the application of these two different tests,
both based on the analysis of the property of the solution of
a mechanical system, can be thought as complementary. 
From the example of the TTW system \eqref{eq:ttwconf} we
showed how the results of our technique are able to suggest
the superintegrability of this well-known model.
Furthermore, with the Drach system \eqref{eq:Hdrach} we gave
a practical example of the method by showing that
this system is, in general, not (maximally) superintegrable.
Finally, with the last example  we gave some hints on how this approach could be applied
to classify maximally superintegrable systems  by presenting a direct proof of Bertrand's Theorem \ref{thm:bertrand} . During the proof we gave a slight generalization of the algorithm presented in Section \ref{sec:method}. We remark that this generalization can be applied every time a parametric family of bounded and closed solutions is known. Moreover, we observe that during the proof we have recovered the following Lemma which is interesting on its own:
\begin{lemma}
    Assume we are given the Hamiltonian in the Euclidean plane
    $H_{k,\alpha}$ of equation \eqref{eq:hrp}.
    Then, if $k\alpha>0$ and $\alpha>-2$  circular orbits \eqref{eq:circorb}
    exist and are stable.
    \label{lemm:orbstab}
\end{lemma}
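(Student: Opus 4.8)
The plan is to extract both assertions directly from the computations already performed in the proof of Theorem \ref{thm:bertrand}. For \emph{existence}, I would substitute the circular-orbit ansatz \eqref{eq:circorb} into the Euler--Lagrange equations \eqref{eq:lrpeq}. The angular equation \eqref{eq:phieq} is satisfied identically, since $\dot{r}=0$ and $\ddot{\varphi}=0$, while the radial equation \eqref{eq:req} collapses to the algebraic relation $-r_{0}\omega_{0}^{2}+k\alpha r_{0}^{\alpha-1}=0$, which is precisely \eqref{eq:om0}. The quantity $\omega_{0}=\sqrt{k\alpha}\,r_{0}^{\alpha/2-1}$ is a genuine real frequency exactly when $k\alpha>0$, so circular orbits exist under this hypothesis. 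This settles the first half of the Lemma.

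For \emph{stability}, I would pass to the nearly-circular expansion \eqref{eq:ncircorb} and linearise, producing the first-order system \eqref{eq:lrpeqe}. The key decoupling step is to integrate the angular equation \eqref{eq:phieqe} once, which expresses $\dot{\theta}$ in terms of $\rho$ up to a constant of integration, and then to substitute the result into the radial equation \eqref{eq:reqe}. This reduces the coupled pair to a single forced harmonic-oscillator equation for $\rho$ with restoring coefficient $k\alpha(\alpha+2)r_{0}^{\alpha-2}$, as already witnessed by the oscillatory solution \eqref{eq:reqesol} whose frequency is $\bar{\omega}_{0}$ of \eqref{eq:varpi0}. The radial perturbation is bounded and oscillatory precisely when $k\alpha(\alpha+2)>0$; since we are already in the regime $k\alpha>0$, this is equivalent to $\alpha>-2$. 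Collecting the two conditions then yields the claim.

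The point that requires care — and the natural place for the argument to look suspect — is the behaviour of the angular perturbation $\theta$ in \eqref{eq:phieqesol}, which contains the secular term $\tfrac{\alpha-2}{\alpha+2}\Omega_{0}t$ together with the constant drift $2r_{0}\Omega_{0}/(\alpha+2)$ appearing in $\rho$ through \eqref{eq:reqesol}. The hard part is to recognise that these are \emph{not} signatures of instability but rather the marginal, zero-mode directions forced by the rotational symmetry of \eqref{eq:hrp} and by conservation of the angular momentum \eqref{eq:angmom}: an infinitesimal change in $\ell$ simply carries the orbit to a neighbouring circular orbit of slightly different radius and mean angular velocity. Consequently, stability of the two-parameter family of circular orbits is correctly diagnosed by the boundedness of the transversal radial mode $\rho$ alone, and the condition $\bar{\omega}_{0}^{2}>0$ is exactly the transversal stability criterion. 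With the existence condition $k\alpha>0$ and the transversal stability condition $\alpha>-2$ established, the Lemma follows.
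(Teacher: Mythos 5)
Your proposal is correct and follows essentially the same route as the paper: the paper's proof of this Lemma is exactly the circular-orbit ansatz \eqref{eq:circorb} giving \eqref{eq:om0} (hence existence iff $k\alpha>0$) followed by the linearised system \eqref{eq:lrpeqe} whose solution \eqref{eq:lrpeqesol} oscillates with frequency $\bar{\omega}_{0}$ of \eqref{eq:varpi0}, yielding stability iff $k\alpha(\alpha+2)>0$, i.e.\ $\alpha>-2$. Your explicit decoupling (integrating \eqref{eq:phieqe} once and substituting into \eqref{eq:reqe}) and, in particular, your identification of the secular term $\tfrac{\alpha-2}{\alpha+2}\Omega_{0}t$ and the constant shift in $\rho$ as neutral displacements along the two-parameter family of circular orbits --- rather than instabilities --- make precise a point the paper asserts without comment.
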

\noindent Amongst the systems admitting stable circular orbits the
isotropic harmonic oscillator and the Kepler system
occupy a special place, having their finite orbits closed
and periodic.
This emphasizes how in the \emph{classification problem}
a perturbative approach like ours phase out the generic cases,
leaving only those with interesting properties.
The remaining cases can then be treated directly, and their
properties are usually manifest.
Moreover, we could interpret the fact that the
isotropic harmonic oscillator and the Kepler system
possess special properties, beneath a huge amount of similar
systems, as a finite dimensional analog of the
fact that almost every evolution equation possesses a $1$-soliton
solution, whereas $N$-soliton solutions are a unique feature
of integrable equations.
Furthermore, the many non-integrable equations possessing
also a $2$-soliton solution can be seen as the analog of the potentials
satisfying the hypothesis of Lemma \ref{lemm:orbstab}.
Indeed, the true distinction between the integrable and the non-integrable
cases arises when the existence of a $3$-soliton solution is required.
This is because the interaction of three solitons
of a non-integrable evolution equations becomes destructive.
In the same way in our finite dimensional case the closeness
of circular orbits is not preserved when we arrive at the third
time scale.
We mention that the existence of $3$-soliton solutions
have been used to classify soliton equations in 
\cite{HietarintaBilinI,HietarintaBilinII,HietarintaBilinIII,
HietarintaBilinIV,HietarintaBilinRev} 
using the so-called Hirota bilinear method \cite{Hirota1971}.
We also note that soliton solutions of integrable equations
are usually stable with respect to some norm in an appropriate
function space \cite{ZakharovShabat1973,MaddocksSachs1993}.
Intuitively the stability of the $N$-soliton solution
for an integrable equation can be described as follows:
given  ``sufficiently regular'' and rapidly decreasing initial data it is possible to 
construct the inverse scattering.
The evolution of the spectral data then implies that the given
initial data will evolve producing a certain number of solitons and
a radiative background.
The area of the radiative background is ``small'' compared to that
of the solitions.
Then, as time grows, the radiative background decays whereas the
solitons will keep their form unaltered, since in the collisions only the relative shifts will change.
This comment underlines another possible analogy between the 
finite and the infinite dimensional case.

To conclude we have that, in general, maximal superintegrability can 
be proved with the usual approaches, see \cite{MillerPostWinternitz2013R}
and references therein, but the procedure we presented can be
used as a \emph{sieve test} for maximally superintegrable systems.
Based on this evidence we think that the application of
this approach in a more general setting, where instead of
arbitrary constants, namely $\alpha$ and $k$ in Bertrand's
Theorem \ref{thm:bertrand}, we have arbitrary functions may be a new
way to classify families of maximally superintegrable systems.
We reserve the application of this method in this more general
setting to future works.

\section*{Acknowledgments}

We thank prof. M. C. Nucci, prof. R. I. Yamilov, Dr. D. Riglioni 
and Dr. F. Zullo for
interesting and helpful discussions during the preparation of this paper. 

GG is supported by the Australian Research Council through 
an Australian Laureate Fellowship grant FL120100094.

\appendix

\section{Orbits of \eqref{eq:hrp} if $\boldsymbol{\alpha =-2}$ or $\boldsymbol{\alpha=6}$}
\label{app:orbits}

For a system with radial symmetry the orbits
can be computed in general using the so-called
\emph{orbit equation} \cite{Whittaker,LeviCivita1922vol1,Appell1893,Moulton2012}:
\begin{equation}
    \dv[2]{u}{\varphi}+u=\frac{k\alpha}{\ell^{2}}u^{-\alpha-1},
    \label{eq:orbeq}
\end{equation}
where $u=1/r$ and $\varphi$ is the new dependent variable
defined through the differential substitution obtained
from \eqref{eq:angmom}:
\begin{equation}
    \ud t  =\frac{r^{2}}{\ell} \ud \varphi.
    \label{eq:dtdphi}
\end{equation}
The differential substitution \eqref{eq:dtdphi} is possible
since $\ell$ is constant along the solutions of \eqref{eq:lrpeq}.
Performing the differential substitution \eqref{eq:dtdphi}
we have that \eqref{eq:phieq} is satisfied identically
and \eqref{eq:req} is transformed into:
\begin{equation}
    \frac{\ell}{r^{2}}\dv{\varphi}\left( \frac{\ell}{r^{2}}\dv{r}{\varphi} \right)-
    \frac{\ell^{2}}{r^{3}} = -k\alpha r^{\alpha-1}.
    \label{eq:orb0}
\end{equation}
Applying the transformation $r=1/u$ equation \eqref{eq:orbeq} follows.
Here, we will concentrate on the two cases $\alpha=-2$ and $\alpha=6$.

\begin{description}
    \item[Case $\boldsymbol{\alpha=-2}$.]
	If $\alpha=-2$ then the orbit equation \eqref{eq:orbeq}
	is \emph{linear}.
	Due to the Remark made during the proof of Bertrand's Theorem
	we impose $k=-\kappa^{2}/2$.
	Defining:
	\begin{equation}
	\lambda \doteq 1 - \frac{\kappa^2}{\ell^2} \, ,
	\label{eq:lam2}
	\end{equation}
	the solutions in terms of $r$ are 
	\cite{Whittaker,LeviCivita1922vol1,Appell1893,Moulton2012}:
	\begin{equation}
	r =
	\begin{cases}
	\displaystyle
	\frac{r_{0}\ell}{\ell\cos\left[\sqrt{\lambda}\left( \varphi-\varphi_{0} \right)\right]
            -r_{0}\dot{r}_{0}\sin\left[\sqrt{\lambda}\left( \varphi-\varphi_{0} \right)\right]},
	&
	\text{if $\lambda>0$},
	\\
	\displaystyle
	\frac{r_{0}\ell}{\ell\cosh\left[\sqrt{-\lambda}\left( \varphi-\varphi_{0} \right)\right]
            -r_{0}\dot{r}_{0}\sinh\left[\sqrt{-\lambda}\left( \varphi-\varphi_{0} \right)\right]},
	&
	\text{if $\lambda<0$},
	\\
	\displaystyle
        \frac{r_{0}\ell}{\ell-r_{0}\dot{r}_{0}\left( \varphi-\varphi_{0} \right)}.
	&
	\text{if $\lambda=0$}.
	\end{cases}
	\label{eq:orb2}
	\end{equation}
	We denoted $r(t=0)=r_{0}$ and $\dot{r}(t=0)=\dot{r}_{0}$.
	
	The curves described by the radius vector in \eqref{eq:orb2}
	if $\lambda>0$ and $\lambda<0$ are called \emph{Cotes' spirals}
	whereas the last one, i.e. $\lambda=0$, is called \emph{reciprocal spiral} \cite{Whittaker}.
	Circular orbits arise in the latter case when $\dot{r}_{0}=0$.
	A circular orbit when perturbed with a slight radial velocity
	becomes a reciprocal spiral. 
	On the contrary when perturbed with a slight \emph{positive} 
	angular velocity it becomes a trigonometric Cotes' spiral.
	Finally when perturbed with a slight \emph{negative} angular
	velocity it becomes an hyperbolic Cotes' spiral.
	
	In Figure \ref{fig:orbits} these cases are illustrated explicitly
	using the appropriate exact values of $r$ from \eqref{eq:orb2},
	with $\kappa=1$.
	\item[Case $\boldsymbol{\alpha=6}$.] 
	When $\alpha=6$ the orbit equation \eqref{eq:orbeq} becomes:
	\begin{equation}
	\dv[2]{u}{\varphi}+u=\frac{\kappa^{2}}{\ell^{2}}u^{-7}.
	\label{eq:orbeq6}
	\end{equation}
	Due to the remark made during the proof of Bertrand's Theorem
	we imposed $k=\kappa^{2}/6$.
	Multiplying by $\dv*{u}{\varphi}$ and integrating we
	transform \eqref{eq:orbeq6} into the following
	first order equation:
	\begin{equation}
	\half\left( \dv{u}{\varphi} \right)^2+\half u^2+\frac{\kappa^{2}}{6\ell^2u^6}=C,
	\label{eq:orb6red}
	\end{equation}
	where $C$ is a constant of integration.
	Performing the transformation $u = \sqrt{v}$ we obtain:
	\begin{equation}
	\frac{1}{8} \left(\dv{v}{\varphi}\right)^2
	+\half v^2+\frac{\kappa^2}{6\ell^2 v^2}= Cv.
	\label{eq:orb6redsq}
	\end{equation}
	Equation \eqref{eq:orb6redsq} means that $v$ is defined by an
	\emph{elliptic function} obtained through the inversion of the
	\emph{elliptic integral} \cite{WhittakerWatson1927}:
	\begin{equation} 
	\varphi -\varphi_{0}
	=\pm \frac{1}{2\sqrt{2}} \int^{v}\left(C v-\half v^2-\frac{\kappa^2}{6\ell^2 v^2}\right)^{-1/2}\ud v.
	\label{eq:orb6sol}
	\end{equation}
	This describes completely the orbits in the case $\alpha=6$.
	
	The numerical evaluation of the integral \eqref{eq:orb6sol} is
	particularly stiff.
	In addition, since we are interested in circular orbits it is
	even more difficult, being circular orbits degenerate
	solutions where the dependence on $\varphi$ is suppressed.
	For this reason to produce the orbit in Figure \ref{fig:orbits}
	we resorted to the numerical integration of the following second-order initial value problem for $v$:
	\begin{equation}
	\left\{
	\begin{aligned}
	&\half \dv[2]{v}{\varphi}-\frac{1}{4v} \left(\dv{v}{\varphi}\right)^2
	+v=\frac{\kappa^2}{\ell^{2}v^3},
	\\
	&v|_{\varphi=\varphi_{0}}=\frac{1}{r_{0}^{2}}, 
	\quad 
	\left.\dv{v}{\varphi}\right|_{\varphi=\varphi_{0}}=-\frac{2\dot{r}_{0}}{r_{0}\ell}.
	\end{aligned}
	\right.
	\label{eq:orb6vsecond}
	\end{equation}
	Equation \eqref{eq:orb6vsecond} is obtained from \eqref{eq:orbeq6} 
	by performing the transformation $u=\sqrt{v}$.
	We then used the \texttt{odeint} integrator from \texttt{scipy} \cite{scipy}
	with a regular mesh of $N=2^{25}$ points in the interval 
	$[0,20\pi]$ with $\kappa=1$.
\end{description}

\bibliographystyle{plain}
\bibliography{bibliography.bib}

\end{document}